\documentclass[11pt,onecolumn,draftcls]{IEEEtran}

\usepackage{graphicx}
\usepackage{amsthm,amsfonts,amssymb,amsmath}
\usepackage{algorithm}
\usepackage{algpseudocode}
\usepackage{color}
\usepackage{bm,bbm}
\newtheorem{prop}{Proposition}

\newtheorem{thm}{Theorem}
\newtheorem{rmk}{Remark}
\newtheorem{lemma}{Lemma}
\newtheorem{defi}{Definition}
\newtheorem{exe}{Example}
\newtheorem{cor}{Corollary}

\DeclareMathOperator{\SNR}{\mathsf{SNR}}
\providecommand{\abs}[1]{\ensuremath{\left\lvert #1 \right\rvert}}

\newcommand{\C}{\mathbb{C}}

\newcommand{\vol}{\mbox{vol}}
\newcommand{\xx}{\mathbf{x}}
\newcommand{\yy}{\mathbf{y}}
\newcommand{\zz}{\mathbf{w}}
\newcommand{\uu}{\mathbf{u}}
\newcommand{\ww}{\mathbf{w}}
\newcommand{\HH}{\mathbf{H}}
\newcommand{\EE}{\mathbf{E}}
\newcommand{\UU}{\mathbf{U}}
\newcommand{\BB}{\mathbf{B}}
\newcommand{\YY}{\mathbf{Y}}
\newcommand{\ZZ}{\mathbf{W}}
\newcommand{\XX}{\mathbf{X}}
\newcommand{\FF}{\mathbf{F}}
\newcommand{\RR}{\mathbf{R}}
\newcommand{\nozero}{\backslash\left\{0\right\}}

\begin{document}

\title{Universal Lattice Codes for MIMO Channels}

\author{Antonio Campello, Cong Ling and Jean-Claude Belfiore\thanks{

This work was supported in part by FP7 project PHYLAWS (EU FP7-ICT 317562). Partial results of this work were presented at the International Symposium on Information Theory, 2016 \cite{OurISIT} and at the Information Theory Workshop, 2016 \cite{OurITW} .

A. Campello is currently with the Department of Electrical and Electronic Engineering, Imperial College London (e mail: a.campello@imperial.ac.uk). His work was partially done in the Department of Communications and Electronics, Telecom ParisTech, France, funded by FAPESP under grant 2014/20602-8. 

C.  Ling  is  with  the  Department  of  Electrical  and  Electronic  Engineering, Imperial College London, London SW7 2AZ, U.K. (e-mail: cling@ieee.org). 

J.-C.   Belfiore    is    with the  Mathematical and Algorithmic Sciences Lab, France Research Center,
Huawei Technologies   (e-mail: belfiore@telecom-paristech.fr).

}}

\maketitle
\begin{abstract}
We propose a coding scheme that achieves the capacity of the compound MIMO channel with algebraic lattices. Our lattice construction exploits the multiplicative structure of number fields and their group of units to absorb ill-conditioned channel realizations. To shape the constellation, a discrete Gaussian distribution over the lattice points is applied. These techniques, along with algebraic properties of the proposed lattices, are then used to construct a sub-optimal de-coupled coding schemes that achieves a gap to compound capacity by decoding in a lattice that does not depend of the channel realization. The gap is characterized in terms of algebraic invariants of the codes, and shown to be significantly smaller than previous schemes in the literature. We also exhibit alternative algebraic constructions that achieve the capacity of ergodic fading channels.
\end{abstract}
\section{Introduction}
We consider a MIMO channel with $n$ receive antennas and $m$ transmit antennas, described by the equation
\begin{equation}
\mathbf{y} = \mathbf{H}\mathbf{x}+\mathbf{w},
\label{eq:oneUse}
\end{equation}
where $\mathbf{H}\in \mathbb{C}^{n\times m}$ is the channel matrix, and $\mathbf{x} \in \mathbb{C}^m$ is the input subject to the power constraint $E[\mathbf{x}^{\dagger}\mathbf{x}]\leq m P$. The noise entries of $\mathbf{w}$ are circularly symmetric complex Gaussian with zero-mean and variance $\sigma_w^2$. We assume that the receiver has complete knowledge of $\mathbf{H}$, which is fixed during a whole transmission block. Consider the set $\mathbb{H}$ of all channel matrices with fixed (white-input) capacity $C$:
\begin{equation}\begin{split}
\mathbb{H} = \left\{\mathbf{H} \in \mathbb{C}^{n\times m}: \right. \log \det\left(\mathbf{I}+\SNR\mathbf{H}^{\dagger}\mathbf{H}\right) =\left.C\right\}.
\end{split}
\label{eq:channelSpace}
\end{equation}
This can be viewed as a compound channel with capacity $C$. The compound channel model (\ref{eq:channelSpace}) arises in several important scenarios in communications, such as the outage formulation in the open-loop mode and broadcast \cite{Ordentlich15}.

We say that a sequence of codes is \textit{universal} or \textit{achieves the capacity of the compound model} for the MIMO channel if, for all $H \in \mathbb{H}$ the error probability vanishes, as the blocklength $T \to \infty$, with rate $R$ arbitrarily close to $C$. In this work we construct universal algebraic lattice codes for the MIMO channel.

\subsection{Discussion and Organization of the Work}
Initial research on lattice codes for fading channels was concerned with the diversity order and minimum product distance \cite{ViterboOggier}. Recently, \cite{Ordentlich15} and \cite{DBLP:journals/corr/LuzziV15} have built universal codes that achieve a constant gap to the capacity in the MIMO channels. The work \cite{HeshamElGamal04} showed the existence of lattice codes achieving the optimal diversity-multiplexing tradeoff of MIMO channels. Further, \cite{YonaFeder14,Punekar15} examined the diversity order of lattice codes, in the infinite-constellation setting, for MIMO and block-fading channels, respectively. The Poltyrev limit and dispersion on ergodic fading channels were studied in \cite{DBLP:journals/corr/Vituri13}. 

The notion of compound MIMO channels dates back at least to \cite{RootVarayia1968}. The authors provide a technique to convert traditional random codes into universal ones, under the  assumption that the norm of $\mathbf{H}$ is bounded (see also \cite{ShiWesel07}). However the methods used are unstructured and do not provide any insight on the development of more practical universal codes.

In this paper, we make a step towards this goal by proving that lattice codes from generalized versions of construction A achieve the capacity of the compound MIMO channel over the entire space of channels \eqref{eq:channelSpace}. This represents an advantage of ideal lattices over the classic Gaussian random codes \cite{RootVarayia1968,ShiWesel07} and standard Construction A \cite{HeshamElGamal04}. This is made possible by exploiting the multiplicative structure of number fields and their group of units. Similar techniques had previously demonstrated good simulation performance in the fast fading channel with efficient decoding \cite{GhayaViterboJC} and optimal asymptotic diversity-versus-multiplexing tradeoff for $2 \times 2$ MIMO channels \cite{OthmanLuzziBelfiore10}.


Our contributions are listed as follows.
\begin{itemize}
\item We show that lattices constructed via algebraic number theory universally achieve the capacity of the MIMO channel. The construction is divided in two steps: first we define good infinite constellations for the MIMO channel and then we show how to shape the constellation with the lattice Gaussian distribution. Our approach shows that constellations built form number-theoretic tools can achieve not only asymptotic parameters (such as the DMT), but also the capacity of the compound channel.

\item In \cite{HeshamElGamal04}, it is shown that linear filtering equalization (multiplication by the MMSE-GDFE matrix), followed by lattice decoding in an ``equalized'' lattice achieves the optimal DMT of MIMO channels. Through the lattice Gaussian distribution \cite{LingBelfiore}, we provide an interpretation for the MMSE-GDFE matrix: if the sent point is sampled from a lattice Gaussian distribution, then MMSE-GDEF followed by lattice decoding is \textit{equivalent to MAP decoding}.
\item In Section \ref{sec:decoupled} we provide a more efficient sub-optimal scheme that achieves the compound capacity up to a constant gap. In this scheme the decoder first handles the fading matrix $\mathbf{H}$ and then performs lattice decoding in the coding lattice itself, independently of $\mathbf{H}$. This notion of efficiency follows  \cite{Ordentlich15}, where the authors consider integer-forcing achieving a gap to capacity in the compound MIMO channel. Besides reducing the gap of \cite{Ordentlich15}, we provide a characterization of the gap to capacity in terms of algebraic parameters. For instance, it is shown that algebras/number-fields whose unit-lattice have small volume minimize the gap.
\item In Section \ref{sec:Ergodic} we show how an adaptation of the previous methods can be used to achieve the capacity of the ergodic fading channel. Leveraging from algebraic techniques, our construction improve the two previous proposed lattice codes: It improves on the probability of error of \cite{DBLP:journals/corr/Vituri13} and completely eliminates the gap to capacity of \cite{DBLP:journals/corr/LuzziV15} (note, however, that our scheme currently requires statistical knowledge of the channel, which is also the case of \cite{DBLP:journals/corr/Vituri13} but not of \cite{DBLP:journals/corr/LuzziV15}).
\end{itemize}
A technical novelty of the present work is the error probability analysis of lattice Gaussian distribution via properties of \textit{sub-Gaussian} random variables. This greatly simplifies the analysis of standard lattice Gaussian codes \cite{LingBelfiore} and provides achievable results under weaker assumptions on the channel.

As a final remark, the authors of \cite{DBLP:journals/corr/LuzziV15} pose the existence of an analogue of the Minkowski-Hlawka theorem suitable for fading channels as an open problem. The results in sections \ref{sec:consA} provide such an analogue for block-fading channels.

\section{Notation and Initial Definitions}
\label{sec:model}
The channel equation \eqref{eq:oneUse} after $T$ uses can be written in matrix form:
\begin{equation}\label{eq:block-fading}
\underbrace{\mathbf{Y}}_{n\times T}=\underbrace{\mathbf{H}}_{n\times m}\underbrace{\mathbf{X}}_{m \times T}+\underbrace{\mathbf{W}}_{n\times T}
\end{equation}
where $T$ is the coherence time (codeword length). Vectorizing this equation, we obtain
\begin{equation}\label{eq:block-fading-vectorized}
\underbrace{\mathbf{y}}_{\times 1}=\underbrace{\mathcal{H}}_{nT\times mT}\underbrace{\mathbf{x}}_{mT \times 1}+\underbrace{\mathbf{w}}_{nT\times 1}
\end{equation}
where $\mathcal{H}=\mathbf{I}_T\otimes \mathbf{H}$.
We denote the Frobenius norm of $\HH$ by $\left\| \mathbf{H} \right\| = \sqrt{\text{trace}(\mathbf{H}^{\dagger}\mathbf{H})}$. The pseudo-inverse of $\HH$ will be denoted by $\HH^* = (\HH^\dagger \HH)^{-1} \HH^\dagger$.

\subsection{Complex Lattices}
A (complex) lattice $\Lambda$ is a discrete additive subgroup of $\mathbb{C}^m$. We will only consider full rank lattices, i.e., when $\Lambda$ is not contained in any proper subspace of $\mathbb{C}^m$. In this case, $\mathbb{C}$ is a free abelian group of rank $2m$ and there exists a full rank  matrix $\mathbf{B}_c \in \mathbb{C}^{m\times 2m}$ such that
\begin{equation}
\Lambda = \mathcal{L}(\mathbf{B}_c) = \left\{ \mathbf{B}_c \xx: \xx \in \mathbb{Z}^{2m} \right\}.
\label{eq:complexLattice}
\end{equation}
A complex lattice has an equivalent real lattice generated by the matrix obtained by stacking real and imaginary parts of matrix $\mathbf{B}_c$:
$$\mathbf{B}_r = \left(\begin{array}{c} \Re(\mathbf{B}_c) \\ \Im(\mathbf{B}_c)\end{array} \right) \in \mathbb{R}^{2m \times 2m}.$$
\begin{exe}
The simplest example of complex lattices are $\mathbb{Z}[i]$-lattices, where $\mathbb{Z}[i] = \left\{ a + bi: a, b \in \mathbb{Z}\right\}$ is the set of Gaussian integers. A $\mathbb{Z}[i]$-lattice has the form
$$\Lambda = \left\{ \mathbf{B} \xx: \xx \in \mathbb{Z}[i]^{m} \right\},$$
 where $\mathbf{B} = \Re(\mathbf{B})+i \Im(\mathbf{B})\in \mathbb{C}^{m\times m}$. In the notation of \eqref{eq:complexLattice}, $\Lambda$ is generated as a free abelian group by a matrix $\mathbf{B}_c$ whose first $m$ columns are $\Re(\mathbf{B})+i \Im(\mathbf{B})$ and last $m$ columns are  $-\Im(\mathbf{B})+i \Re(\mathbf{B})$. Its equivalent real lattice has generator matrix 

$$\overline{\mathbf{B}}=\left(\begin{array}{cc}  \Re(\mathbf{B}) &  -\Im(\mathbf{B}) \\  \Im(\mathbf{B}) &  \Re(\mathbf{B})\end{array}\right).$$
 \end{exe}
 In general, operations with complex lattices can be done by operating their real equivalent. We define the dual of a complex lattice as;
 \begin{equation}
 \Lambda^* = \left\{ \yy \in \mathbb{C}^m : \Re(\yy^\dagger \xx)  \in \mathbb{Z}, \,\, \forall \xx \in \Lambda \right\}.
 \end{equation}
 
Identifying $\mathbb{C}^{m}$ with $\mathbb{R}^{2m}$ through the mapping $\psi(\xx) = \left( \Re(\xx),\Im(\xx) \right)$ this is an extension the notion of dual to the complex space. In particular, the real equivalent of $\Lambda^*$ coincides with the dual of $\psi(\Lambda)$.

The volume of a complex lattice $\Lambda$ is denoted by $V(\Lambda)$ and defined as the volume of its equivalent real lattice, i.e. $V(\mathcal{L}(\BB_c)) = |\det \BB_r|$ . For a $\mathbb{Z}[i]$-lattice, $V(\Lambda) = |\det \mathbf{B}|^2$.
The \textit{Voronoi region} of a point $\xx\in \Lambda$ is defined as 
$$\mathcal{V}_{\Lambda}(\xx) \triangleq \left\{ \mathbf{y} \in \mathbb{C}^n : \left\| \xx - \mathbf{y} \right\| \leq \left\| \bar{\xx}-\mathbf{y}\right\| \mbox{ for all } \bar{\xx} \in \Lambda \right\}.$$ 
Throughout the text,  we write $\mathcal{V}_{\Lambda}=\mathcal{V}_{\Lambda}(\textbf{0})$. The volume of $\Lambda$ is equal to the volume of its Voronoi region, viewed as a region in $\mathbb{R}^{2n}$. Given $\sigma > 0$, the \textit{volume-to-noise ratio} (VNR) of a lattice is defined as $\gamma_{\Lambda}(\sigma) = V(\Lambda)^{1/n}/\sigma^2$.

For applications in coding for the MIMO channel it is useful to represent the vectors of $\Lambda$ in matrix form; this can be done in a straightforward way. If $\Lambda \subset \mathbb{C}^{mT}$ is a full-rank lattice, the matrix form representation of a point $\xx=(x_1,\ldots,x_{mT}) \in \Lambda$ is
$$\mathbf{X}= \left( \begin{array}{cccc} x_1 & x_2 & \cdots & x_T \\ x_{T+1} & x_{T+2} & \cdots & x_{2T} \\
 x_{2T+1} & x_{2T+2} & \cdots & x_{3T} \\ \vdots & \vdots & \ddots & \vdots \\  x_{(m-1)T+1} & x_{(m-1)T+2} & \cdots & x_{mT}\end{array} \right).$$

\subsection{The Lattice Gaussian Distribution}

For~$\sigma>0$ and $\mathbf{c} \in \C^m$,
the continuous Gaussian distribution of covariance matrix $\Sigma$ centered at ${\bf c}$ is given by
\begin{equation*}
 f_{\sqrt{\Sigma},{\bf c}}(\mathbf{x})=\frac{1}{{\pi}^m\det(\Sigma)}e^{- (\mathbf{x}-{\bf c})^{\dagger}\Sigma^{-1}(\mathbf{x}-{\bf c})},
\end{equation*}
for $\mathbf{x} \in\C^m$. For convenience, we write $f_{\sqrt{\Sigma}}(\mathbf{x})=f_{\sqrt{\Sigma},{\bf
0}}(\mathbf{x})$. Consider the $\Lambda$-periodic function 
\begin{equation}\label{Guass-function-lattice}
  f_{\sqrt{\Sigma},\Lambda}(\mathbf{x})=\sum_{{{\bm \lambda}}\in \Lambda}
{f_{\sqrt{\Sigma},{{\bm \lambda}}}(\mathbf{x})}=\frac{1}{{\pi}^m\det(\Sigma)}
\sum_{{\bm \lambda} \in \Lambda} e^{- (\mathbf{x}-{\bf \lambda})^{\dagger}\Sigma^{-1}(\mathbf{x}-{\bf \lambda})},
\end{equation}
for all $\mathbf{x} \in\C^m$. Observe that $f_{\sigma,\Lambda}$ restricted to a fundamental region $\mathcal{R}(\Lambda)$ is a probability density. We define the \emph{discrete Gaussian distribution} over $\Lambda$ centered at $\mathbf{c} \in \C^n$ as the following discrete distribution taking values in ${\bm \lambda} \in \Lambda$:
\[
D_{\Lambda,\sqrt{\Sigma},\mathbf{c}}({\bm \lambda})=\frac{f_{\sqrt{\Sigma},\mathbf{c}}(\mathbf{{\bm \lambda}})}{f_{\sqrt{\Sigma},\mathbf{c}}(\Lambda)}, \quad \forall {\bm \lambda} \in \Lambda,
\]
where $f_{\sqrt{\Sigma},\mathbf{c}}(\Lambda) \triangleq \sum_{{\bm \lambda} \in
\Lambda} f_{\sqrt{\Sigma},\mathbf{c}}(\mathbf{{\bm \lambda}})=f_{\sqrt{\Sigma},\Lambda}(\mathbf{c})$. Again for convenience, we write $D_{\Lambda,\sqrt{\Sigma}}=D_{\Lambda,\sqrt{\Sigma},\mathbf{0}}$. 


The flatness factor of a lattice~$\Lambda$ quantifies the maximum variation of~$f_{\sqrt{\Sigma},\Lambda}(\mathbf{x})$ for~$\mathbf{x} \in \C^m$.

\begin{defi} [Flatness factor]
For a lattice~$\Lambda$ and for covariance matrix~$\sqrt{\Sigma}$, the flatness factor
is defined by:
\begin{equation*}
\epsilon_{\Lambda}(\sqrt{\Sigma})  \triangleq \max_{\mathbf{x} \in
\mathcal{R}(\Lambda)}\abs{
V(\Lambda)f_{\sqrt{\Sigma},\Lambda}(\mathbf{x})-1}.
\end{equation*}
\end{defi}

In words, $\frac{f_{\sqrt{\Sigma},\Lambda}(\mathbf{x})}{1/V(\Lambda)}$, the ratio between $f_{\sqrt{\Sigma},\Lambda}(\mathbf{x})$ and the uniform distribution over~$\mathcal{R}(\Lambda)$, is within the range $[1-\epsilon_{\Lambda}(\sqrt{\Sigma}), 1+\epsilon_{\Lambda}(\sqrt{\Sigma})]$.

\begin{prop} [Expression of $\epsilon_{\Lambda}(\sqrt{\Sigma})$] \label{expression_epsilon}
We have:
\begin{eqnarray*}
  \epsilon_{\Lambda}(\sqrt{\Sigma}) &=& \frac{V(\Lambda)}{{\pi}^m\det(\Sigma)}
\sum_{{\bm \lambda} \in \Lambda} e^{- {\bm \lambda}^{\dagger}\Sigma^{-1}{\bm \lambda}} \\
   &=& \sum_{{\bf \bm{\lambda}^*}\in \Lambda^*}e^{-\pi^2 {\bm \lambda}^{\dagger}\Sigma^{-1}{\bm \lambda}}-1
\end{eqnarray*}
In particular, if $\Sigma=\sigma^2\mathbf{I}$, then
\begin{eqnarray*}
  \epsilon_{\Lambda}({\sigma}) &=& \left(\frac{\gamma_{\Lambda}(\sigma)}{{\pi}}\right)^{m}{
\Theta_{\Lambda}\left({\frac{1}{\pi\sigma^2}}\right)}-1 \\
   &=& \Theta_{\Lambda^*}\left({{\pi\sigma^2}}\right)-1
\end{eqnarray*}
where $\gamma_{\Lambda}(\sigma) = \frac{
V(\Lambda)^{1/m}}{\sigma^2}$ is the volume-to-noise ratio (VNR), and $\Theta_{\Lambda}(\tau)=\sum_{\bm{\lambda} \in \Lambda} e^{-\pi\tau\|\bm{\lambda}\|^2}$ is the theta series.
\end{prop}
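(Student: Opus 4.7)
The plan is to reduce everything to the Poisson summation identity applied to the Gaussian kernel. First, I would note that $f_{\sqrt{\Sigma}}$ is a Schwartz function and compute its Fourier transform with respect to the real inner product $\langle \mathbf{x},\mathbf{y}\rangle_{\mathbb{R}} = \Re(\mathbf{y}^{\dagger}\mathbf{x})$ that underlies the notion of complex dual used in the paper. Since the quadratic form $\mathbf{x}^{\dagger}\Sigma^{-1}\mathbf{x}$ is a positive-definite real quadratic form on $\mathbb{R}^{2m}$ under the identification $\psi$, a standard change of variables yields $\widehat{f_{\sqrt{\Sigma}}}(\mathbf{y})=e^{-\pi^{2}\mathbf{y}^{\dagger}\Sigma\mathbf{y}}$ with the Fourier normalization matched to the duality $\Lambda^{*}$ given earlier in the excerpt.

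Next I apply Poisson summation, using the fact that the real equivalent of $\Lambda^{*}$ is the dual of $\psi(\Lambda)$ and that $V(\Lambda)$ is by definition the covolume of the real equivalent lattice. This yields
\begin{equation*}
V(\Lambda)\,f_{\sqrt{\Sigma},\Lambda}(\mathbf{x})
=\sum_{\bm{\lambda}^{*}\in\Lambda^{*}}
e^{-\pi^{2}\bm{\lambda}^{*\dagger}\Sigma\bm{\lambda}^{*}}\,
e^{2\pi i\,\Re(\bm{\lambda}^{*\dagger}\mathbf{x})},
\end{equation*}
and isolating the $\bm{\lambda}^{*}=\mathbf{0}$ term gives
\begin{equation*}
V(\Lambda)\,f_{\sqrt{\Sigma},\Lambda}(\mathbf{x})-1
=\sum_{\bm{\lambda}^{*}\in\Lambda^{*}\setminus\{0\}}
e^{-\pi^{2}\bm{\lambda}^{*\dagger}\Sigma\bm{\lambda}^{*}}\,
e^{2\pi i\,\Re(\bm{\lambda}^{*\dagger}\mathbf{x})}.
\end{equation*}

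To conclude the first and second equalities, I would show that the maximum of $|V(\Lambda)f_{\sqrt{\Sigma},\Lambda}(\mathbf{x})-1|$ over $\mathbf{x}\in\mathcal{R}(\Lambda)$ is attained at $\mathbf{x}=\mathbf{0}$. Pairing each $\bm{\lambda}^{*}$ with $-\bm{\lambda}^{*}$ in the absolutely convergent dual sum turns the phase factors into cosines, so the triangle inequality becomes an equality precisely when every cosine equals $1$; this occurs at $\mathbf{x}=\mathbf{0}$, and the sum there is positive, hence equal to its own absolute value. Evaluating at $\mathbf{x}=\mathbf{0}$ on both sides of the Poisson identity then produces the two stated forms: the primal side gives $V(\Lambda)f_{\sqrt{\Sigma},\Lambda}(\mathbf{0})-1$, which unfolds (via the definition of $f_{\sqrt{\Sigma},\Lambda}$) to the first displayed expression, while the dual side is the second one.

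Finally, specializing $\Sigma=\sigma^{2}\mathbf{I}$ is purely mechanical: $\det(\Sigma)=\sigma^{2m}$ and $V(\Lambda)/(\pi\sigma^{2})^{m}=(\gamma_{\Lambda}(\sigma)/\pi)^{m}$ rewrites the primal sum as $(\gamma_{\Lambda}(\sigma)/\pi)^{m}\Theta_{\Lambda}(1/(\pi\sigma^{2}))$, while the dual exponent becomes $-\pi^{2}\sigma^{2}\|\bm{\lambda}^{*}\|^{2}=-\pi\cdot(\pi\sigma^{2})\|\bm{\lambda}^{*}\|^{2}$, matching $\Theta_{\Lambda^{*}}(\pi\sigma^{2})$. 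The main obstacle is the maximization step: one has to be careful that the Poisson-summed dual series converges absolutely (which the Gaussian decay of $e^{-\pi^{2}\bm{\lambda}^{*\dagger}\Sigma\bm{\lambda}^{*}}$ guarantees) and that the cosine-pairing argument is applied to the real inner product compatible with the definition of $\Lambda^{*}$ used in this paper, rather than the more common real-lattice convention.
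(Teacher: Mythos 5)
Your proof is correct and is essentially the standard argument: the paper states this proposition without a proof of its own (it is imported from the flatness-factor literature), and there the derivation is exactly your Poisson-summation route — Fourier transform of the Gaussian with respect to the pairing $\Re(\mathbf{y}^{\dagger}\mathbf{x})$ that defines $\Lambda^*$, Poisson summation over $\psi(\Lambda)$, and the $\pm{\bm \lambda}^*$ cosine-pairing argument showing the maximum of $\abs{V(\Lambda)f_{\sqrt{\Sigma},\Lambda}(\mathbf{x})-1}$ is attained at $\mathbf{x}=\mathbf{0}$. One point to make explicit: your evaluation at $\mathbf{x}=\mathbf{0}$ yields $\frac{V(\Lambda)}{\pi^m\det(\Sigma)}\sum_{{\bm \lambda}\in\Lambda}e^{-{\bm \lambda}^{\dagger}\Sigma^{-1}{\bm \lambda}}-1$ on the primal side and $\sum_{{\bm \lambda}^*\in\Lambda^*}e^{-\pi^2{\bm \lambda}^{*\dagger}\Sigma{\bm \lambda}^*}-1$ on the dual side; the proposition as printed drops the $-1$ in its first line and writes $\Sigma^{-1}$ (and ${\bm \lambda}$) in the dual exponent, which are typographical slips — your expressions are the ones consistent with the stated $\Sigma=\sigma^2\mathbf{I}$ specialization, so you should not adjust your derivation to match the printed first line literally.
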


%

The significance of a small flatness factor is two-fold. Firstly, it assures the ``folded" distribution $f_{\sqrt{\Sigma},\Lambda}(\mathbf{x})$ is flat; secondly, it implies the discrete Gaussian distribution $D_{\Lambda,\sqrt{\Sigma},\mathbf{c}}$ is ``smooth". We refer the reader to \cite{LLBS_12,LB_13} for more details.

The following lemma is a generalization of Regev's and is particularly useful for communications and security \cite{LuzziLV16}.

\begin{lemma}\label{lem-Regev-Generalized}
Let $\mathbf{x}_1$ be sampled from discrete Gaussian distribution $D_{\Lambda+\mathbf{c},\sqrt{\Sigma_1}}$ and $\mathbf{x}_2$ sampled from continuous Gaussian distribution $f_{\sqrt{\Sigma_2}}$. Let $\Sigma_0 = \Sigma_1 + \Sigma_2$ and let $\Sigma_3^{-1} = \Sigma_1^{-1} +\Sigma_2^{-1}$. If $\epsilon_{\Lambda}(\sqrt{\Sigma_3}) \leq \varepsilon \leq \frac{1}{2}$, then the distribution $g$ of $\mathbf{x}=\mathbf{x}_1+\mathbf{x}_2$ is close to $f_{\sqrt{\Sigma_0}}$:
\[
g(\mathbf{x}) \in f_{\sqrt{\Sigma_0}}(\mathbf{x})\left[ {1-4\varepsilon}, 1+4\varepsilon \right].
\]
\end{lemma}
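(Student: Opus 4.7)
The plan is to write the density $g$ of $\mathbf{x}=\mathbf{x}_1+\mathbf{x}_2$ as a convolution
\[
g(\mathbf{x}) \;=\; \frac{1}{f_{\sqrt{\Sigma_1}}(\Lambda+\mathbf{c})}\sum_{\bm\lambda\in \Lambda+\mathbf{c}} f_{\sqrt{\Sigma_1}}(\bm\lambda)\, f_{\sqrt{\Sigma_2}}(\mathbf{x}-\bm\lambda),
\]
and then to reduce it, term by term, via a Gaussian product identity that separates the $\mathbf{x}$-dependence from the lattice sum. Concretely, I would complete the square in $\bm\lambda$ inside the exponent $\bm\lambda^\dagger \Sigma_1^{-1}\bm\lambda + (\mathbf{x}-\bm\lambda)^\dagger\Sigma_2^{-1}(\mathbf{x}-\bm\lambda)$; using $\Sigma_3^{-1}=\Sigma_1^{-1}+\Sigma_2^{-1}$, the quadratic piece becomes $(\bm\lambda-\mathbf{c}_\mathbf{x})^\dagger\Sigma_3^{-1}(\bm\lambda-\mathbf{c}_\mathbf{x})$ with center $\mathbf{c}_\mathbf{x}=\Sigma_3\Sigma_2^{-1}\mathbf{x}$, while the residual term in $\mathbf{x}$ reduces to $\mathbf{x}^\dagger\Sigma_0^{-1}\mathbf{x}$ via the Woodbury-type identity $\Sigma_2^{-1}-\Sigma_2^{-1}\Sigma_3\Sigma_2^{-1}=\Sigma_0^{-1}$. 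Matching determinants through $\det\Sigma_1\det\Sigma_2=\det\Sigma_3\det\Sigma_0$ (a consequence of the same relation) then yields the clean factorization
\[
f_{\sqrt{\Sigma_1}}(\bm\lambda)\, f_{\sqrt{\Sigma_2}}(\mathbf{x}-\bm\lambda) \;=\; f_{\sqrt{\Sigma_0}}(\mathbf{x})\, f_{\sqrt{\Sigma_3},\mathbf{c}_\mathbf{x}}(\bm\lambda).
\]

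Substituting back, the $\mathbf{x}$-dependent Gaussian pulls out of the sum and I obtain
\[
g(\mathbf{x}) \;=\; f_{\sqrt{\Sigma_0}}(\mathbf{x})\cdot\frac{f_{\sqrt{\Sigma_3},\mathbf{c}_\mathbf{x}}(\Lambda+\mathbf{c})}{f_{\sqrt{\Sigma_1}}(\Lambda+\mathbf{c})}.
\]
Both lattice sums in this ratio are exactly of the type controlled by the flatness factor. The numerator, being a shifted $\Sigma_3$-Gaussian summed over the coset $\Lambda+\mathbf{c}$, satisfies $V(\Lambda)f_{\sqrt{\Sigma_3},\mathbf{c}_\mathbf{x}}(\Lambda+\mathbf{c})\in[1-\varepsilon,1+\varepsilon]$ directly from $\epsilon_\Lambda(\sqrt{\Sigma_3})\le\varepsilon$, since by $\Lambda$-periodicity the bound in the definition of $\epsilon_\Lambda$ is independent of the center. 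For the denominator I would invoke monotonicity of the flatness factor in the covariance: from $\Sigma_3^{-1}\succeq \Sigma_1^{-1}$ one has $\Sigma_1\succeq \Sigma_3$, and the dual-lattice expression in Proposition~\ref{expression_epsilon} shows that each summand shrinks as the covariance grows, hence $\epsilon_\Lambda(\sqrt{\Sigma_1})\le\epsilon_\Lambda(\sqrt{\Sigma_3})\le\varepsilon$ and also $V(\Lambda)f_{\sqrt{\Sigma_1}}(\Lambda+\mathbf{c})\in[1-\varepsilon,1+\varepsilon]$.

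Taking the quotient of the two bounds gives a factor in $[(1-\varepsilon)/(1+\varepsilon),\,(1+\varepsilon)/(1-\varepsilon)]$, which for $\varepsilon\le 1/2$ is contained in $[1-4\varepsilon,1+4\varepsilon]$ because $2\varepsilon/(1-\varepsilon)\le 4\varepsilon$ in that range; this closes the sandwich for $g(\mathbf{x})$. The main obstacle I anticipate is the matrix bookkeeping in the first step: with arbitrary (complex) covariances and a data-dependent center $\mathbf{c}_\mathbf{x}$, one has to check carefully that all cross-terms assemble into the Woodbury identity and that the determinantal prefactors cancel exactly, so that the clean factorization above holds with no stray constants. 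Once that identity is established, the remainder of the argument is a routine application of the flatness-factor machinery.
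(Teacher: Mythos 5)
Your proof is correct, and it is essentially the standard argument behind this lemma: the paper itself states the result without proof, citing \cite{LuzziLV16}, and the proof there proceeds exactly as you do, by writing $g$ as a convolution, completing the square to factor out $f_{\sqrt{\Sigma_0}}(\mathbf{x})$ with a shifted $\Sigma_3$-Gaussian coset sum, and controlling both the numerator and the denominator via $\epsilon_\Lambda(\sqrt{\Sigma_3})$ together with the monotonicity $\epsilon_\Lambda(\sqrt{\Sigma_1})\le\epsilon_\Lambda(\sqrt{\Sigma_3})$ from $\Sigma_1\succeq\Sigma_3$ in the dual-lattice expression of Proposition~\ref{expression_epsilon}. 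The matrix identities you flag (the Woodbury relation and $\det\Sigma_1\det\Sigma_2=\det\Sigma_0\det\Sigma_3$) do check out, so there is no gap.
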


This lemma has considerable implications. It implies, for instance, that the discrete
Gaussian distribution over a lattice is a capacity-achieving input distribution if
the flatness factor tends to zero \cite{LB_13}. 
\subsection{The Minkoswki-Hlawka Theorem}
A crucial result to prove the achievability of lattice coding schemes is the Minkoswki-Hlawka Theorem. 
Let $\psi(\xx) = \left( \Re(\xx),\Im(\xx) \right)$ be the mapping that identifies $\mathbb{C}^{m}$ with $\mathbb{R}^{2m}$. The following is an adaptation of the classical Minkowski-Hlawka theorem (see e.g. \cite[Ch. 7]{ZamirBook}).
\begin{thm} Let $m \geq 1$ be fixed and $f:\mathbb{R}^{2m} \to \mathbb{R}$ be an integrable function that vanishes outside a bounded support. For any $\varepsilon > 0$, there exists a random ensemble of full-rank lattices $\mathbb{L}_m = \left\{\Lambda \right\} \subset \mathbb{C}^{m}$ and volume $V$ such that 
\begin{equation} E\left[ \sum_{x\in\Lambda\backslash\{\mathbf{0}\}} f(\psi(\xx)) \right] \leq V^{-1} \int_{\mathbb{R}^{2m}} f(\xx) \text{d}\xx + \varepsilon,
\end{equation}
where the expectation is taken with respect to some measure in $\mathbb{L}_n$.
\label{thm:basicMH}
\end{thm}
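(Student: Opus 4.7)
The plan is to reduce the complex statement to the classical real Minkowski--Hlawka theorem in dimension $2m$ via the real-linear bijection $\psi$. The map $\psi:\mathbb{C}^m\to\mathbb{R}^{2m}$, $\psi(\xx)=(\Re(\xx),\Im(\xx))$, is a group isomorphism between $(\mathbb{C}^m,+)$ and $(\mathbb{R}^{2m},+)$ and sends discrete subgroups to discrete subgroups. Consequently, $\Lambda\mapsto\psi(\Lambda)$ is a bijection between full-rank complex lattices in $\mathbb{C}^m$ and full-rank real lattices in $\mathbb{R}^{2m}$. Moreover, by the definition given in Section~\ref{sec:model}, $V(\Lambda)$ equals the volume of $\psi(\Lambda)$ as a real lattice, so volumes match under the correspondence.

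Next I would invoke the classical Minkowski--Hlawka theorem in dimension $d=2m$ (see e.g.\ \cite[Ch.~7]{ZamirBook}): given any integrable function $f:\mathbb{R}^{2m}\to\mathbb{R}$ with bounded support, any target volume $V>0$, and any $\varepsilon>0$, there exists a random ensemble $\{\Lambda'\}$ of full-rank real lattices of volume $V$ in $\mathbb{R}^{2m}$, together with a probability measure on it, such that
\begin{equation*}
E\!\left[\sum_{\mathbf{y}\in\Lambda'\setminus\{\mathbf{0}\}} f(\mathbf{y})\right] \;\leq\; V^{-1}\!\int_{\mathbb{R}^{2m}} f(\xx)\,\mathrm{d}\xx \;+\; \varepsilon.
\end{equation*}
Defining $\mathbb{L}_m=\{\psi^{-1}(\Lambda'):\Lambda'\in\{\Lambda'\}\}$, equipped with the pushforward of this measure, yields an ensemble of full-rank complex lattices in $\mathbb{C}^m$ of volume $V$. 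For each realization, the relabeling $\mathbf{y}=\psi(\xx)$ gives $\sum_{\xx\in\Lambda\setminus\{\mathbf{0}\}} f(\psi(\xx))=\sum_{\mathbf{y}\in\psi(\Lambda)\setminus\{\mathbf{0}\}} f(\mathbf{y})$, so taking expectations transfers the inequality verbatim to the complex ensemble.

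The main (and essentially only) substantive obstacle is the underlying real Minkowski--Hlawka theorem itself, which I would cite rather than reprove. Two standard routes are available: Siegel's mean value formula on the homogeneous space $SL_{2m}(\mathbb{R})/SL_{2m}(\mathbb{Z})$, or the constructive route via a Loeliger-type ensemble obtained by scaling random linear codes over a prime field $\mathbb{F}_p$ with $p\to\infty$ (i.e.\ Construction~A applied to uniformly random parity-check matrices). The $\varepsilon$ term absorbs the discrepancy between the mean over a fundamental domain of $SL_{2m}(\mathbb{Z})$ (or equivalently between the finite-$p$ ensemble and its large-$p$ limit) and the ideal value $V^{-1}\!\int f$; the truncation of $f$ to a bounded support guarantees that the relevant sums are finite and the limit is rigorous.

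One modeling point worth flagging is that the ensemble produced this way consists of \emph{arbitrary} complex lattices in the sense of Section~\ref{sec:model} and in general carries no further algebraic structure (e.g.\ it need not be a $\mathbb{Z}[i]$-lattice or come from a number field). The applications in Section~\ref{sec:consA} require a richer ensemble tailored to the MIMO structure, but the bare existence statement above follows purely from the real theorem and the identification $\psi$.
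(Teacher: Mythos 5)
Your proposal is correct and matches the paper's treatment: the paper presents this theorem as a direct adaptation of the classical real Minkowski--Hlawka theorem (citing \cite[Ch.~7]{ZamirBook}, with Loeliger's Construction~A ensemble as a concrete realization), obtained precisely through the identification $\psi$ between $\mathbb{C}^m$ and $\mathbb{R}^{2m}$ and the fact that complex lattice volume is defined as the volume of the real equivalent. Your transfer argument and the remark that this bare ensemble carries no extra algebraic structure (which is why Section~\ref{sec:consA} builds richer ensembles) are exactly in line with the paper.
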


For real lattices, Loeliger \cite{Loeliger} proved that a possible random ensemble $\mathbb{L}_n$ satisfying Theorem \ref{thm:basicMH} can be constructed from error-correcting codes using the so-called Construction A. Ling et. al \cite{LLBS_12} generalized this theorem for certain functions whose support is not bounded; this is applicable, for instance, to calculate the average behavior of the flatness factor.

\section{The Infinite Compound Channel}
\label{sec:InfiniteModel}
\subsection{Infinite Compound Model}
Since our coding schemes is divided in two parts, shaping and coding, we first define a compound model for the infinite lattice constellation, analogous to the Poltyrev limit \cite{Loeliger} for Gaussian channels. In this model with unconstrained power, we are interested in finding the minimum VNR ratio for which it is possible to communicate with vanishing probability of error.

Let 
\begin{equation}
\mathbb{H}_{\infty} = \left\{\HH \in \mathbb{C}^{n\times m}: |\det \HH^{\dagger}\HH| = D \right\},
\label{eq:compoundModel}
\end{equation}
where $D$ is a positive constant. Consider a  lattice $\Lambda \subset \mathbb{C}^{m T}$. The error probability of a lattice scheme $\Lambda$, given $\textbf{H}$, is denoted by $P_e(\Lambda,\mathbf{H})$.

\begin{defi} We say that a sequence of lattices $\Lambda_T$ of increasing dimension $mT$ is universally good for the MIMO channel if for any VNR $\gamma_{\Lambda_T}(\sigma) > \frac{\pi e}{D^{1/m}}$ and all $\mathbf{H} \in \mathbb{H}_\infty$, $ P_e(\Lambda_T,\mathbf{H}) \to 0$.
\label{def:goodLattices}
\end{defi}
Notice that the condition on the VNR is equivalent to $\gamma_{(\mathbf{I}_T \otimes\mathbf{H})\Lambda_T}(\sigma) > \pi e$.
We stress that this definition requires a sequence of lattices to be simultaneously good for \textit{all} channels in the set. For a fixed $\mathbf{H}$, this requirement is not different from the original Gaussian channel coding problem. However, as shown in the end of this section traditional codes \cite{Loeliger} fail to achieve the infinite compound capacity of $\mathbb{H}_{\infty}$ under lattice decoding.

Another way of interpreting Definition \eqref{def:goodLattices} is that a universally good sequence of lattices achieves vanishing probability of error for any channel realization with \textit{normalized-log-density}
\begin{equation}\delta(\Lambda_T) = \frac{1}{mT}\log V(\Lambda_T) \to \log\left(\frac{\pi e}{D^{1/m}}\right) \mbox{ as } T \to \infty
\end{equation}

\subsection{General Results}
Suppose that $\HH \in \mathbb{H}_{\infty}$ (Eq. \eqref{eq:compoundModel}), and let $\tilde{\mathbb{H}}_\infty = \mathbb{H}_\infty/D^{1/2m}$ be the normalized ensemble of channel matrices. To achieve the infinite compound capacity, we first show how to ``compactify'' $\mathbb{H}_\infty$. 

\begin{defi} Let $\mathbb{L}$ be an ensemble of matrix-form lattices in dimension $m \times T$. We say that $\mathbb{L}$ compacifies $\tilde{\mathbb{H}}_{\infty}$ if for any $\tilde{\HH} \in \tilde{\mathbb{H}}$ there exists matrices ${E}_{\tilde{\HH}} \in \mathbb{C}^{n\times m}$, $U_{\tilde{\HH}} \in  \mathbb{C}^{m\times m}$ such that
\begin{enumerate}
\item[(i)] $\left\| \EE_{\tilde{\HH}}\right\| \leq \alpha$ for a universal constant $\alpha$ not depending of $\tilde{\HH}$.
\item[(ii)] $\mathbb{L}$ is invariant under multiplication by $U_{\tilde{\HH}}$.
\end{enumerate}
\label{def:compacify}
\end{defi}

Compactification handles ill-conditioned channel realizations, by bounding the norm of the ``error matrix'' $\EE_{\tilde{\HH}}$. We have the following result.

\begin{thm} Suppose that $\mathbb{L}_T \subset{C}^{mT}$ is a sequence of Minkowski-Hlawka ensembles of lattices with volume $V>0$  that compactifies $\tilde{\mathbb{H}}_{\infty}$. There exists a sequence of lattices $\Lambda_T \subset \mathbb{L}_T$ universally good for the MIMO channel.
\label{thm:infiniteLimit}
\end{thm}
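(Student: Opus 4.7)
The plan is to use the compactification property as a change of variables on the lattice ensemble, reducing the analysis of an arbitrary $\tilde{\mathbf{H}}\in\tilde{\mathbb{H}}_\infty$ to one with bounded-norm effective channel, and then apply a Loeliger / Minkowski--Hlawka averaging bound to extract a universally good deterministic sequence $\Lambda_T$.

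First, I would fix $\tilde{\mathbf{H}}\in\tilde{\mathbb{H}}_\infty$ and use Definition \ref{def:compacify} to factor $\tilde{\mathbf{H}}=\mathbf{E}_{\tilde{\mathbf{H}}}\mathbf{U}_{\tilde{\mathbf{H}}}$, where $\|\mathbf{E}_{\tilde{\mathbf{H}}}\|\leq\alpha$ and the law of $\mathbb{L}_T$ is $\mathbf{U}_{\tilde{\mathbf{H}}}$-invariant. Bound the nearest-lattice-point error probability using an ambiguity-region approach: with $B$ a Euclidean ball of radius $\sigma\sqrt{nT(1+\delta)}$,
\begin{equation*}
P_e(\Lambda,\tilde{\mathbf{H}})\leq \Pr(\mathbf{w}\notin B)+\sum_{\boldsymbol{\lambda}\in\Lambda\setminus\{0\}}\mathbf{1}\{(\mathbf{I}_T\otimes \mathbf{E}_{\tilde{\mathbf{H}}})(\mathbf{I}_T\otimes \mathbf{U}_{\tilde{\mathbf{H}}})\boldsymbol{\lambda}\in 2B\}.
\end{equation*}
Taking the expectation over $\Lambda\sim\mathbb{L}_T$, the $\mathbf{U}_{\tilde{\mathbf{H}}}$-invariance of the ensemble lets me substitute $\boldsymbol{\mu}=(\mathbf{I}_T\otimes \mathbf{U}_{\tilde{\mathbf{H}}})\boldsymbol{\lambda}$ without changing the distribution, and Theorem \ref{thm:basicMH} gives
\begin{equation*}
\mathrm{E}_\Lambda[P_e(\Lambda,\tilde{\mathbf{H}})] \leq \Pr(\mathbf{w}\notin B)+V^{-1}|\det \mathbf{E}_{\tilde{\mathbf{H}}}|^{-2T}\vol(2B)+\varepsilon.
\end{equation*}
Since $|\det\tilde{\mathbf{H}}^\dagger\tilde{\mathbf{H}}|=1$ on $\tilde{\mathbb{H}}_\infty$ and the invariance forces $|\det\mathbf{U}_{\tilde{\mathbf{H}}}^\dagger \mathbf{U}_{\tilde{\mathbf{H}}}|=1$ (to preserve the fixed-volume ensemble), one has $|\det\mathbf{E}_{\tilde{\mathbf{H}}}^\dagger\mathbf{E}_{\tilde{\mathbf{H}}}|=1$. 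A Stirling-style volume computation then shows the right-hand side tends to $0$ as $T\to\infty$ as soon as the VNR exceeds $\pi e/D^{1/m}$, after the $D^{1/(2m)}$ normalization absorbed into the definition of $\tilde{\mathbb{H}}_\infty$.

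Second, to pass from the per-channel expected bound to a deterministic sequence that works for all $\tilde{\mathbf{H}}$ simultaneously, I would exploit that the bound above depends on $\tilde{\mathbf{H}}$ only through the invariant $|\det\mathbf{E}_{\tilde{\mathbf{H}}}|$, which is constant. Thus the same $p_T\to 0$ dominates $\mathrm{E}_\Lambda[P_e(\Lambda,\tilde{\mathbf{H}})]$ uniformly in $\tilde{\mathbf{H}}$. Combining this with the identity $P_e(\Lambda,\tilde{\mathbf{H}})=P_e(\mathbf{U}_{\tilde{\mathbf{H}}}\Lambda,\mathbf{E}_{\tilde{\mathbf{H}}})$, the problem of finding a single lattice universally good on $\tilde{\mathbb{H}}_\infty$ reduces to finding one simultaneously good against all matrices $\mathbf{E}$ with $\|\mathbf{E}\|\leq\alpha$ and $|\det\mathbf{E}^\dagger\mathbf{E}|=1$---a compact family on which $P_e(\Lambda,\cdot)$ is continuous. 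A finite-net argument combined with Markov's inequality then extracts a deterministic $\Lambda_T\in\mathbb{L}_T$ achieving vanishing worst-case error, yielding universal goodness in the sense of Definition \ref{def:goodLattices}.

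The main obstacle I foresee is precisely the exchange of expectation and supremum over the non-compact set $\tilde{\mathbb{H}}_\infty$: the per-$\tilde{\mathbf{H}}$ Markov bound does not directly yield a single good lattice. The resolution relies crucially on the compactification pushing all ill-conditioning into the $\mathbf{U}_{\tilde{\mathbf{H}}}$ factor, which is absorbed by the ensemble's invariance, leaving only a compact residual family of effective channels $\mathbf{E}_{\tilde{\mathbf{H}}}$. Making this reduction fully rigorous while preserving the sharp VNR constant $\pi e/D^{1/m}$---and handling the non-square case $n>m$ via a pseudo-inverse / QR argument so that the change-of-variables Jacobian factors remain determinant-one---will be the delicate part of the proof.
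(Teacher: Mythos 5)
Your overall route is the same as the paper's: a per-channel Minkowski--Hlawka/ambiguity-decoder bound in which the $\mathbf{U}_{\tilde{\HH}}$ factor is absorbed by the ensemble's invariance, followed by the observation that only the compact family of residual matrices $\mathbf{E}_{\tilde{\HH}}$ (with $\|\mathbf{E}_{\tilde{\HH}}\|\leq\alpha$, unit determinant) matters, a finite net over that family, and an averaging/extraction step to get a single deterministic sequence $\Lambda_T$. The first step of your argument matches the paper's part (i) essentially verbatim.

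However, there is a genuine gap at the step you yourself flag as delicate. You dispose of it by asserting that $P_e(\Lambda,\cdot)$ is ``continuous'' on the compact family of error matrices and that a finite net plus Markov then suffices. Plain continuity for each fixed $\Lambda$ and $T$ is not enough: the modulus of continuity could depend on the particular lattice $\Lambda$ drawn from the ensemble and, more importantly, on $T$, whereas the net must have a size $L_{\eta,m}$ fixed independently of $T$ while the error bounds themselves vary with $T$. What is actually needed is a \emph{quantitative} perturbation bound, uniform over all lattices simultaneously, of the form of the paper's inequality \eqref{eq:boundCompact}: if $\|\mathbf{E}_{\tilde{\HH}_0}-\mathbf{E}_{\tilde{\HH}}\|\leq\eta$, then conditioned on the noise lying in the typical ball,
$\mathcal{P}_e(\Lambda,\HH\,|\,\cdot)\leq e^{\eta\alpha\, mT(1+\varepsilon/\sigma_w^2)}\,\mathcal{P}_e(\Lambda,\HH_0\,|\,\cdot)$ for \emph{every} $\Lambda$. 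This is proved in Appendix \ref{app:proofs} by bounding the likelihood ratio of the equalized noise densities, a property of the densities alone (hence uniform in $\Lambda$), and it is here that the bound $\|\mathbf{E}_{\tilde{\HH}}\|\leq\alpha$ from Definition \ref{def:compacify} enters quantitatively. Note that this factor grows exponentially in $T$; the proof only closes because one then does explicit exponent bookkeeping: choose the ensemble volume so that the Minkowski--Hlawka term $\lambda D^{-T}\,\mbox{vol}\,\mathcal{B}$ decays exponentially in $T$, and only \emph{then} choose $\eta$ (hence $L_{\eta,m}$, independent of $T$) small enough that $\eta\alpha m(1+\varepsilon/\sigma_w^2)$ is beaten by that decay rate. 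Your sketch contains neither the uniform-in-$\Lambda$ likelihood-ratio lemma nor this trade-off, and ``continuity plus a finite net'' cannot substitute for them; also, the extraction should bound the ensemble average of the \emph{sum} of the $L$ net-point error probabilities (so that a single lattice is simultaneously good at all net points), rather than a per-channel Markov argument, before the perturbation bound is applied pointwise in $\Lambda$ to cover the whole compact family.
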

The proof uses the techniques of \cite{RootVarayia1968}, \cite[Appendix]{ShiWesel07}, and consists of three parts: (i) a good lattice for a \textit{fixed} $\HH$, (ii) a universal code for a finite set of channel matrices and (iii) fine quantization of the possible channel realizations. We start with the simple observation that lattice decoding in the complex channel model is equivalent to the real one, i.e., if $\Lambda \in \mathbb{C}^{mT}$ and $\yy$ is the received vector,

\begin{equation*}
\arg \min_{\xx \in \Lambda} \left\| \yy - \xx \right\| = \arg \min_{\psi(\xx) \in \psi(\Lambda)} \left\| \psi(\yy) - \psi(\xx) \right\|.
\end{equation*}

Furthermore, a circularly symmetric Gaussian distribution with variance $\sigma_w^2$ corresponds to a two-dimensional real Gaussian distribution with covariance $(\sigma_w^2/2) \mathbf{I}_2$. For convenience we set $V=1/\lambda$.
\begin{proof}
(i) For a given non-random matrix $\HH$, it was proven in \cite[Thm. 3]{HeshamElGamal04}, following the steps of \cite{Loeliger}, that the Minkowski-Hlawka theorem implies the existence of a sequence of lattices which are good for the MIMO channel $\YY = \HH \XX + \ZZ$. Specifically, by applying linear zero-forcing $\HH^{*} \YY = \XX + \HH^{*} \ZZ$, followed by an ``ambiguity decoder'', the probability of error goes to zero as $T\to \infty$ as long as $\text{VNR} > \pi e/{D}^{1/m}$.
Here we consider a small variation. The receiver first finds $\EE_{\tilde{\HH}}$ satisfying (i) and (ii) in Definition \ref{def:compacify}, and calculates
\begin{equation}\tilde{\YY} = D^{-1/2m} \EE_{\tilde{\HH}}^{*}\YY = \tilde{\XX} + \tilde{\mathbf{W}},
\end{equation}
where $\tilde{\ZZ} = D^{-1/2m} \EE_{\tilde{\HH}}^{*}\ZZ$ and $\tilde{\XX} = \UU_{\tilde{\HH}}\XX \in \tilde{\Lambda}.$ Due to Definition \ref{def:compacify}, $\tilde{\Lambda}$ is also in the ensemble, and averaging over all $\tilde{\Lambda}$ is the same as averaging over all $\Lambda$. From \cite[Thm. 3]{HeshamElGamal04}:
\begin{equation}\begin{split}P_e(\Lambda,\HH) &\leq  P(\psi(\mathbf{W}) \notin \mathcal{B}_{\sqrt{mT(\sigma_w^2+\varepsilon)}})+ P_e(\Lambda,\HH | \psi(\mathbf{W}) \in \mathcal{B}_{\sqrt{mT(\sigma_w^2+\varepsilon)}})\\ &\leq P(\psi(\mathbf{W}) \notin \mathcal{B}_{\sqrt{mT(\sigma_w^2+\varepsilon)}}) + (1+\delta)\lambda {D}^{-{T}}{ \mbox{vol } \mathcal{B}_{\sqrt{mT(\sigma_w^2+\varepsilon)}}},
\end{split}
\end{equation}
where $\delta, \varepsilon \to 0$ as $m \to \infty$, and the probability can be made arbitrarily small for $\text{VNR} > \pi e/{D}^{1/m}$. The balls $\mathcal{B}_{\sqrt{mT(\sigma_w^2+\varepsilon)}}$ are in $\mathbb{R}^{2mT}$ (the adaptations in the corresponding models to the complex case were made in view of the observation that follows Theorem 2).

(ii) Suppose now that we have $L$ channel matrices $\HH_1,\ldots,\HH_L$. Averaging the \textit{sum} of the probabilities $\mathcal{P}_e(\Lambda,\HH_i)$ over all lattices in the ensemble, we have
\begin{equation}
\begin{split}& \EE_\mathbb{L} \left[ \sum_{i=1}^L \mathcal{P}_e(\Lambda,\HH_i)\right] =   \sum_{i=1}^L \EE_\mathbb{L} \left[\mathcal{P}_e(\Lambda,\HH_i)\right] \\ 
&\leq L \left(P(\mathbf{W} \notin \mathcal{B}_{\sqrt{mT(\sigma_w^2+\varepsilon)}}) + (1+\delta)\lambda {D}^{-{T}}{ \mbox{vol } \mathcal{B}_{\sqrt{mT(\sigma_w^2+\varepsilon)}}}\right),
\end{split}
\label{eq:averageProb}
\end{equation}
Again, this sum of probabilities can be made arbitrarily small as long as the threshold VNR condition is satisfied. 

(iii) For the third part we need the assumption that $\mathbb{L}_T$ compactifies the channel space. For two channel realizations $\HH_0$ and $\HH$ such that the corresponding error matrices satisfy $\left\| \mathbf{E}_{\tilde{\HH}_0}- \mathbf{E}_{\tilde{\HH}}\right\| \leq \eta,$ it is proven in Appendix \ref{app:proofs} that
\begin{equation}\begin{split}\mathcal{P}_e(\Lambda, \HH | \mathbf{W} \in \mathcal{B}_{\sqrt{mT(\sigma_w^2+\varepsilon)}} ) \leq e^{\eta \alpha (mT(1+\varepsilon/\sigma_w^2))}\mathcal{P}_e(\Lambda, \HH_0 | \mathbf{W} \in \mathcal{B}_{\sqrt{mT(\sigma_w^2+\varepsilon)}} ).
\end{split}
\label{eq:boundCompact}
\end{equation}
Since the set of possible error matrices is compact, for any arbitrarily small $\eta$, we can choose $L=L_{\eta,m}$ large enough and matrices $\EE_{\tilde{\HH}_1},\ldots,\EE_{\tilde{\HH}_L}$ such that for all $\EE$, we can find $\EE_{\tilde{\HH}_i}$ satisfying $\left\| \EE_{\tilde{\HH}_i}-\EE \right\| \leq \eta$. Therefore, for any $\HH \in \mathbb{H}_{\infty}$, there exists $i$ such that
\begin{equation}\begin{split}\mathcal{P}_e(\Lambda,\HH) &\leq P(\mathbf{W} \notin \mathcal{B}_{\sqrt{mT(\sigma_w^2+\varepsilon)}}) + e^{\eta \alpha (mT(1+\varepsilon/\sigma_w^2))}\mathcal{P}_e(\Lambda, \HH_i | \mathbf{W} \in \mathcal{B}_{\sqrt{mT(\sigma_w^2+\varepsilon)}} ).
\end{split}
\end{equation} 
Taking the average over the ensemble:
\begin{equation} \begin{split}
E_{\mathbb{L}}\left[\mathcal{P}_e(\Lambda,\HH)\right] &\leq P(\mathbf{W} \notin \mathcal{B}_{\sqrt{mT(\sigma_w^2+\varepsilon)}}) +  (1+\delta)\lambda {D}^{-{T}} L_{\eta,n} e^{\eta \alpha (mT(1+\varepsilon/\sigma_w^2))} { \mbox{vol } \mathcal{B}_{\sqrt{mT(\sigma_w^2+\varepsilon)}}}.
\end{split}
\end{equation} 
If we choose $\lambda$ to be less than $D^{-T} / \sqrt{\pi e \sigma_w^2}$, $\lambda D^{-{T}}{ \mbox{vol } \mathcal{B}_{\sqrt{mT(\sigma_w^2+\varepsilon)}}}$ tends to $0 \mbox{ exponentially in } T.$
Therefore, we can choose $L_{\eta,m}$, independent of $T$, such that the total exponent is negative, and hence the average probability of error of the ensemble can be made arbitrarily small.
\end{proof}

We close this section arguing that mod-$p$ lattices \cite{Loeliger} fail to be universally good (for model \eqref{eq:compoundModel}). Suppose that $\HH$ is diagonal. All mod-$p$ lattices contain multiples of the canonical vectors (say, $ p \beta \mathbf{e}_i$, where $\beta$ is a scaling factor). Hence $\mathcal{V}_{\HH\Lambda}$ is contained in the set $S = \left\{ \mathbf{x} \in \mathbb{C}^{mT}: |x_1| \leq h_1 \beta p/2 \right\}$, and therefore for any $\Lambda$ in the mod-$p$ ensemble
\begin{equation}
P_e(\Lambda,\HH) \geq P(\mathbf{z} \notin S) = P(|z_1| \geq h_1 \beta p/2).
\end{equation}
Consider now the matrix $\HH \in \mathbb{H}_{\infty}$, with $h_1 = 1/p^2, h_2 = p^2$, $h_i = D^{1/(m-2)}, i = 3, \ldots, m$. It is clear that $P_e(\Lambda,\HH) \to 1$, as $p \to \infty$, and there is no good lattice (in the sense of Def. \ref{def:goodLattices}) in the ensemble. This does \textit{not} contradict \cite[Thm. 3]{HeshamElGamal04}, who showed, for a \textit{given fixed} $\mathbf{H}$, the existence of a good $\Lambda$ (depending on $\mathbf{H}$), which does not imply the existence of one single sequence with vanishing probabilities for all $\mathbf{H}$. We show later how to prevent this effect, by constructing lattices with full diversity.

\section{Shaping: The Lattice Gaussian Distribution}
\label{sec:finalScheme}
For the power-constrained model, the final transmission scheme is similar to \cite{LingBelfiore}. Using a coding lattice of dimension $mT$ from an ensemble satisfying Theorem \ref{thm:infiniteLimit}, the transmitter chooses a vector $\xx$ in $\Lambda$ drawn according to a lattice Gaussian distribution $D_{\Lambda,\sigma_s}$. The received applies MAP decoding to recover an estimate $\hat{\xx}$ of the sent symbol.

Consider a vector-form channel equation \eqref{eq:block-fading-vectorized}, with indices omitted for simplicity. Let $\rho = \frac{\sigma_s^{2}}{\sigma_w^{2}}$. MAP decoding reads: 
\begin{eqnarray*}
  \hat{\mathbf{x}} &=& \arg\max_{x\in\Lambda} p(\mathbf{x}|\mathbf{y}, \mathcal{H}) = \arg\max_{x\in\Lambda} p(\mathbf{y}|\mathbf{x}, \mathcal{H}) p(\mathbf{x})\\
   &{=}& \arg\max_{x\in\Lambda} f_{\sigma_w}(\mathbf{y}-\mathcal{H}\mathbf{x}) f_{\sigma_s}(\mathbf{x}) \\
   &\stackrel{(a)}{=}& \arg\min_{x\in\Lambda} \sigma_w^{-2}\|\mathbf{y}-\mathcal{H}\mathbf{x}\|^2 + \sigma_s^{-2}\|\mathbf{x}\|^2 \\
   &\stackrel{(b)}{=}& \arg\min_{x\in\Lambda} \|\mathbf{F}\mathbf{y}-\mathbf{R}\mathbf{x}\|^2
\end{eqnarray*}
where $(\mathbf{R},\mathbf{F})$ is any pair of matrices in $\mathbb{C}^{nT\times mT}$ satisfying $\mathbf{R}^\dagger\mathbf{R} = {\mathcal{H}^\text{tr}\mathcal{H}+\rho^{-1}\mathbf{I}}$ and $\mathbf{F}^\dagger \mathbf{R}={\rho}^{-1}\mathcal{H}$. In the above equation, (a) is due to the definition of $D_{\Lambda,\sigma_s}$, while (b) is obtained by completing the square. This coincides with the well-known MMSE-GDFE \cite{HeshamElGamal04}, except that $\mathrm{SNR}$ is replaced by $\rho$. We note that the matrices $\mathbf{F}$ and $\mathbf{R}$ are block diagonal, namely, the MMSE filter
is only applied on the spatial dimension. Therefore MAP decoding is equivalent to MMSE-GDFE filtering plus lattice decoding.

To analyze the error probability, we write
\begin{eqnarray*}
  \mathbf{y}' = \mathbf{Fy} = \mathbf{Rx} + (\mathbf{F}\mathcal{H}-\mathbf{R})\mathbf{x}+\mathbf{Fw} =\mathbf{Rx} + \mathbf{w}'
\end{eqnarray*}
where $\mathbf{w}' \triangleq  (\mathbf{F}\mathcal{H}-\mathbf{R})\mathbf{x}+\mathbf{Fw} $ can be viewed as the equivalent noise. The error probability of lattice decoding associated with $\Lambda$ is given by
\begin{eqnarray}
P_e(\Lambda) = \sum_{\mathbf{x} \in \Lambda}{P(\mathrm{error}|\mathbf{x})P(\mathbf{x})} 
   =P\left\{\mathbf{w}' \not\in \mathcal{V}(\mathbf{R}\Lambda)\right\} \label{eq:prob}
\end{eqnarray}
where the last step follows from the total probability theorem. We stress that in (\ref{eq:prob}), the probability is evaluated with respect to both distributions $\mathbf{x}\sim D_{\Lambda,\sigma_s}$ and $\mathbf{w}\sim f_{\sigma_w}$.

Next, we will show that the equivalent noise $\mathbf{w}'$ is sub-Gaussian. Therefore, the error probability is exponentially bounded above by that of a Gaussian noise, and a good infinite lattice coding scheme as in the proof of Thm. \ref{thm:infiniteLimit} will also have a vanishing probability of error for $\mathbf{w}'$. Let us recall the definition of sub-Gaussian random variables.

\begin{defi}[sub-Gaussian \cite{Vershynin11}]
A real-valued random variable $X$ is sub-Gaussian with parameter $\sigma > 0$ if for all $t \in \mathbb{R}$, the  moment-generating function satisfies
$\mathbb{E} [e^{tX}] \leq e^{\sigma^2 t^2/2}$. More generally, we say that a real random vector $\mathbf{X}$ is sub-Gaussian (of parameter $\sigma$) if all its one-dimensional marginals $\mathbf{u}^T\mathbf{X}$ for a unit vector $\mathbf{u}$ are sub-Gaussian (of parameter $\sigma$). We will say that a complex random vector is sub-Gaussian with parameter $\sigma$ if its real equivalent (under the transformation $\psi(\XX) = (\Re (\XX), \Im (\XX)$) is sub-Gaussian with parameter $\sigma/\sqrt{2}$.
\end{defi}

Note that the tails of a real-valued sub-Gaussian random variable $X$ are upper bounded the same way (satisfy the same Chernoff bound) that the tails of a normal distribution with parameter $\sigma$, i.e., $\mathbb{P}(|X| \geq t) \leq
2 e^{-t^2/(2\sigma^2)}$ for all $t \geq 0$.

\begin{lemma}
Let $\mathbf{x}\sim D_{\Lambda,\sigma}$. Then the moment generating function of $\mathbf{Ax}$ for any square matrix $\mathbf{A}$ satisfies
\[
E[e^{\Re(\mathbf{t}^{\dagger}\mathbf{Ax})}] \leq e^{\frac{\sigma^2}{4}\|\mathbf{A}^{\dagger}\mathbf{t}\|^2}.
\]
\end{lemma}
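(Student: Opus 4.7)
The plan is to reduce the moment-generating function to a ratio of shifted theta sums by completing the square, and then to bound the shifted sum by the unshifted one via Poisson summation. Setting $\mathbf{s}=\mathbf{A}^{\dagger}\mathbf{t}$ so that $\Re(\mathbf{t}^{\dagger}\mathbf{A}\bm{\lambda})=\Re(\mathbf{s}^{\dagger}\bm{\lambda})$, I would first write out the expectation directly from the definition of $D_{\Lambda,\sigma}$:
\[
E[e^{\Re(\mathbf{t}^{\dagger}\mathbf{A}\mathbf{x})}] = \frac{\sum_{\bm{\lambda}\in\Lambda} e^{-\|\bm{\lambda}\|^2/\sigma^2 + \Re(\mathbf{s}^{\dagger}\bm{\lambda})}}{\sum_{\bm{\lambda}\in\Lambda} e^{-\|\bm{\lambda}\|^2/\sigma^2}}.
\]

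The next step is to complete the square via the shift $\mathbf{c}=\tfrac{\sigma^2}{2}\mathbf{s}$. The identity $\|\bm{\lambda}-\mathbf{c}\|^2=\|\bm{\lambda}\|^2-\sigma^2\Re(\mathbf{s}^{\dagger}\bm{\lambda})+\tfrac{\sigma^4}{4}\|\mathbf{s}\|^2$ rewrites the exponent in the numerator as $-\|\bm{\lambda}-\mathbf{c}\|^2/\sigma^2+\tfrac{\sigma^2}{4}\|\mathbf{s}\|^2$, giving
\[
E[e^{\Re(\mathbf{t}^{\dagger}\mathbf{A}\mathbf{x})}] = e^{\sigma^2\|\mathbf{A}^{\dagger}\mathbf{t}\|^2/4}\cdot\frac{f_{\sigma,\mathbf{c}}(\Lambda)}{f_{\sigma}(\Lambda)}.
\]
The claimed inequality then reduces to showing that $f_{\sigma,\mathbf{c}}(\Lambda)\le f_{\sigma}(\Lambda)$ for every shift $\mathbf{c}$, i.e.\ that the theta-like sum is maximized at the origin.

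For this last step I would invoke Poisson summation under the real identification $\psi:\mathbb{C}^m\to\mathbb{R}^{2m}$. Since the Fourier transform of the Gaussian density $f_{\sigma}$ is $\widehat{f_{\sigma}}(\bm{\mu})=e^{-\pi^2\sigma^2\|\bm{\mu}\|^2}$, Poisson summation yields
\[
f_{\sigma,\mathbf{c}}(\Lambda) = \frac{1}{V(\Lambda)}\sum_{\bm{\mu}\in\Lambda^*} e^{-\pi^2\sigma^2\|\bm{\mu}\|^2}\,e^{2\pi i\Re(\bm{\mu}^{\dagger}\mathbf{c})}.
\]
Because $f_{\sigma,\mathbf{c}}(\Lambda)$ is real, the right-hand side equals $\frac{1}{V(\Lambda)}\sum_{\bm{\mu}} e^{-\pi^2\sigma^2\|\bm{\mu}\|^2}\cos(2\pi\Re(\bm{\mu}^{\dagger}\mathbf{c}))$; the nonnegativity of the coefficients together with $\cos(\cdot)\le 1$ then yields the desired inequality, with equality at $\mathbf{c}=\mathbf{0}$.

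The only genuine point of care—more of a bookkeeping issue than a real obstacle—is the complex-to-real translation used when applying Poisson summation: one must verify that $\Re(\bm{\mu}^{\dagger}\mathbf{c})$ is exactly the standard Euclidean pairing on $\mathbb{R}^{2m}$ under $\psi$, that the complex dual $\Lambda^{*}$ defined earlier in the paper coincides with the real dual of $\psi(\Lambda)$, and that $\widehat{f_{\sigma}}$ remains a positive Gaussian on $\mathbb{R}^{2m}$ (so all Fourier coefficients are nonnegative). These identifications are set up in Section~\ref{sec:model} and in the material on the flatness factor, so the verification is routine.
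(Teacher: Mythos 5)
Your proposal is correct and follows essentially the same route as the paper: complete the square to extract the factor $e^{\frac{\sigma^2}{4}\|\mathbf{A}^{\dagger}\mathbf{t}\|^2}$ and reduce the claim to $f_{\sigma,\mathbf{c}}(\Lambda)\leq f_{\sigma}(\Lambda)$. The only difference is that you supply the standard Poisson-summation proof of this last shift-maximality fact (with the correct shift $\mathbf{c}=\tfrac{\sigma^2}{2}\mathbf{A}^{\dagger}\mathbf{t}$), which the paper simply asserts.
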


\begin{proof}
We rewrite the moment generating function as follows:
\begin{eqnarray*}
&&  f_{\sigma}(\Lambda) \cdot  E[e^{\Re(\mathbf{t}^\dagger\mathbf{Ax})}] = \frac{1}{(\sqrt{\pi}\sigma)^n}\sum_{\mathbf{x}\in\Lambda} e^{-\frac{\|\mathbf{x}\|^2}{\sigma^2}+\mathbf{t}^\dagger\mathbf{Ax}} \\
   &=& e^{\frac{\sigma^2}{4}\|\mathbf{A}^\dagger\mathbf{t}\|^2} f_{\sigma}\left(\Lambda-\frac{\sigma}{2}\mathbf{A}^\dagger\mathbf{t}\right),
\end{eqnarray*}
where the last inequality is obtained by ``completing the square''. Since $f_{\sigma}\left(\Lambda-\mathbf{a}\right) \leq f_{\sigma}(\Lambda) $ for any vector $\mathbf{a}$, the proof is completed.
\end{proof}

\begin{lemma}
The equivalent noise $\mathbf{w}'$ is sub-Gaussian with parameter $\sigma_w$.
\label{lem:sub-Gaussian}
\end{lemma}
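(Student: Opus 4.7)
My plan is to combine the two noise contributions via independence of $\mathbf{x}$ and $\mathbf{w}$, then use the previous moment-generating-function (MGF) lemma on the discrete-Gaussian part and the standard Gaussian MGF on the continuous part, and finally invoke the MMSE-GDFE relations to collapse the bound down to the advertised constant $\sigma_w$.

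More concretely, write $\mathbf{w}' = \mathbf{A}\mathbf{x} + \mathbf{F}\mathbf{w}$ with $\mathbf{A}\triangleq \mathbf{F}\mathcal{H}-\mathbf{R}$. To verify sub-Gaussianity with parameter $\sigma_w$ it suffices (by the paper's definition) to show that for every complex vector $\mathbf{t}$,
\[
E\bigl[e^{\Re(\mathbf{t}^{\dagger}\mathbf{w}')}\bigr] \;\leq\; e^{\frac{\sigma_w^2}{4}\|\mathbf{t}\|^2}.
\]
Since $\mathbf{x}$ and $\mathbf{w}$ are independent, this MGF factors. For the discrete part, the preceding lemma applied with the matrix $\mathbf{A}$ gives $E[e^{\Re(\mathbf{t}^{\dagger}\mathbf{A}\mathbf{x})}]\leq e^{\frac{\sigma_s^2}{4}\|\mathbf{A}^{\dagger}\mathbf{t}\|^2}$. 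For the continuous part, $\mathbf{F}\mathbf{w}$ is a circularly symmetric complex Gaussian whose MGF is exactly $e^{\frac{\sigma_w^2}{4}\|\mathbf{F}^{\dagger}\mathbf{t}\|^2}$ (a routine computation from $\Re(\mathbf{t}^{\dagger}\mathbf{F}\mathbf{w})$ being a real Gaussian of variance $\tfrac{\sigma_w^2}{2}\|\mathbf{F}^{\dagger}\mathbf{t}\|^2$). Multiplying, the overall bound is
\[
E\bigl[e^{\Re(\mathbf{t}^{\dagger}\mathbf{w}')}\bigr] \;\leq\; \exp\!\Bigl(\tfrac{1}{4}\,\mathbf{t}^{\dagger}\bigl(\sigma_s^2\mathbf{A}\mathbf{A}^{\dagger}+\sigma_w^2\mathbf{F}\mathbf{F}^{\dagger}\bigr)\mathbf{t}\Bigr).
\]

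The main step is then to show the matrix identity $\sigma_s^2\mathbf{A}\mathbf{A}^{\dagger}+\sigma_w^2\mathbf{F}\mathbf{F}^{\dagger}=\sigma_w^2\mathbf{I}$. Using the defining relations $\mathbf{R}^{\dagger}\mathbf{R}=\mathcal{H}^{\dagger}\mathcal{H}+\rho^{-1}\mathbf{I}$ and $\mathbf{F}^{\dagger}\mathbf{R}=\mathcal{H}$, a convenient realisation is $\mathbf{F}=\mathbf{R}^{-\dagger}\mathcal{H}^{\dagger}$, under which $\mathbf{A}=\mathbf{R}^{-\dagger}(\mathcal{H}^{\dagger}\mathcal{H}-\mathbf{R}^{\dagger}\mathbf{R})=-\rho^{-1}\mathbf{R}^{-\dagger}$ and $\mathbf{F}\mathbf{F}^{\dagger}=\mathbf{I}-\rho^{-1}\mathbf{R}^{-\dagger}\mathbf{R}^{-1}$. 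Substituting and using $\sigma_s^2\rho^{-1}=\sigma_w^2$ makes the $\mathbf{R}^{-\dagger}\mathbf{R}^{-1}$ terms cancel, leaving exactly $\sigma_w^2\mathbf{I}$. (The identity is intrinsic to $\mathbf{F}\mathcal{H}-\mathbf{R}$ and $\mathbf{F}\mathbf{F}^{\dagger}$, so any admissible choice of $(\mathbf{F},\mathbf{R})$ yields the same conclusion.) Plugging back in gives the required bound and hence the claimed sub-Gaussian parameter.

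The main obstacle I anticipate is purely book-keeping around the MMSE-GDFE identity; once that matrix cancellation is done, the rest is a textbook MGF computation. One small care to take is the factor of $\sqrt{2}$ appearing in the paper's definition of complex sub-Gaussianity, which is exactly why the Gaussian MGF lands on $e^{\frac{\sigma_w^2}{4}\|\mathbf{F}^{\dagger}\mathbf{t}\|^2}$ rather than $e^{\frac{\sigma_w^2}{2}\|\mathbf{F}^{\dagger}\mathbf{t}\|^2}$, matching the prefactor coming from the discrete-Gaussian lemma.
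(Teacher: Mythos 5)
Your proof is correct and takes essentially the same route as the paper: factor the moment-generating function of $\mathbf{w}'=(\mathbf{F}\mathcal{H}-\mathbf{R})\mathbf{x}+\mathbf{F}\mathbf{w}$ by independence, bound the discrete part with the preceding MGF lemma and the continuous part with the Gaussian MGF, and collapse the result via the MMSE-GDFE covariance identity $\sigma_s^2(\mathbf{F}\mathcal{H}-\mathbf{R})(\mathbf{F}\mathcal{H}-\mathbf{R})^{\dagger}+\sigma_w^2\mathbf{F}\mathbf{F}^{\dagger}=\sigma_w^2\mathbf{I}$. The only difference is cosmetic: you derive that identity explicitly from $\mathbf{F}^{\dagger}\mathbf{R}=\mathcal{H}$ (the relation actually required by the completing-the-square step; the paper's stated $\mathbf{F}^{\dagger}\mathbf{R}=\rho^{-1}\mathcal{H}$ is a normalization slip), whereas the paper simply quotes the identity from the MMSE-GDFE literature.
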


\begin{proof}
	Let us derive its moment generation function:
	\begin{eqnarray*}
		E [e^{\Re(\mathbf{t}^\dagger\mathbf{w}')}] &=& E [e^{\Re(\mathbf{t}^\dagger((\mathbf{F}\mathcal{H}-\mathbf{R})\mathbf{x}+\mathbf{Fw}))}] \\
		&=& E [e^{\Re(\mathbf{t}^\dagger(\mathbf{F}\mathcal{H}-\mathbf{R})\mathbf{x})}]E[(e^{\Re(\mathbf{t}^\dagger\mathbf{Fw})}] \\
		&\leq& e^{\mathbf{t}^\dagger(\mathbf{F}\mathcal{H}-\mathbf{R})(\mathbf{F}\mathcal{H}-\mathbf{R})^\dagger\mathbf{t}\cdot \sigma^2_s/4 
			 } \cdot e^{ \mathbf{t}^\dagger\mathbf{F}\mathbf{F}^\dagger\mathbf{t}\cdot\sigma_w^2/4}\\
		&=& e^{\mathbf{t}^\dagger[\frac{\sigma^2_s}{4}(\mathbf{F}\mathcal{H}-\mathbf{R})(\mathbf{F}\mathcal{H}-\mathbf{R})^\dagger + \frac{\sigma_w^2}{4}\mathbf{F}\mathbf{F}^\dagger]\mathbf{t}}= e^{\frac{\sigma^2_c}{4}\|\mathbf{t}\|^2}.
	\end{eqnarray*}
	The last step holds because the covariance matrix \cite{HeshamElGamal04}
	\begin{eqnarray*}
		&&\sigma_s^2(\mathbf{F}\mathcal{H}-\mathbf{R})(\mathbf{F}\mathcal{H}-\mathbf{R})^\dagger + \sigma_w^2\mathbf{FF}^\dagger \\ & =& \sigma_s^2 \rho^{-2} \mathbf{R}^{-T}\mathbf{R}^{-1} + \sigma_w^2\rho^{-2}\mathbf{R}^{-T}\mathcal{H}^\dagger\mathcal{H}\mathbf{R}^{-1}\\
		&=& \sigma_w^2 \mathbf{R}^{-T} (\rho^{-1}\mathbf{I} + \mathcal{H}^\dagger\mathcal{H})\mathbf{R}^{-1}= \sigma_w^2\mathbf{I}.
	\end{eqnarray*}
	For any unit vector $\mathbf{u}$, we have
	\begin{eqnarray*}
		E [e^{\Re(t\mathbf{u}^\dagger\mathbf{w}')}]= E [e^{\Re((t\mathbf{u})^\dagger\mathbf{w}')}]
		&\leq& E [e^{\sigma^2_c\|t\mathbf{u}\|^2/4}]= e^{\sigma^2_c{t}^2/4}
	\end{eqnarray*}
	completing the proof.
\end{proof}
Finally, from Theorem \ref{thm:infiniteLimit}, taking a universal lattice $\Lambda$ from the Minkowski-Hlawka ensemble \eqref{eq:ensemble}, the error probability vanishes as long as the VNR $\gamma_{\mathbf{R}\Lambda}(\sigma) >\pi e$ (as $T\to \infty$), i.e..
\begin{equation}\label{eq:VNR}
\frac{V(\mathbf{R}\Lambda)^{\frac{1}{mT}}}{ \sigma_w^2} = \frac{ |\rho^{-1}\mathbf{I} + \mathbf{H}^\dagger\mathbf{H}|^{\frac{1}{m}}V(\Lambda)^{\frac{1}{mT}}}{ \sigma_w^2} > \pi e.
\end{equation}
Thus,  from \cite[Lemma 6]{LLBS_12}, any rate
\begin{equation}
\begin{split}
R &= m\log(\pi e \sigma_s^2) - \frac{1}{T} \log(V(\Lambda)) -\varepsilon \\&
= \log \det \left(\mathbf{I} + \rho\mathbf{H}^\dagger\mathbf{H}\right) - \varepsilon = C-\varepsilon,
\end{split}
\end{equation}
for any arbitrarily small $\varepsilon$ is achievable. Note that the achievable rate only depends on $\mathbf{H}$ through $\det(\mathbf{I} + \rho\mathbf{H}^\dagger\mathbf{H})$. Therefore, there exists a lattice $\Lambda$ achieving capacity $C$ of the compound channel.

The techniques above greatly simplify the probability of error analysis in \cite{LingBelfiore}. Note that, for the probability of error, we do not need a flatness condition on the distribution as in \cite{LingBelfiore} anymore, thanks to sub-Gaussianity.\footnote{However, contrary to what was stated in a previous version of this paper \cite{OurISIT}, we do need flatness of $D_{\Lambda,\sigma_s^2}$ for the entropy approximation. More precisely, we need $\epsilon_{\Lambda}(\sigma_s)$ to be negligible, which can be satisfied above a threshold $\text{snr} > e-1$. When this is the case, the signal power $P\approx \sigma^2$ and $\rho \approx \mathrm{SNR}$. The threshold can be further reduced to zero by shaping over a random coset of $\Lambda$, or by constructing $\Lambda$ using the methods in Section \ref{sec:ConsAAlgebraic} with a random coset of a capacity-achieving linear code. The details are out of the scope of this paper and thus omitted.}

\section{Construction of Good Ensembles}
\label{sec:consA}
\noindent Theorem \ref{thm:infiniteLimit} ultimately relies on the existence of an ensemble of lattices satisfying two conditions:
\begin{enumerate}
\item The Minkowski-Hlawka Theorem (Theorem \ref{thm:basicMH}) and
\item The compactification property (Definition \ref{def:compacify}).
\end{enumerate} 
In this section, we show how to construct lattices with these properties. Our main tool is Algebraic Number Theory, previously used to develop good modulation schemes for MIMO and fading channels \cite{ViterboOggier}. We separate two cases: The block-fading case (where the channel matrix $\mathbf{H}$ is diagonal), and the MIMO case (for general $\mathbf{H}$). Although the latter case contains the former, block-fading channels are of independent interest and have special commutative structures that can be exploited to simplify the code construction and analysis.

\subsection{Ensembles for the Block-Fading Channel}
\label{sec:ConsAAlgebraic}
We follow closely the construction of \cite{KosiOngOggier}, also used in \cite{Adaptative} for the Compute-And-Forward protocol. For an introduction to the algebraic theory used in this section, the reader is referred to \cite{ViterboOggier}. We describe in the next subsection some main concepts and results used throughout the paper.
\subsubsection{Basic Notation}
We consider \textit{(algebraic) number fields} $K/\mathbb{Q}$, i.e. field extensions of $\mathbb{Q}$ with finite degree $\tilde{m}$. There are $\tilde{m}$ homomorphisms $\sigma_1,\ldots, \sigma_{\tilde{m}}$ that embed $K$ into $\mathbb{C}$ and fix $\mathbb{Q}$. If \textit{none} of the images of these embeddings is contained in $\mathbb{R}$, we say that $K$ is a \textit{totally complex} extension (as opposed to \textit{totally real}, when all images are in $\mathbb{R}$). From now on, unless stated otherwise, we assume that number fields are totally complex. In this case, $\tilde{m} = 2m$ is even and the homomorphisms $\sigma_i$ appear in complex conjugate pairs, i.e. we can assume that the homomorphisms are
$$\sigma_1,\overline{\sigma}_1, {\sigma}_2,\overline{\sigma}_2,\ldots, {\sigma}_{{m}}, \overline{\sigma}_{{m}}.$$ 

The \textit{ring of integers} of $K$, denoted by $\mathcal{O}_K$, is the ring of all elements in $K$ which are root of a monic polynomial with integer coefficients. The invertible elements in $\mathcal{O}_K$ are called \textit{units}. The mapping
$$\sigma:K\to\mathbb{C}^m$$
$$\sigma(x) = (\sigma_1(x),\ldots,\sigma_m(x))$$
is called the \textit{canonical embedding}. It takes $\mathcal{O}_K$ into a lattice in $\mathbb{C}^m$. Let $V$ be the volume of this lattice. The \textit{discriminant} of number field $K$ is given by $\Delta_K = (2^m V)^2$. 

Any ideal $\mathfrak{p} \subset \mathcal{O}_K$ can be decomposed as the product of prime ideals.  Let $p$ be a prime number and consider the decomposition 
$$p\mathcal{O}_K= \prod_{i=1}^g\mathfrak{p}_i^{e_i}.$$

We say that each $\mathfrak{p}_i$ is \textit{above} $p$. It follows that $\mathcal{O}_K/\mathfrak{p}_i \simeq \mathbb{F}_{p^l}$, for some $l$. When $g=2m$, $l=1$, and we say that $p$ \textit{splits}.

\begin{exe} Complex quadratic fields have the form $\mathbb{Q}(\sqrt{d}) = \left\{ a + b \sqrt{d}, a, b \in \mathbb{Q} \right\}$, where $d < 0$ is a square-free number. Their ring of integers is $\mathbb{Z}[\sqrt{d}]$ if $d \equiv 1 \pmod 4$ or $\mathbb{Z}[(-1+\sqrt{d})/2]$ if $d \equiv -1 \pmod 4$.
Adjoining $\sqrt{c} > 0$, where $c > 0$ is a real square-free number, the field 
$$\mathbb{Q}(\sqrt{c},\sqrt{d}) = \left\{ a_1 + a_2 \sqrt{c} + a_3 \sqrt{d} + a_4 \sqrt{dc} \right\}$$ is a totally complex extension of degree $4$ (quartic). These are called bi-quadratic number fields.
\label{ex:CM}
\end{exe}
\begin{exe}
	The special case $K=\mathbb{Q}(i,\sqrt{5})$ was previously used to construct the so-called Golden Code \cite{Golden} for transmission over a $2\times2$ MIMO channel. Let $\theta=(1+\sqrt{5})/2$. The ring of integers $\mathcal{O}_K\subset K$ is generated, as $\mathbb{Z}[i]$-module, by $\left\{1,\theta \right\}$. The prime $3$ splits into the product of two prime ideals, as can be seen by:
	$$3= ((i-1)\theta+1)(-(i+1)\theta+1).$$
	If $\mathfrak{p}=((i-1)\theta+1)\mathcal{O}_K$, then the quotient $\mathcal{O}_K/\mathfrak{p} \simeq \mathbb{F}_{3^2}$. The first prime that splits completely is $29$, namely
	$$29=(-i\theta+2)(-i\theta+i-2)(i\theta+2)(i\theta-i-2).$$
	The quotient of $\mathcal{O}_K$ by the ideal generated by any of its factors is isomorphic to $\mathbb{F}_{29}$.
	
	\label{ex:golden}
\end{exe}

\subsubsection{Construction A}\label{subsec:consACompound} Let $\mathfrak{p} \subset \mathcal{O}_K$ be a prime ideal above $p$, so that there exists an isomorphism $\phi:\mathcal{O}_K/\mathfrak{p}\to \mathbb{F}_{p^l}$. Denote by $\pi$ the canonical projection $\pi: \mathcal{O}_k \to \mathcal{O}_K/\mathfrak{p}$. We also use the ``overloaded'' notation $\pi$ and $\phi$ to denote the componentwise transformations applied to the cartesian products $\mathcal{O}_K^T$ and $(\mathcal{O}_K/\mathfrak{p})^T$.

Now let $\mathcal{C} \subset\mathbb{F}_{p^l}^T$ be a linear $(T,k)$-code, i.e, a subspace of $\mathbb{F}_{p^l}^T$ with dimension $k$. The $\mathcal{O}_K$-lattice associated to $\mathcal{C}$ is defined as the pre-image by $\phi \circ \pi$ of $\mathcal{C}$ ($\phi$ and $\pi$ are applied componentwise):
\begin{equation}\Lambda^{\mathcal{O}_K}(\mathcal{C}) = \pi^{-1}\circ \phi^{-1}(\mathcal{C}).
\end{equation}
If $\mathcal{C}$ is linear, $\Lambda^{\mathcal{O}_K}(\mathcal{C})$ is a lattice and $\Lambda^{\mathcal{O}_K}(\mathcal{C})/\mathfrak{p}^T\simeq \mathcal{C}$. The associated complex lattice $\Lambda(\mathcal{C})$ is obtained by applying (elementwise) the canonical embedding ${\sigma:K\to \mathbb{R}^{n}}$. It follows that an element $y=\sigma(\mathbf{x})$, with $\mathbf{x} \in \mathcal{O}_{K}^T$, belongs to  $\Lambda(\mathcal{C})$ if and only if $(\phi \circ \pi)( \mathbf{x} )\in \mathcal{C}$. The matrix form representation of a lattice point is:
\begin{equation}\mathbf{X}= \left( \begin{array}{cccc} \sigma_1(x_1) & \sigma_1(x_2) & \cdots & \sigma_1(x_T) \\ \sigma_2(x_{1}) & \sigma_2(x_{2}) & \cdots & \sigma_2(x_{T}) \\ \vdots & \vdots & \ddots & \vdots \\  \sigma_n(x_{1}) &\sigma_m(x_{2}) & \cdots & \sigma_m(x_{T})\end{array} \right).
\end{equation}

\begin{prop} A lattice $\Lambda(\mathcal{C})$ be a constructed as above has the following properties
\begin{enumerate}
\item $V(\Lambda(\mathcal{C})) = 2^{-mT} p^{{l(T-k)}} (\sqrt{\Delta_K})^T$
\item If $u$ is a unit in $\mathcal{O}_K$ and $\mathbf{U} = \text{\upshape{diag}}(\sigma_1(u),\ldots,\sigma_m(u))$, then, in matrix form, $\mathbf{U}\Lambda = \Lambda$.
\end{enumerate}
\label{prop:VolumeAndUnit}
\end{prop}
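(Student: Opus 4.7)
The plan is to handle the two claims separately, both of which reduce to combining one algebraic identity with an elementary counting argument. I would not attempt a simultaneous argument; the volume is a quotient/index computation, whereas unit-invariance is a ring-theoretic closure property.

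For part (1), the strategy is the multiplicative chain of indices. First, I would recall that the canonical embedding $\sigma:K\to \mathbb{C}^m$ takes $\mathcal{O}_K$ to a full-rank lattice whose real equivalent has volume $2^{-m}\sqrt{|\Delta_K|}$, which is exactly the relation $\Delta_K=(2^m V)^2$ stated right before the proposition. Taking $T$ independent copies gives $V(\sigma(\mathcal{O}_K)^T)=2^{-mT}(\sqrt{|\Delta_K|})^T$. Next, $\Lambda^{\mathcal{O}_K}(\mathcal{C})=\pi^{-1}\phi^{-1}(\mathcal{C})$ is a sublattice of $\mathcal{O}_K^T$ containing $\mathfrak{p}^T$, with quotient isomorphic to $\mathcal{C}$, so the index satisfies
\[
[\mathcal{O}_K^T:\Lambda^{\mathcal{O}_K}(\mathcal{C})]=\frac{|\mathbb{F}_{p^l}^T|}{|\mathcal{C}|}=p^{l(T-k)}.
\]
Multiplying the index by the fundamental volume of $\sigma(\mathcal{O}_K)^T$ yields the claimed formula.

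For part (2), I would argue by transport through the canonical embedding. The matrix form of $\mathbf{x}=(x_1,\ldots,x_T)\in\mathcal{O}_K^T$ has $(i,j)$-entry $\sigma_i(x_j)$, hence left-multiplication by $\mathbf{U}=\operatorname{diag}(\sigma_1(u),\ldots,\sigma_m(u))$ gives $(i,j)$-entry $\sigma_i(u)\sigma_i(x_j)=\sigma_i(u x_j)$ because each $\sigma_i$ is a ring homomorphism. Therefore $\mathbf{U}\mathbf{X}$ is the matrix form of $u\mathbf{x}=(ux_1,\ldots,ux_T)$, which lies in $\mathcal{O}_K^T$ since $\mathcal{O}_K$ is a ring. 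To check the codeword condition, apply the ring homomorphism $\phi\circ\pi$ componentwise:
\[
\phi(\pi(u\mathbf{x}))=\phi(\pi(u))\,\phi(\pi(\mathbf{x}))\in\mathbb{F}_{p^l}\cdot\mathcal{C}\subseteq\mathcal{C},
\]
where the last inclusion uses that $\mathcal{C}$ is $\mathbb{F}_{p^l}$-linear and hence closed under scalar multiplication. Thus $\mathbf{U}\Lambda\subseteq\Lambda$. Reversing the argument with $u^{-1}\in\mathcal{O}_K$ (which exists precisely because $u$ is a unit) yields $\mathbf{U}^{-1}\Lambda\subseteq\Lambda$, so equality holds.

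I do not expect a real obstacle. The only minor subtlety is bookkeeping: one must confirm that the discriminant convention adopted in the paper matches the standard one for totally complex fields, and that the componentwise action of $\sigma$ on $\mathcal{O}_K^T$ really does produce a product lattice of volume $V(\sigma(\mathcal{O}_K))^T$. Both are immediate once the canonical embedding is applied coordinate-by-coordinate in the matrix representation.
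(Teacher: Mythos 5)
Your proof is correct, and since the paper states Proposition \ref{prop:VolumeAndUnit} without an explicit proof (deferring to the standard Construction~A literature it cites), your argument — the index chain $[\mathcal{O}_K^T:\Lambda^{\mathcal{O}_K}(\mathcal{C})]=p^{l(T-k)}$ combined with $V(\sigma(\mathcal{O}_K))=2^{-m}\sqrt{\Delta_K}$ for part (1), and transport of $u$ through the row-wise embeddings plus $\mathbb{F}_{p^l}$-linearity of $\mathcal{C}$ and invertibility of $u$ for part (2) — is precisely the standard justification the paper implicitly relies on. Note that with the paper's own convention $\Delta_K=(2^mV)^2$ the discriminant bookkeeping you flag is automatic, so there is no remaining gap.
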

%
%

Property 2) says that Generalized Construction A lattices are closed by multiplication by units, i.e., if $u$ is a unit in $\mathcal{O}_k$, then $u \Lambda^{\mathcal{O}_K}(\mathcal{C}) = \Lambda^{\mathcal{O}_K}(\mathcal{C})$. This is a crucial property for proving the compactification property.

It is proven in Appendix \ref{app:numberFields}, following steps of \cite{Loeliger} and \cite[Appendix B]{Adaptative}, that the set of such lattices satisfies, asymptotically, the Minkowski-Hlawka theorem, as $p \to \infty$. More formally, let $\lambda > 0$ be a scaling factor and $\alpha= (\lambda^{-1} 2^{-mT} p^{{l(T-k)}} \sqrt{\Delta_K}^T)^{1/2mT}$. Consider a bounded function $f:\mathbb{R}^{2mT} \to \mathbb{R}$ with compact support. The ensemble
\begin{equation}\begin{split}
\mathbb{L}_{K,T,k,p,\lambda} = \left\{ \frac{1}{\alpha} \Lambda_K(\mathcal{C}) : \mathcal{C} \mbox{ is a }(T,k,p)\mbox{ code }\right\}
\end{split}
\label{eq:ensemble}
\end{equation}
satisfies
\begin{equation}\lim_{p \to \infty} E_{\mathbb{L}_{K,T,k,p,\lambda}} \left[ \sum_{\xx \in \Lambda(\mathcal{C})\backslash\left\{\mathbf{0}\right\} }  f(\psi(\xx)) \right] = \lambda \int_{\mathbb{R}^{2mT}} f(\mathbf{x}) \mbox{d} \mathbf{x},
\label{eq:ensembleAverage}
\end{equation}
where we recall that $\psi(\xx) = (\Re(\xx), \Im(\xx))$ is the standard identification between $\mathbb{C}^{mT}$ and $\mathbb{R}^{2mT}$. All lattices in the ensemble have volume $1/\lambda$.

\subsubsection{Quantizing the Channel Coefficients}
We use the group of units of $\mathcal{O}_K$ to quantize the channel coefficients for the compound block-fading channel. The main tool is the group of units and the invariance property given in Proposition \ref{prop:VolumeAndUnit}.(ii). In the block-fading channel, $m = n$ and the matrix $\HH$ is diagonal. We define
\begin{equation}
\mathbb{H}_{\infty}^\text{BF} \triangleq \left\{\HH \in \mathbb{R}^{n\times n}: \HH \mbox{ is diagonal and } \det \HH^{\dagger}\HH= D \right\}
\label{eq:BFcompoundModel}
\end{equation}
and $\tilde{\mathbb{H}}_{\infty}^\text{BF} =\mathbb{H}_{\infty}^\text{BF}/D^{1/2n}$. Let $\mathbf{U} = \mbox{diag}(\sigma_1(u),\ldots,\sigma_n(u) )$ be the diagonal matrix corresponding to the embedding of a unit. Let $\mathcal{U}$ be the set of all possible matrices $\mathbf{U}$. For a normalized channel matrix $\tilde{\mathbf{H}}\in \tilde{\mathbb{H}}_{\infty}^\text{BF}$, we define
\begin{equation}\mathbf{U}_{\tilde{\HH}} = {\underset{\mathbf{U} \in \mathcal{U}}{\arg \min}}\left\| \tilde{\mathbf{H}} \mathbf{U}^{-1}\right\|,
\end{equation}
with ties broken in a systematic manner. The association $\mathbf{\tilde{\HH}} \to \mathbf{U}_{\tilde{\HH}}$ defines an equivalence relation. By quotienting $\tilde{\mathbb{H}}_\infty$ by this relation, we obtain the equivalence classes associated to the error matrices $\mathbf{E}_H = \tilde{\HH }\UU_{\tilde{\HH}}^{-1}$. Let

\begin{equation}\mathcal{E} = \left\{\mathbf{E}_{\tilde{\HH}} = \tilde{\HH }\UU_{\tilde{\HH}}^{-1}: \tilde{\HH} \in \tilde{\mathbb{H}} \right\}
\label{eq:errorMatrix}
\end{equation}
be the set of all the possible error matrices. In what follows we argue that $\mathcal{E}$ is compact and thus the ensemble $\mathbb{L}_{K, T,k,p,\lambda} $ (Eq. \eqref{eq:ensemble}) compactifies $\tilde{\mathbb{H}}_{\infty}^{\text{BF}}$ as in Definition \ref{def:compacify}. We also provide bounds on length of the elements of $\mathcal{E}$. First recall that Dirichlet's Unit Theorem (e.g. \cite[Thm 7.3]{Algebraic1}) states the existence of $u_1,\ldots,u_{n-1}$ fundamental units such that any unit in $\mathcal{O}_K$ can be written as 
\begin{equation}u = \zeta \prod_{i=1}^{n-1} u_i^{k_i}\mbox{, where } k_i \in \mathbb{Z} \mbox{ and } \zeta \mbox{ is a root of unit.}
\end{equation}
%
This implies that the group of units, under the transformation
\begin{equation}\begin{split} \ell(u) = (\log |\sigma_1(u)|^2,\ldots,\log |\sigma_n(u)|^2)
\end{split}
\end{equation}
is an $(n-1)$-dimensional lattice in $\mathbb{R}^{n}$, contained in the hyperplane orthogonal to the vector $(1,\ldots, 1)$. The volume of this lattice, referred to as \textit{logarithmic lattice}, is called the \textit{regulator} of $K$. 

 \begin{thm} For any channel matrix $\HH$, there exists $\UU =  \mbox{diag}(\sigma_1(u),\ldots,\sigma_n(u) )$ such that 
 $$\left\| \tilde{\HH} \UU^{-1} \right\| \leq \sqrt{n} e^{\rho}, $$
 where $\rho$ is the packing radius of the logarithmic lattice.
 \label{thm:boundLogLattice}
 \end{thm}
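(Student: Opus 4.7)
The plan is to recast the theorem as a closest-vector problem in the logarithmic lattice of $\mathcal{O}_K^{\times}$ and then read off the Frobenius bound componentwise.

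First, I would normalize and pass to logarithms. Setting $\tilde{\HH} = \HH/D^{1/(2n)}$ and writing $\tilde h_1,\dots,\tilde h_n$ for its diagonal entries gives $\prod_i |\tilde h_i|^2 = \det \tilde{\HH}^\dagger \tilde{\HH} = 1$, so the log-channel vector
\[
\mathbf{h} \;:=\; \bigl(\log|\tilde h_1|^2,\dots,\log|\tilde h_n|^2\bigr) \in \mathbb{R}^n
\]
satisfies $\sum_i h_i = 0$. Hence $\mathbf{h}$ lies in the ``trace-zero'' hyperplane $H_0 = \{x \in \mathbb{R}^n : \sum_i x_i = 0\}$. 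Since every unit $u$ obeys $|N(u)| = \prod_i |\sigma_i(u)| = 1$, the logarithmic embedding $\ell$ also maps $\mathcal{O}_K^{\times}$ into $H_0$. Dirichlet's unit theorem, specialized to a totally complex number field of degree $2n$ (so $r_1=0$, $r_2=n$), guarantees that $\ell(\mathcal{O}_K^{\times})$ is a lattice of full rank $n-1$ inside $H_0$ — exactly the ``logarithmic lattice'' whose packing radius $\rho$ appears in the statement.

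The key step is a nearest-vector argument in $H_0$. By definition of $\rho$, I would pick a unit $u$ whose logarithmic image is within the appropriate distance of $\mathbf{h}$ in the Euclidean metric of $H_0$. Componentwise this yields
\[
\Bigl|\log \tfrac{|\tilde h_i|^2}{|\sigma_i(u)|^2}\Bigr| \;\le\; \bigl\|\mathbf{h}-\ell(u)\bigr\|,
\qquad \text{hence}\qquad
\frac{|\tilde h_i|^2}{|\sigma_i(u)|^2} \;\le\; e^{2\rho}.
\]
Since $\tilde{\HH}\UU^{-1}$ is diagonal with entries $\tilde h_i/\sigma_i(u)$, summing over $i$ gives
\[
\bigl\|\tilde{\HH}\UU^{-1}\bigr\|^2 \;=\; \sum_{i=1}^n \frac{|\tilde h_i|^2}{|\sigma_i(u)|^2} \;\le\; n\, e^{2\rho},
\]
and a square root produces $\|\tilde{\HH}\UU^{-1}\|\le \sqrt{n}\,e^{\rho}$ as claimed.

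I expect the main technical subtlety to be the geometric step bounding $\|\mathbf{h}-\ell(u)\|$ in terms of $\rho$: what is really being used is the covering radius of $\ell(\mathcal{O}_K^{\times})$ in $H_0$, so some care is needed either in identifying the two quantities or in interpreting $\rho$ as controlling the relevant Voronoi outradius (and in tracking the factor of two coming from the $|\sigma_i|^2$ vs.\ $|\sigma_i|$ convention in $\ell$). The remaining ingredients — the trace-zero reduction, Dirichlet's unit theorem, and the passage from a log-distance to a ratio bound — are routine once the hyperplane $H_0$ and the embedding $\ell$ have been aligned.
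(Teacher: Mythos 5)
Your proposal is correct and follows essentially the same route as the paper: pass to the log-vector $(\log|\tilde h_i|^2)_i$, take the closest point of the logarithmic unit lattice, bound each component by the distance, and exponentiate and sum to get $\sqrt{n}\,e^{\rho}$. Your observation that the relevant quantity is the covering radius (the statement's ``packing radius'' being a slip) is exactly what the paper's own proof uses.
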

 \begin{proof} Write the magnitudes of the diagonal elements of $\tilde{\HH}$ in vector form as $\tilde{\mathbf{h}} = (\log|\tilde{h}_1|^2,\ldots, \log|\tilde{h}_n|^2)$. Let $\mathbf{v} = (\log|\sigma_1(u)|^2,\ldots, \log|\sigma_n(u)|^2)$ be the closest point in the logarithm lattice to $\tilde{\mathbf{h}}$. Let $\rho$ be the covering radius of the logarithmic lattice. We have $(\log|\tilde{h}_i| - \log|\sigma_i(u)|) \leq  \left\|\tilde{\mathbf{h}}-\mathbf{v} \right\| \leq \rho$, therefore
 \begin{equation}\begin{split}
\sum_{i=1}^n |\tilde{h}_{i}|^2 |\sigma_i(u)^{-1}|^2 = \sum_{i=1}^n e^{( \log| \tilde{h}_i|^2- \log|\sigma_i(u)|^2)} \leq n e^{2\rho}.
\end{split}
\label{eq:rhoLogLattice}
\end{equation}
 \end{proof}
 \begin{rmk} If $n = 2$ (quartic extension), the logarithmic lattice is one-dimensional, therefore $\rho = R_K/2$, where $R_K$ is its regulator. In this case, a tight estimate for the norm of the error is
$$\left\| \tilde{\HH} \UU^{-1} \right\| \leq \sqrt{2 \cosh(\rho)},$$
which is achieved when $\tilde{h}_1 = R_K/2$.
 \end{rmk}
\begin{cor} The ensemble $\mathbb{L}_{K, T,k,p,\lambda}$ (Eq. \eqref{eq:ensemble}) compactifies $\tilde{\mathbb{H}}_{\infty}^{\text{BF}}$, in the sense of Definition \ref{def:compacify}.
\end{cor}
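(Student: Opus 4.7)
The plan is to verify the two defining properties of Definition \ref{def:compacify} separately; both have essentially been set up by preceding results in this subsection, so the corollary amounts to assembling them correctly.

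First I would dispatch property (i). Since $\UU_{\tilde{\HH}}$ was defined as the argmin of $\|\tilde{\HH}\UU^{-1}\|$ over the set $\mathcal{U}$ of diagonal matrices coming from embedded units, Theorem \ref{thm:boundLogLattice} applied to the normalised channel $\tilde{\HH}$ gives $\|\EE_{\tilde{\HH}}\| = \|\tilde{\HH}\UU_{\tilde{\HH}}^{-1}\| \leq \sqrt{n}\,e^{\rho}$, where $\rho$ is the covering radius of the logarithmic lattice of $\mathcal{O}_K^{\times}$. The constant $\alpha := \sqrt{n}\,e^{\rho}$ depends only on $K$, hence is universal in the sense required. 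This is essentially a one-line invocation.

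Next I would handle property (ii) by showing that every lattice $\frac{1}{\alpha}\Lambda^{\mathcal{O}_K}(\mathcal{C})$ in the ensemble $\mathbb{L}_{K,T,k,p,\lambda}$ is \emph{individually} fixed by $\UU_{\tilde{\HH}}$, from which invariance of the whole ensemble is immediate. By construction $\UU_{\tilde{\HH}} = \mathrm{diag}(\sigma_1(u),\ldots,\sigma_n(u))$ for some unit $u \in \mathcal{O}_K^{\times}$, and left multiplication on a matrix-form codeword sends the entry $\sigma_j(x_i)$ to $\sigma_j(u)\sigma_j(x_i) = \sigma_j(u x_i)$. Thus the action is simply scalar multiplication by $u$ on $\mathcal{O}_K^T$. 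The compatibility $\pi(u x_i) = \pi(u)\pi(x_i)$ together with the stability of the $\mathbb{F}_{p^l}$-linear code $\mathcal{C}$ under scalar multiplication by $\pi(u) \in \mathbb{F}_{p^l}^{\times}$ gives $u\cdot\Lambda^{\mathcal{O}_K}(\mathcal{C}) = \Lambda^{\mathcal{O}_K}(\mathcal{C})$, which is exactly Proposition \ref{prop:VolumeAndUnit}(ii). The global scaling by $1/\alpha$ commutes with $\UU_{\tilde{\HH}}$, so the invariance carries over to every element of the ensemble.

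The only delicate point, rather than a real obstacle, is checking that the argmin defining $\UU_{\tilde{\HH}}$ is attained. Because the image of $\mathcal{O}_K^{\times}$ under $\ell$ is a genuine $(n-1)$-dimensional lattice in $\mathbb{R}^n$ by Dirichlet's unit theorem, the candidate set $\mathcal{U}$ is discrete, a minimiser exists, and ties are broken systematically as prescribed just after \eqref{eq:errorMatrix}. With this in place the corollary is purely bookkeeping on top of Theorem \ref{thm:boundLogLattice} and Proposition \ref{prop:VolumeAndUnit}.
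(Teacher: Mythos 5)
Your proposal is correct and follows essentially the same route as the paper: property (i) is exactly Theorem \ref{thm:boundLogLattice} with $\alpha=\sqrt{n}\,e^{\rho}$ (covering radius of the logarithmic lattice, as in the theorem's proof), and property (ii) is Proposition \ref{prop:VolumeAndUnit}(ii), since multiplication by a unit acts coordinatewise on $\mathcal{O}_K^T$ and fixes each normalised Construction-A lattice. The paper treats the corollary as an immediate consequence of these two results, so your assembly, including the observation that the ensemble normalisation commutes with the diagonal unit matrices, matches its intent.
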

\begin{exe}For the sake of exemplification, consider a real fading channel and the totally real number field $K = \mathbb{Q}[\sqrt{5}]$, so that $\mathcal{O}_K=\mathbb{Z}[\phi],$ where $\phi = (1+\sqrt{5})/2$ is the Golden ratio. The units of $\mathcal{O}_K$ are of the form $\pm \phi^k$, $k \in \mathbb{Z}$, and its embeddings in $\mathbb{R}^2$ are the blue dots depicted in Figure \ref{fig:1}. After normalization, the channel realizations $h_1,h_2$ lie in the hyperbola $h_1 h_2 = 1$. Any realization $(h_1,h_2)$ can be taken, by multiplication by an appropriate unit, to a bounded fundamental domain. This way, ill-conditioned channel realizations can be ``absorbed'' by the group of units.

\begin{figure}
\centering
\includegraphics[scale=0.4]{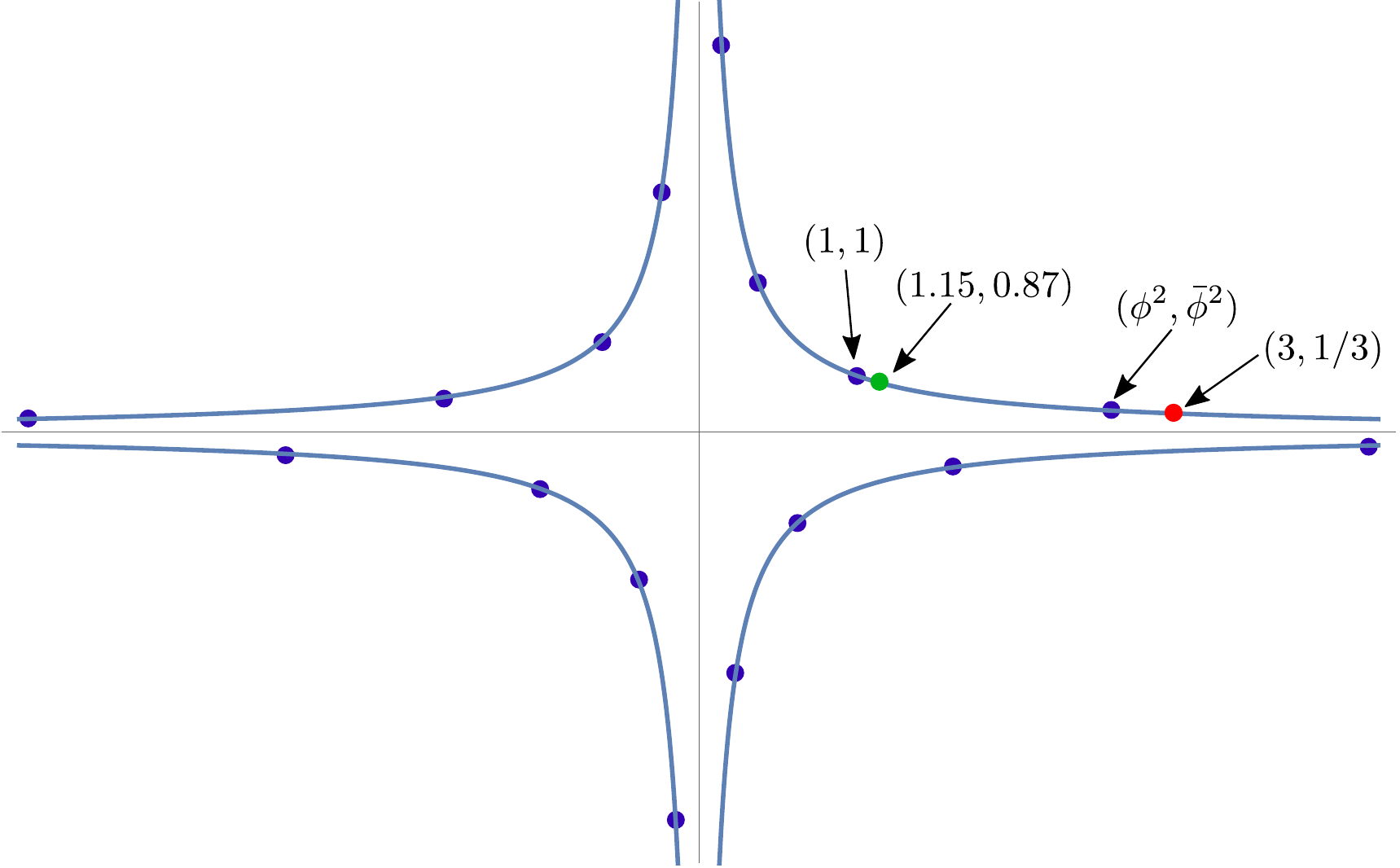}
\label{fig:1}
\caption{Handling an ill-conditioned channel realization by the quantization of the channel space}
\end{figure}
\label{ex:Q5}
\end{exe} 

\subsection{Division Algebras}
The extension to the MIMO case entails a construction based on division algebras. In this subsection, we follow closely the notation and construction of \cite[Sec. V]{ConsAOggier}. 

\subsubsection{Relative Extensions}\label{subsec:relativeExtensions}Consider the field extension $K/\mathbb{Q}(i)$ of relative degree $m$ (i.e., the absolute extension $K/\mathbb{Q}(i)/\mathbb{Q}$ has absolute degree $2m$). Suppose that the Galois group of $K/\mathbb{Q}(i)$ is cyclic. This means that the $m$ embeddings that fix $\mathbb{Q}(i)$ can be generated by one element, say, $\beta$. Let 
$$\sigma(x) = (\alpha^0(x) = x,\beta(x),\beta^2(x),\ldots,\beta^{m-1})$$
 be the cannonical embedding of an element in $\mathbb{Q}(i)$. 


\subsubsection{Algebras}
A cyclic division algebra $\mathcal{A}=K\oplus eK\oplus\cdots\oplus e^{m-1}K$, denoted by $\mathcal{D} = (K/\mathbb{Q}(i), \beta, \gamma)$, is the algebra of all elements  $a=x_0 + x_1 e + \ldots + x_{m-1} e^{m-1}$, where $e$ is an element such that $e^{n} = \gamma$, and $\gamma \in \mathbb{Z}[i]$ is an element which is \textit{not} a (relative) norm in $K/\mathbb{Q}(i)$. Multiplication is done by the rule $xe = e\beta(x)$. An element $a$ in the division algebra can be represented in matrix form as
\begin{equation}
\XX_a= \left(\begin{array}{ccccc}
x_0 & \gamma \beta(x_{m-1}) & \gamma \beta^2(x_{m-2}) & \ldots & \gamma\beta^{m-1}(x_1)\\
x_1 & \beta(x_0) & \gamma \beta^2(x_{m-1}) & \ldots & \gamma\beta^{m-1}(x_2) \\
\vdots & \vdots & \ddots & \vdots & \vdots\\
x_{m-1} & \beta(x_{m-2}) & \ldots & \ldots & \beta^{m-1}(x_0)
\end{array}\right)
\end{equation}
and multiplication corresponds to the standard matrix multiplication. From now on, we consider this representation.

Lattices from cyclic algebras are constructing using orders. The set $\Lambda = \mathcal{O}_K\oplus e\mathcal{O}_K \oplus\cdots\oplus e^{m-1}\mathcal{O}_K$ is called the \textit{natural order} of $\mathcal{A}$. Let $p$ be a prime that splits in $\mathbb{Z}[i]$, and keep the notation as in Section \ref{subsec:relativeExtensions}. Let 
$\mathbb{F}_p^{m\times m}$ be the set of all $m \times m$ matrices with entries in $\mathbb{F}_p$. With a small abuse of notation, define the reduction 
$$(\phi \circ \pi) : \Lambda \to \mathbb{F}_p^{m\times m},$$ 
$$(\phi \circ \pi)(a) = \phi\circ\pi(\mathbf{X}_a),$$
to be the componentwise reduction in $\mathbf{X}_a$.

Now extend all mappings to $T$-uples of elements, i.e,
$$(\phi \circ \pi)(a_1,\ldots,a_T) = ((\phi \circ \pi)(\mathbf{X}_{a_1}),(\phi \circ \pi)(\mathbf{X}_{a_2}),\ldots,(\phi \circ \pi)(\mathbf{X}_{a_T})).$$

\begin{defi} A \textit{linear} code $\mathcal{C}$ over the matrix ring $\mathbb{F}_p^{m\times m}$ with length $T$ is a subset of $(\mathbb{F}_p^{m \times m})^T$ which is closed under addition.
\end{defi}

 Let $\mathcal{C}$ be a code in $(\mathbb{F}_p^{m\times m})^T$. Then $\Lambda_p(\mathcal{C}) = \beta^{-1}(\mathcal{C})$. If the code is linear, $\Lambda_p(\mathcal{C})$ is a lattice (with complex equivalent in $\mathbb{C}^{t^2m}$). The volume of the equivalent (vectorized) lattice in $\mathbb{C}^{t^2m}$ is given by $\det \Lambda_p(\mathcal{C}) = |C|^{-1} p^{t^2m} (2^{-m}\gamma^{m(m-1)/2} \sqrt{\Delta_K^t})^m$.



In Appendix \ref{subsec:DivisionAppendix}, we prove that there exists a good ensemble of lattices from the aforementioned construction. In addition, in \cite[Thm. 1, p. 214]{Kleinert} it is proven that it compacifies the space $\mathbb{H}_{\infty}$. Therefore, the ensemble achieves the capacity of the infinite model.

\section{Decoupling Technique}
\label{sec:decoupled}
In order to recover the sent lattice point in block-fading and MIMO channels, the receiver usually performs a universal lattice decoder (such as the sphere decoder). This is due to the fact that, even if the coding lattice $\Lambda$ is well-structured and has a good decoding algorithm, the channel realization $\HH$ is arbitrary, forcing the receiver to decode in the modified lattice $\HH \Lambda$, which increases significantly the decoding complexity. 

To overcome this problem, Ordentlich and Erez \cite{Ordentlich15} consider the notion of \textit{decoupling}. A decoupled decoder first handles the channel realization $\HH$ and then decodes using the lattice-decoding algorithm in $\Lambda$ itself. This process is sub-optimal and produces a gap to capacity. As long as the gap is constant (as is the case in \cite{Ordentlich15}), decoupling can be an interesting alternative in the high SNR regime. In what follows, we show that the algebraic ensembles defined in the previous section allow for decoupling and calculate the gap to capacity. This technique appeared previously in the literature, in practical modulation schemes for the Rayleigh fading channel \cite{GhayaViterboJC} and for $2 \times 2$ MIMO channels in \cite{OthmanLuzziBelfiore10}.

Consider the received signal $\yy = \mathcal{H} \xx+ \zz$ and the MMSE filtering as in Section \ref{sec:finalScheme}. Notice again that $\RR$ is block-diagonal and $\det(\RR^\dagger \RR) = \rho^{-mT}e^{TC}$, where $C$ is the capacity of the compound channel. Let $D=\rho^{1/2}e^{-C/2m}$, and $\tilde{\RR} = D \RR$, so that every component in the matrix-diagonal of $\tilde{\RR}^\dagger \tilde{\RR}$ is a matrix with unit absolute determinant. Decoding consists in essentially three steps.

\begin{enumerate}
	\item Filtering: Apply a filtering matrix $\FF$ to obtain $\FF\yy = \RR\xx + \ww_{\text{eff}}$, where $\ww_{\text{eff}}$ is the effective noise.
	\item Equalization: Find $\EE_{\tilde{\HH}}$ as in Definition \ref{def:compacify} such that $\EE_{\tilde{\HH}} \UU = \tilde{\RR}$.
	\item Lattice decoding: Let $\tilde{\yy} = D^{-1} \EE_{\tilde{\HH}}^{-1} \FF\yy$. Find
	$$\arg \min_{\tilde{\xx} \in \Lambda} \left\|  \tilde{\yy}  -\tilde{\xx} \right\|$$
	and set $\xx = \UU^{-1} \tilde{\xx}$.
\end{enumerate}
%
Let $\tilde{\ww}_{\text{eff}} = D^{-1} \EE_{\tilde{\HH}}^{-1}\ww_{\text{eff}}$. The probability of error of such decoder is given by $P(\tilde{\ww}_{\text{eff}} \notin \mathcal{V}_{\Lambda}).$ Now if the decoder takes into consideration the correlations in $\EE_{\tilde{\HH}}^{-1}$, nothing is gained with the equalization task. The key observation is that the receiver can \textit{ignore} the correlations by performing lattice decoding in a slightly worse channel. This is formalized in the next proposition.
\
\begin{lemma}
Suppose that $\left\| \EE_{\tilde{\HH}}^{-1} \right\| \leq \alpha$, similarly to Definition \ref{def:compacify}. Then $\EE_{\tilde{\HH}}^{-1}\ww_{\text{\upshape{eff}}}$ is sub-Gaussian with parameter $\alpha\sigma_w$.
\end{lemma}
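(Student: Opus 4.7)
The plan is to reuse the multivariate sub-Gaussian moment generating function (MGF) bound that was \emph{implicitly} established in the proof of Lemma \ref{lem:sub-Gaussian}. That proof actually showed something stronger than mere sub-Gaussianity of the one-dimensional marginals: it produced, for every complex vector $\mathbf{t}$ of arbitrary length, the inequality
\[
E\bigl[e^{\Re(\mathbf{t}^{\dagger}\ww_{\text{eff}})}\bigr] \leq e^{\sigma_w^2\|\mathbf{t}\|^2/4}.
\]
So the natural thing to do is to plug $\mathbf{t} = t(\EE_{\tilde{\HH}}^{-1})^{\dagger}\mathbf{u}$ into this inequality, where $\mathbf{u}$ is an arbitrary unit vector and $t\in\mathbb{R}$, and read off the MGF bound for the marginals of $\EE_{\tilde{\HH}}^{-1}\ww_{\text{eff}}$.

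More concretely, for any unit $\mathbf{u}$ I would write
\[
E\bigl[e^{\Re(t\mathbf{u}^{\dagger}\EE_{\tilde{\HH}}^{-1}\ww_{\text{eff}})}\bigr]
= E\bigl[e^{\Re((t(\EE_{\tilde{\HH}}^{-1})^{\dagger}\mathbf{u})^{\dagger}\ww_{\text{eff}})}\bigr]
\leq e^{\sigma_w^2\,t^2\,\|(\EE_{\tilde{\HH}}^{-1})^{\dagger}\mathbf{u}\|^2/4}.
\]
The norm in the exponent is controlled by the operator norm of $\EE_{\tilde{\HH}}^{-1}$, which in turn is bounded by its Frobenius norm: $\|(\EE_{\tilde{\HH}}^{-1})^{\dagger}\mathbf{u}\| \leq \|\EE_{\tilde{\HH}}^{-1}\|_{\mathrm{op}} \leq \|\EE_{\tilde{\HH}}^{-1}\| \leq \alpha$. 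Substituting gives the bound $e^{(\alpha\sigma_w)^2 t^2/4}$, which is precisely the complex sub-Gaussian MGF bound with parameter $\alpha\sigma_w$ under the normalization convention adopted earlier (real equivalent sub-Gaussian with parameter $\alpha\sigma_w/\sqrt{2}$).

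The argument is essentially routine; there is no real obstacle beyond bookkeeping. The only subtlety worth noting is the distinction between operator and Frobenius norms: Definition \ref{def:compacify} and the hypothesis of the lemma use the Frobenius norm, while what naturally appears when bounding $\|(\EE_{\tilde{\HH}}^{-1})^{\dagger}\mathbf{u}\|$ is the operator norm, so one must invoke the standard inequality $\|\cdot\|_{\mathrm{op}} \leq \|\cdot\|_F$ to close the loop. Equally, one should be mindful of the factor $1/\sqrt{2}$ convention relating complex and real sub-Gaussian parameters, so that the claimed parameter $\alpha\sigma_w$ is stated consistently with the definition introduced before Lemma \ref{lem:sub-Gaussian}.
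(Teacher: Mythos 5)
Your proposal is correct and follows essentially the same route as the paper: the paper's proof also reduces to the MGF bound of Lemma \ref{lem:sub-Gaussian} (by factoring $t\uu^{\dagger}\EE_{\tilde{\HH}}^{-1}\ww_{\text{eff}}$ as $t\|\EE_{\tilde{\HH}}^{-\dagger}\uu\|$ times a unit-vector marginal, which is just another way of invoking the arbitrary-$\mathbf{t}$ bound you cite) and then applies the same chain $\|\EE_{\tilde{\HH}}^{-\dagger}\uu\| \leq \|\EE_{\tilde{\HH}}^{-1}\|_{\mathrm{op}} \leq \|\EE_{\tilde{\HH}}^{-1}\|_F \leq \alpha$. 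Your remarks on the Frobenius-versus-operator norm and the complex/real $1/\sqrt{2}$ convention match the paper's treatment.
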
 
\begin{proof}
	From Lemma \ref{lem:sub-Gaussian}, for a unitary vector $\uu$ we have
	\begin{equation*}
	\begin{split}
	E\left[e^{\Re \left( t\uu^\dagger  \EE_{\tilde{\HH}}^{-1} \ww_{\text{\upshape{eff}}}\right)}\right] &= E\left[e^{\Re \left( t\left\|\EE_{\tilde{\HH}}^{-\dagger} \uu\right\| \left(\frac{\uu^\dagger}{\left\|\EE_{\tilde{\HH}}^{-\dagger} \uu\right\|}  \EE_{\tilde{\HH}}^{-1} \ww_{\text{\upshape{eff}}}\right)\right)}\right]\\
	& \leq e^{\sigma_w^2 t^2 \left\|\EE_{\tilde{\HH}}^{-\dagger} \uu\right\|^2/4} \leq e^{\sigma_w^2 t^2 \left\|\EE_{\tilde{\HH}}^{-\dagger}\right\|^2/4} \leq e^{\alpha^2 \sigma_w^2 t^2/4},
	\end{split}
	\end{equation*}
	where $\left\| . \right\|$ denotes the Frobenius matrix norm, which upper bounds the operator $2$-norm.
	\end{proof}

Now from a Minkowski-Hlawka ensemble, the probability of error can vanish as long as

\begin{equation*}
\frac{D V(\Lambda)^{1/mT}}{\sigma_w^2 \alpha^2} \geq (\pi e).
\end{equation*}
and rates up to $R = m\log(\pi e \sigma_s^2) - \frac{1}{T} \log(V(\Lambda)) = C - 2 \log \alpha$ are achievable.
\begin{thm} The gap to the capacity of the decoupled decoder with notation as above is upper bounded by $2 \log \alpha$ nats per channel use.
	\label{thm:gap-to-Capacity}
\end{thm}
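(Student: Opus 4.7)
The plan is to assemble three pieces already established earlier in the paper. First, I would write down the equivalent channel seen by the decoupled lattice decoder. After MMSE filtering $\FF\yy = \RR\xx + \ww_{\eff}$ and equalization $\tilde{\yy} = D^{-1}\EE_{\tilde{\HH}}^{-1}\FF\yy$, the defining relation $\EE_{\tilde{\HH}}\UU = \tilde{\RR} = D\RR$ expresses $\tilde{\yy}$ as a scaled codeword $\UU\xx$ (up to the harmless factor $D^{-2}$) corrupted by the equivalent noise $\tilde{\ww}_{\eff} = D^{-1}\EE_{\tilde{\HH}}^{-1}\ww_{\eff}$. Crucially, $\UU\xx\in\Lambda$ by the unit-invariance built into the algebraic Construction-A ensemble (Proposition \ref{prop:VolumeAndUnit}(2) in the block-fading case and its division-algebra analogue in Section \ref{sec:consA}), so the receiver's lattice decoder truly operates in $\Lambda$ itself, independently of $\HH$, and the decoding error event is $\{\tilde{\ww}_{\eff}\notin\mathcal{V}_\Lambda\}$.

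Next, I would combine the two sub-Gaussianity estimates already in the text: Lemma \ref{lem:sub-Gaussian} gives that $\ww_{\eff}$ is sub-Gaussian with parameter $\sigma_w$, and the lemma immediately preceding this theorem amplifies that to parameter $\alpha\sigma_w$ for $\EE_{\tilde{\HH}}^{-1}\ww_{\eff}$; hence $\tilde{\ww}_{\eff}$ is sub-Gaussian with parameter proportional to $\alpha\sigma_w$. Because sub-Gaussian vectors obey exactly the Chernoff tail bound of a true Gaussian of the same parameter, the concentration-plus-union-bound-plus-channel-quantization argument used in the proof of Theorem \ref{thm:infiniteLimit} transfers verbatim, with this inflated noise replacing $\ww$. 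Averaging the error probability over the Minkowski-Hlawka ensemble $\mathbb{L}_{K,T,k,p,\lambda}$ of \eqref{eq:ensemble}, which compactifies $\tilde{\mathbb{H}}_{\infty}$ in the sense of Definition \ref{def:compacify}, one concludes that the average error vanishes as $T\to\infty$ provided the VNR threshold $D V(\Lambda)^{1/mT}/(\sigma_w^2\alpha^2)\ge \pi e$ holds.

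Finally, I would convert this volume bound into a rate via the shaping computation of Section \ref{sec:finalScheme}: with a lattice-Gaussian input of parameter $\sigma_s$, the transmitted rate per channel use is $R = m\log(\pi e\sigma_s^2) - \tfrac{1}{T}\log V(\Lambda)$. Substituting the VNR threshold together with $D = \rho^{1/2}e^{-C/2m}$ and $\rho = \sigma_s^2/\sigma_w^2$, the expression for $R$ collapses to $C - 2\log\alpha$, which is the claimed gap.

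The main difficulty I expect is bookkeeping rather than a genuinely new idea: one must carefully track the scalings by $D$, $\rho$, and $\sigma_s/\sigma_w$ across the filtering, equalization, and shaping steps, and verify that ensemble invariance under $\xx\mapsto\UU\xx$ holds in the division-algebra setting too, so the expectation over random lattices commutes with the equalizer's change of codebook. Once those consistency checks are in place, the three pieces above combine in a single line to yield the theorem.
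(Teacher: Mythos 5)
Your proposal follows essentially the same route as the paper: unit-invariance of the ensemble puts the decoded point back in $\Lambda$, the two sub-Gaussianity lemmas inflate the noise parameter to $\alpha\sigma_w$, the Minkowski--Hlawka/compactification machinery gives the VNR threshold $D V(\Lambda)^{1/mT}/(\sigma_w^2\alpha^2)\geq \pi e$, and the shaping rate formula then yields $C-2\log\alpha$. Your only addition is flagging the $D$-scaling bookkeeping (the ``harmless'' power of $D$ in the equalized signal), which the paper glosses over, but the substance of the argument is identical.
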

In light of Theorem \ref{thm:boundLogLattice}, we have the following corollary that characterizes the gap in terms of algebraic properties of the ensemble in a block fading channel. Notice that the equalization step, in this case, can be accomplished by lattice decoding in a logarithmic lattice of dimension $m$.
\begin{cor} For the block fading channel, the gap is upper bounded by $\log(m) + 2 \rho$, where $\rho$ is the covering radius of the logarithmic lattice.
\end{cor}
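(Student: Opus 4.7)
The plan is to combine Theorem \ref{thm:gap-to-Capacity} with the quantitative estimate supplied by Theorem \ref{thm:boundLogLattice}. Theorem \ref{thm:gap-to-Capacity} tells us that the gap is at most $2\log\alpha$ whenever $\|\mathbf{E}_{\tilde{\mathbf{H}}}^{-1}\|\leq\alpha$ uniformly over the compound set, so the task reduces to producing a universal $\alpha$ in the block-fading setting expressed in terms of the covering radius of the logarithmic lattice.

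First, I would exploit the fact that in the block-fading model both $\tilde{\mathbf{H}}$ and $\mathbf{U}=\mathrm{diag}(\sigma_1(u),\ldots,\sigma_m(u))$ are diagonal, hence commute, so $\mathbf{E}_{\tilde{\mathbf{H}}}=\tilde{\mathbf{H}}\mathbf{U}^{-1}$ is diagonal with entries $\tilde h_i/\sigma_i(u)$ and its inverse is diagonal with entries $\sigma_i(u)/\tilde h_i$. Choose $u$ so that $\ell(u)$ is the closest logarithmic-lattice point to $(\log|\tilde h_1|^2,\ldots,\log|\tilde h_m|^2)$, which lies in the sum-zero hyperplane because $|\det\tilde{\mathbf{H}}^\dagger\tilde{\mathbf{H}}|=1$. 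The Euclidean distance to this point is at most the covering radius $\rho$, which dominates each coordinate difference; this is precisely the estimate of Theorem \ref{thm:boundLogLattice}. Exponentiating and summing produces
\begin{equation*}
\|\mathbf{E}_{\tilde{\mathbf{H}}}^{-1}\|^2=\sum_{i=1}^m \frac{|\sigma_i(u)|^2}{|\tilde h_i|^2}\leq me^{2\rho}.
\end{equation*}

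Second, since $\ell$ is odd on the unit group ($\ell(u^{-1})=-\ell(u)$), the logarithmic lattice is centrally symmetric, and the same covering-radius argument controls both $\|\mathbf{E}_{\tilde{\mathbf{H}}}\|$ and $\|\mathbf{E}_{\tilde{\mathbf{H}}}^{-1}\|$ by the same bound. Taking $\alpha=\sqrt{m}\,e^{\rho}$ and plugging into Theorem \ref{thm:gap-to-Capacity} yields
\begin{equation*}
\mathrm{gap}\leq 2\log\alpha=\log m+2\rho,
\end{equation*}
which is the claimed bound.

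The only potentially delicate point I foresee is verifying that the bound provided by Theorem \ref{thm:boundLogLattice}, which is stated for the forward operator $\tilde{\mathbf{H}}\mathbf{U}^{-1}$, transfers to its inverse $\mathbf{U}\tilde{\mathbf{H}}^{-1}$. In the block-fading regime this is immediate from commutativity of diagonal matrices and central symmetry of the logarithmic lattice; the same passage would require a genuinely new argument in the full MIMO setting, where $\mathbf{U}$ and $\tilde{\mathbf{H}}$ no longer commute and the relevant algebraic structure comes from division algebras rather than commutative number fields.
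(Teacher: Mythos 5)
Your proposal is correct and follows essentially the same route the paper intends: combine Theorem \ref{thm:gap-to-Capacity} with the unit-quantization bound of Theorem \ref{thm:boundLogLattice} to get a uniform $\alpha=\sqrt{m}\,e^{\rho}$, whence the gap $2\log\alpha=\log m+2\rho$. Your extra care in checking that the bound transfers from $\mathbf{E}_{\tilde{\mathbf{H}}}=\tilde{\mathbf{H}}\mathbf{U}^{-1}$ to $\mathbf{E}_{\tilde{\mathbf{H}}}^{-1}$ — immediate here because the matrices are diagonal and the closest-point estimate bounds each coordinate difference in absolute value (equivalently, by symmetry of the logarithmic lattice) — is a point the paper leaves implicit, and you resolve it correctly.
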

Since $m$ and $\rho$ do not depend on $T$, this gives a constant gap to capacity. For $m = 2$ (quartic fields), logarithmic lattices with small regulators are classified in \cite{RegulatorFormulas}. The minimum gap is $0.707308$ nats per channel use.
\begin{cor}
	For a $2\times2$ compound MIMO channel, the gap to capacity is upper bounded by $\approx 1.49784$ nats per channel use.
\end{cor}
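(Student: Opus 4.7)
My plan is to instantiate Theorem \ref{thm:gap-to-Capacity} with the division-algebra ensemble described in the preceding section, specialized to a carefully chosen $2\times2$ cyclic division algebra. The theorem yields a gap of at most $2\log\alpha$, where $\alpha$ is any uniform upper bound on $\|\tilde{\HH}\UU^{-1}\|$ as $\tilde{\HH}$ ranges over the normalized compound channel $\tilde{\mathbb{H}}_\infty$ and $\UU$ ranges over the matrix images of units of the natural order. It therefore suffices to exhibit a cyclic algebra $\mathcal{A}=(K/\mathbb{Q}(i),\beta,\gamma)$ with $[K:\mathbb{Q}(i)]=2$ whose unit group achieves $\alpha \le e^{1.49784/2}$.

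First I would invoke Kleinert's theorem \cite[Thm.~1, p.~214]{Kleinert}, the non-commutative analogue of Dirichlet's unit theorem, which asserts that after applying a suitable logarithmic map to the squared singular values of the matrix representation, the unit group of the natural order projects onto a full-rank lattice inside a hyperplane of $\mathbb{R}^{2}$. This is precisely the compactification property of Definition \ref{def:compacify}, and it reduces the bound on $\alpha$ to a covering-radius estimate for that ``logarithmic unit lattice,'' in perfect analogy with Theorem \ref{thm:boundLogLattice} in the block-fading case. The invariance $\mathbf{U}\Lambda=\Lambda$ from Proposition \ref{prop:VolumeAndUnit}, extended to division algebras as in the division-algebra subsection, guarantees that the ensemble is preserved under multiplication by units.

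Next I would specialize to the Golden algebra built on $K=\mathbb{Q}(i,\sqrt{5})$ from Example \ref{ex:golden}, whose natural order contains the fundamental unit $\theta=(1+\sqrt{5})/2$ and therefore has a small-regulator logarithmic unit lattice. Applying the sharper one-dimensional estimate $\|\tilde{\HH}\UU^{-1}\|\le\sqrt{2\cosh\rho}$ from the remark following Theorem \ref{thm:boundLogLattice}, with $\rho$ the covering radius of this logarithmic lattice, a direct numerical computation with the explicit fundamental units of the natural order produces $\alpha^{2}=2\cosh\rho$ and hence $2\log\alpha\approx 1.49784$ nats per channel use. This yields the claimed bound.

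The main obstacle is the effective determination of fundamental units for a maximal order in a non-commutative setting: Kleinert's theorem is existential, and turning it into the sharp numerical estimate for the Golden algebra requires writing down the singular values of the $2\times 2$ matrix embeddings of the known units and verifying that their logarithmic images generate a lattice whose covering radius matches the claimed value. This calculation is in principle straightforward but considerably less clean than its commutative counterpart for quartic number fields used in the block-fading corollary.
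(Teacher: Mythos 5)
Your starting point is the right one and matches the paper: the gap is $2\log\alpha$ by Theorem \ref{thm:gap-to-Capacity}, so everything reduces to a uniform bound $\alpha$ on $\left\| \EE_{\tilde{\HH}}\right\|$ (equivalently on the norm of $\tilde{\HH}\UU^{-1}$) when $\tilde{\HH}$ ranges over \emph{all} normalized $2\times2$ channel matrices and $\UU$ over the matrix units of the algebra. The paper obtains that bound in one line, by citing \cite[Prop.~1]{OthmanLuzziBelfiore10}, which carries out exactly this reduction for the $2\times2$ cyclic division algebra and yields the numerical constant $\approx 1.49784$.

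The gap in your proposal is the step where you transplant the estimate $\left\|\tilde{\HH}\UU^{-1}\right\|\le\sqrt{2\cosh\rho}$ from the remark after Theorem \ref{thm:boundLogLattice}. That remark is proved only for the block-fading (diagonal $\HH$) case, where quantizing the channel by units is literally a closest-point problem in a one-dimensional logarithmic lattice of commutative number-field units. For a general (non-diagonal) $2\times2$ MIMO matrix, the relevant group is the non-commutative unit group of the (maximal) order of the algebra, and reducing $\tilde{\HH}$ by this group is a fundamental-domain problem for a group action on the full space of channel matrices, not a covering-radius computation in $\mathbb{R}^1$; the number-field unit $\theta$ of $\mathbb{Q}(i,\sqrt5)$ alone does not generate the group you need. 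Kleinert's theorem, as you note yourself, only gives the existence of a bounded fundamental domain (the compactification property), not any quantitative radius, and your claimed identity $\alpha^{2}=2\cosh\rho$ together with the asserted ``direct numerical computation'' giving $1.49784$ is exactly the missing content: it is neither proved in the paper's block-fading analysis nor derivable from it by analogy. This is precisely what \cite[Prop.~1]{OthmanLuzziBelfiore10} supplies, so your argument is incomplete until that equalization bound for the non-commutative unit group is actually established (or cited).
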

\begin{proof}
	Combine Theorem \ref{thm:gap-to-Capacity} with \cite[Prop. 1]{OthmanLuzziBelfiore10}.
\end{proof}

In \cite{OthmanLuzziBelfiore10} the authors show an efficient method to accomplish the equalization step. Again, notice that this involves an algorithm whose complexity depends only on $m$, which is typically smaller than $T$ (for the capacity-achieving schemes $m$ is fixed whereas $T \to \infty$).
\section{Ergodic Channels}
\label{sec:Ergodic}
We show next how to extend the previous results for ergodic channels, where the channel coefficient vary according to a random process. The channel is described by equation
\begin{equation}
y_i = h_i x_i + z_i
\end{equation}
for $i=1,2,\ldots$,  where $z_i$ is a Gaussian noise $\sim \mathcal{CN}(0,\sigma_w^2)$ and $\left\{ h_i \right\}$ is a stationary ergodic random process with $E[|h_i|^2]=1$. The input values have average power lesser or equal than $P$. Let $\text{SNR} \triangleq P/\sigma_w^2$. The capacity of this channel is \cite{WirelessCommunication}
\begin{equation*}
C=E\left[\log\left(1+|h|^2\text{SNR}\right)\right] \mbox{ nats/channel use}.
\end{equation*}
and it is known to be achievable with random codes. A special case is when the fading coefficients are independent and identically distributed. 


\subsection{The Random Ensemble}
\subsubsection{Construction A}
Here we present an algebraic Construction A suitable for the ergodic fading model. This construction differs slightly from the one in \ref{subsec:consACompound} and was firstly studied in \cite{ConsAOggier}.

Consider a relative extension $K/\mathbb{Q}(i)$, a prime $p$, $\mathfrak{p}$ and $\mathfrak{b}$ as in Section \ref{subsec:relativeExtensions}.
Consider the projection $\pi: \mathcal{O}_K \to \mathcal{O}_K/\mathfrak{\beta}$. Each relative embedding $\sigma$ takes $\mathcal{O}_K$ into $\mathcal{O}_K$. Let $\Lambda_{K} = \sigma(\mathcal{O}_K)$ and consider reduction mapping $\Lambda_{K }\to  \mathbb{F}_p^n$ given by
\begin{equation}
\rho(\sigma(x)) = (\phi \circ \pi)(\sigma(x)),
\label{eq:mappingrho}
\end{equation}
where $(\phi \circ \pi)$ is applied component-wise in the canonical embedding. Then  Let $\mathcal{C} \subset \mathbb{F}_p^n$ be a linear code with dimension $k$ (or a code with parameters $(n,k,p)$). The Construction A lattice associated to $\mathcal{C}$ is defined as
$$\Lambda_{{K}}(\mathcal{C}) \triangleq \rho^{-1}(\mathcal{C}).$$
The properties of $\Lambda_{K}{(\mathcal{C})}$ are studied in \cite{ConsAOggier}. First of all $\Lambda_K(\mathcal{C})$ is a full rank lattice and $\Lambda_{{K}}(\left\{ 0 \right\}) = \sigma(p \mathcal{O}_K)$ is a sublattice with index $|\mathcal{C}| = p^k$. In fact the quotient
\begin{equation}
{\Lambda_K(\mathcal{C})}/{p \Lambda_K} \simeq \mathcal{C},
\end{equation}
from where we can deduce that $V(\Lambda_K(\mathcal{C}))=2^{-n} p^{n-k} \sqrt{\Delta_K}$. It is further shown that $\Lambda_{{K}}(\mathcal{C})$ has full-diversity and its product distance is bounded in terms of the Hamming distances of $\mathcal{C}$.

The above construction is very similar to the one in \ref{subsec:consACompound}  but there is a fundamental distance. In this case the dimension of the lattice is equal to the length of the underlying code, while in Construction \ref{subsec:consACompound} the dimension of the lattice is $nT$. In other words, while in the compound case the degree of the relative extension is fixed for the whole transmission, in the ergodic fading it increases with the block-length.

\begin{exe} For the sake of illustration consider a totally real number field, and the corresponding lattice obtained by the process above. Let $\mathbb{Q}[\sqrt{13}]$ be the quadratic field with ring of integers $\mathbb{Z}\left[\mu\right],$ where $\mu= \frac{1+\sqrt{13}}{2}$ and $\bar{\mu} = \frac{1-\sqrt{13}}{2}$. The two embeddings are determined by $\sigma_1(\sqrt{13})=\sqrt{13}$ and $\sigma_2(\sqrt{13})=-\sqrt{13}$. The prime $3 = - \mu \bar{\mu}$ splits and the ideal $\mathfrak{p}=\mu\mathbb{Z}[\mu]$ is such that $\mathbb{Z}[\mu]/\mathfrak{p} \sim \mathbb{F}_{3}$. We can now identify the set of representatives for the quotient $\mathbb{Z}[\mu]/3\mathbb{Z}[\mu]$ 
	with elements in $\mathbb{F}_3^2$, as in Figure \ref{fig:1}. The pre-image by $\rho$ of a code spreads its corresponding representatives in the plane.
	
	\begin{figure}[!htb]
		\centering
		\includegraphics[scale=0.35]{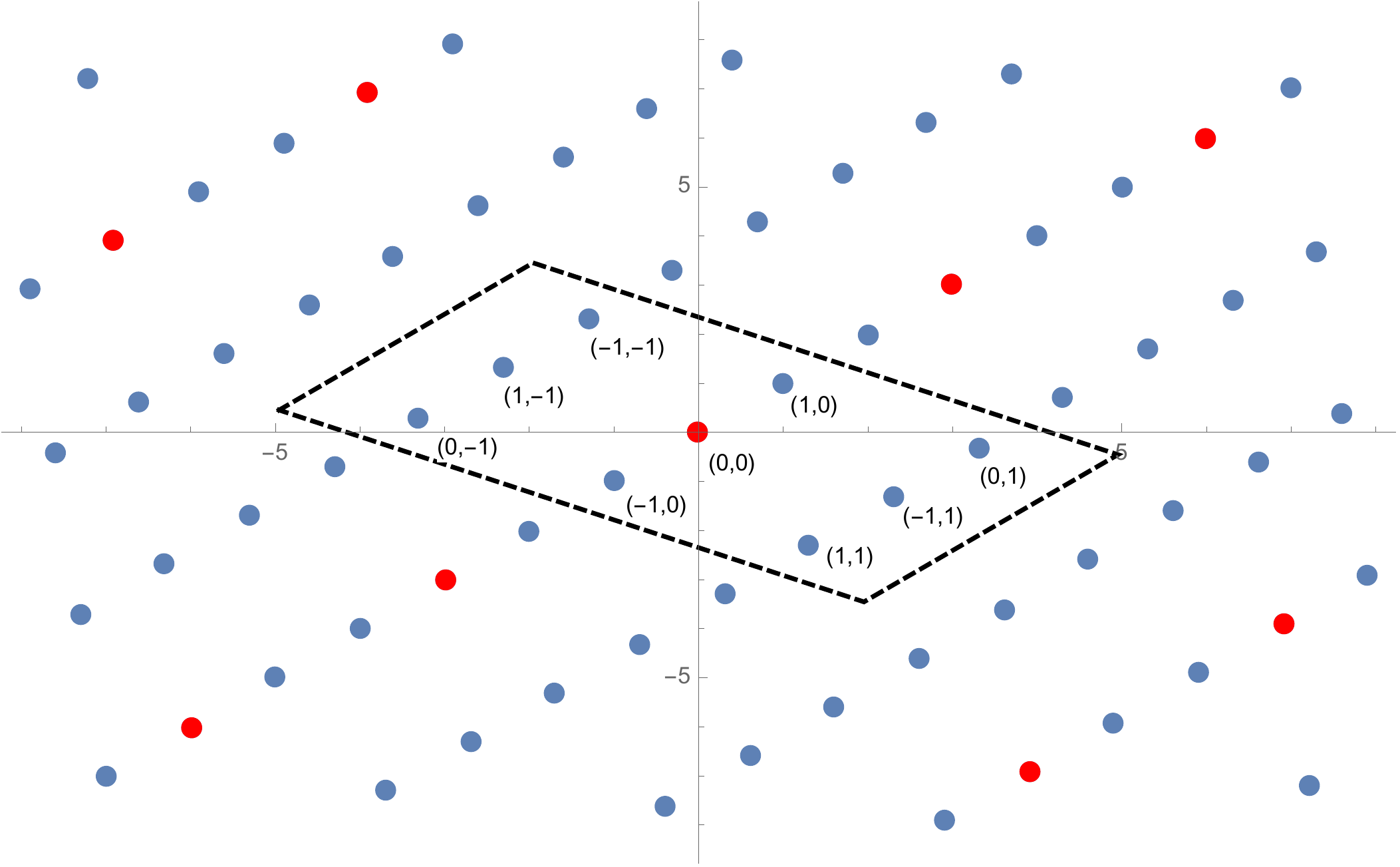}
		\label{fig:1}
		\caption{Illustration of the mapping $\rho$ (Eq. \eqref{eq:mappingrho}). Red points are elements of $3\mathbb{Z}[\mu]$.}
	\end{figure}
\end{exe}

Let $\beta > 0$ be a constant and $\alpha = ({\beta^{1/n}2^{-1} p^{1-k/n} \Delta_K^{1/2n}})^{-1}$ a normalization factor. Consider the ensemble of all lattices from generalized Construction A, normalized to volume $1/\beta$
\begin{equation}
\begin{split}
\mathbb{L}_{K, n,k,p,\beta} = \left\{ \alpha \Lambda_K(\mathcal{C}):\mathcal{C} \mbox{ is an }(n,k,p)\mbox{ code }\right\}.
\end{split}
\label{eq:genFading}
\end{equation}
Using the machinery developed by \cite{Loeliger}, and similarly to Appendix \ref{app:proofs} we can show that the ensemble is Minkowski-Hlawka for $p \to \infty$. The invariance of the lattices by units is described above. Let $*$ define the elementwise product between two vectors, i.e., $\xx * \yy = (x_1 y_1, \ldots, x_n y_n)$.

\begin{lemma}If $u \in \mathcal{O}_K^*$, then $\sigma(u) * \Lambda_{K}(\mathcal{C}_1) = \Lambda_{K}(\mathcal{C}_2)$, where $\mathcal{C}_1$ and $\mathcal{C}_2$ have the same dimension.
	\label{lem:equivalentLattices}
\end{lemma}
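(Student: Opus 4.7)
The plan is to trace the effect of componentwise multiplication by $\sigma(u)$ through the defining maps of Construction A. The only non-trivial input will be that $u$ being a unit forces each coordinate $\sigma_i(u)$ to be non-zero modulo $\mathfrak{p}$, which then turns ``multiplication by $\sigma(u)$'' into an $\mathbb{F}_p$-linear automorphism of $\mathbb{F}_p^n$ rather than a mere permutation of a vector space with a potential zero-divisor.

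First, I would exploit multiplicativity of the canonical embedding: for every $x \in \mathcal{O}_K$, $\sigma(u) * \sigma(x) = \sigma(ux)$, because each $\sigma_i$ is a ring homomorphism. Since multiplication by a unit is a bijection on $\mathcal{O}_K$, the substitution $y=ux$ shows
$$\sigma(u)*\Lambda_K(\mathcal{C}_1) = \bigl\{\sigma(y) : y\in\mathcal{O}_K,\ \rho(\sigma(u^{-1}y))\in \mathcal{C}_1\bigr\}.$$
So the question reduces to recognizing the set of admissible reductions as a linear code in $\mathbb{F}_p^n$ of the same dimension as $\mathcal{C}_1$.

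Second, I would push the reduction through: since $\pi$ is a ring homomorphism and $\phi:\mathcal{O}_K/\mathfrak{p}\to\mathbb{F}_p$ is an isomorphism of rings, componentwise we get
$$\rho(\sigma(u^{-1}y)) = a^{-1} * \rho(\sigma(y)), \qquad a_i := (\phi\circ\pi)(\sigma_i(u))\in\mathbb{F}_p.$$
Because $u\in\mathcal{O}_K^*$ and every relative embedding $\sigma_i$ sends $\mathcal{O}_K$ to $\mathcal{O}_K$ (and units to units in the Galois setting of Section \ref{subsec:relativeExtensions}), $\sigma_i(u)\notin\mathfrak{p}$, hence $a_i\in\mathbb{F}_p^\times$. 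Therefore $a^{-1}$ is well-defined and the map $c\mapsto a*c$ is an $\mathbb{F}_p$-linear automorphism of $\mathbb{F}_p^n$.

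Third, set $\mathcal{C}_2 \triangleq a * \mathcal{C}_1$. Linearity and invertibility of $c\mapsto a*c$ give $\dim \mathcal{C}_2 = \dim \mathcal{C}_1$, and the condition $\rho(\sigma(u^{-1}y))\in\mathcal{C}_1$ becomes $\rho(\sigma(y))\in\mathcal{C}_2$. Thus $\sigma(u)*\Lambda_K(\mathcal{C}_1) = \rho^{-1}(\mathcal{C}_2) = \Lambda_K(\mathcal{C}_2)$, which is exactly the claim. I expect the main subtlety to be verifying that each $a_i$ is non-zero in $\mathbb{F}_p$; once the Galois/unit observation is in place, the rest is a short bookkeeping argument.
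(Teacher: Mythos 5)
Your proof is correct and follows essentially the same route as the paper's: push the componentwise multiplication through the reduction $\rho$, identify $\mathcal{C}_2=\mathbf{a}*\mathcal{C}_1$ with $a_i=(\phi\circ\pi)(\sigma_i(u))\neq 0$ because embeddings send units to units and units avoid $\mathfrak{p}$, and observe that multiplication by $\mathbf{a}$ preserves the code's dimension. The only difference is the concluding step: the paper establishes just the inclusion $\sigma(u)*\Lambda_K(\mathcal{C}_1)\subset\Lambda_K(\mathcal{C}_2)$ and upgrades it to equality via equal volumes (using $\sigma_1(u)\cdots\sigma_n(u)=\pm 1$), whereas you get equality directly from the bijectivity of $x\mapsto ux$ on $\mathcal{O}_K$ and the invertibility of $\mathbf{c}\mapsto\mathbf{a}*\mathbf{c}$ on $\mathbb{F}_p^n$ — a slightly more self-contained finish that avoids the norm-of-a-unit argument.
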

\begin{proof} It suffices to show that $\sigma(u) * \Lambda_{K}(\mathcal{C}_1) \subset \Lambda_{K}(\mathcal{C}_2)$, for some $\mathcal{C}_1$ and $\mathcal{C}_2$ of same rank, since both lattices have same volume due to the fact that $\sigma_1(u)\ldots \sigma_n(u) = \pm 1$. Let $\mathbf{y} = \sigma(u) * \lambda$, $\lambda = \sigma(x) \in\Lambda_{K}(\mathcal{C}_1)$. Then 
	$$\rho(\sigma(u)*\lambda) = (\phi \circ \pi)(\sigma(u)) * (\phi \circ \pi)(\sigma(x)) = \mathbf{a} * \mathbf{c},$$
	where $\mathbf{c}$ is a codeword. It follows that $\mathbf{y} \in \Lambda_K(\mathbf{a}*\mathcal{C}_1)$. Now, since no coordinate of $\mathbf{a}$ is zero, multiplication by $\mathbf{a}$ does not affect the rank of $\mathcal{C}_1$, which finishes the proof.
\end{proof}
\begin{rmk} In the previous lemma, $\mathcal{C}_1$ can be obtained from $\mathcal{C}_2$ by an equivalence of the Hamming metric.
\end{rmk}
From Lemma \ref{lem:equivalentLattices} the mapping $t_{\mathbf{u}} : \mathbb{L}_{K,n,k,p} \to \mathbb{L}_{K,n,k,p}$ given by $t_{\mathbf{u}}(\Lambda) = \sigma(u) * \Lambda$ is a bijection of the ensemble. This property is useful to handle deep fading.

\subsection{Infinite Lattice Constellations}
\label{sec:Infinite}
The dispersion and the Poltyrev limit of infinite constellations for the stationary ergodic fading channel was analyzed in \cite{DBLP:journals/corr/Vituri13}. In this case, $\left\{h_i\right\}$ is a random process (not necessarily iid) for which $\mu = E[\log | h |]$ exists and 

\begin{equation}
\lim_{n\to\infty}P\left(\left| \frac{1}{n} \sum_{i=1}^n \log |h_i| - \mu \right| > \varepsilon \right) = 0,
\end{equation}
for any positive $\varepsilon>0$.  Corollary 4.1 of \cite{DBLP:journals/corr/Vituri13} implies that the smallest possible VNR for a sequence of lattices to have vanishing error probability is
\begin{equation}
\gamma^* = e^{-2 \mu} 2\pi e.
\label{eq:optimalVNR}
\end{equation}
Let $P_e(\Lambda)$ be the probability of error of an infinite lattice scheme in a fading channel. In view of this result we can define fading-good lattices.

\begin{defi} A sequence $\Lambda_n$ of lattices with increasing dimension is \textit{good for the ergodic fading channel} if for all VNR $\gamma_{\Lambda}(\sigma) > e^{-2\mu} \pi e$, $P_e(\Lambda)\to0$ as $n \to \infty$.
\end{defi}

It was proven in \cite{DBLP:journals/corr/Vituri13}, for iid fading processes under some regularity conditions, that there exists a sequence of fading-good lattices with $P_e(\Lambda) = O(1/n^2)$. The proof only requires a Minkoswki-Hlawka ensemble, and hence an immediate corollary is that Generalized Construction A lattices as in \eqref{eq:genFading} are also fading good. We provide next a different approach that explores the algebraic structure and obtains an exponential decay of $P_e(\Lambda)$ with respect to $n$ in iid fading channels.

%



\begin{thm} There exists a sequence of ergodic fading-good lattices from Construction \eqref{eq:genFading}. 
	\label{thm:fading-good}
\end{thm}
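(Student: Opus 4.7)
The plan is to combine the Minkowski--Hlawka property of the ensemble \eqref{eq:genFading} with the unit-invariance of Lemma~\ref{lem:equivalentLattices} and the ergodic behaviour of $\{h_i\}$, in the spirit of Theorem~\ref{thm:infiniteLimit} but with a probabilistic compactness argument in place of the deterministic one. After componentwise zero-forcing, the problem reduces to ambiguity decoding in $\Lambda$ under the anisotropic effective noise $\tilde{z}_i = z_i/h_i$ with conditional variance $\sigma_w^2/|h_i|^2$; the goal is to bound the ensemble-averaged error probability of this scheme.

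First I would isolate the typical fading event
\[
\mathcal{T}_{\varepsilon,n} = \left\{ \mathbf{h} : \left| \tfrac{1}{n}\sum_{i=1}^n \log|h_i|^2 - 2\mu \right| \leq \varepsilon \right\},
\]
which by the ergodic hypothesis has probability tending to $1$ as $n\to\infty$; on $\mathcal{T}_{\varepsilon,n}$ the determinant $|\det\mathbf{h}|^2=\prod_i |h_i|^2$ concentrates at $e^{2n\mu}$, which is precisely the source of the Poltyrev threshold. On its complement I bound the error probability trivially by one, so it suffices to control $E_\mathbb{L}[P_e(\Lambda,\mathbf{h})\mathbf{1}_{\mathcal{T}_{\varepsilon,n}}]$.

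Second, I would use the units of $\mathcal{O}_K$ to re-balance the per-coordinate effective variances. For each realisation, pick $u=u(\mathbf{h})\in\mathcal{O}_K^{*}$ whose log-embedding $(\log|\sigma_i(u)|^2)_{i=1}^n$ quantizes the mean-subtracted log-fading vector $(\log|h_i|^2-2\mu)_{i=1}^n$ inside the logarithmic lattice of $K$; this is natural because $|N_{K/\mathbb{Q}(i)}(u)|=1$ forces the log-lattice to lie in the hyperplane orthogonal to $(1,\ldots,1)$, which matches the projection after subtracting the centroid $2\mu$. The rebalanced fading $h'_i = h_i\sigma_i(u(\mathbf{h}))$ then satisfies $\log|h'_i|^2 \in [2\mu-\varepsilon-\rho_n,\, 2\mu+\varepsilon+\rho_n]$ coordinatewise, where $\rho_n$ is the covering radius of the log-lattice. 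By Lemma~\ref{lem:equivalentLattices} the ensemble $\mathbb{L}_{K,n,k,p,\beta}$ is invariant under $\Lambda\mapsto\sigma(u(\mathbf{h}))*\Lambda$, so averaging over $\mathbb{L}$ is unaffected by this relabelling. Applying the Minkowski--Hlawka estimate \eqref{eq:ensembleAverage} through an ambiguity decoder with ball radius $r^2 = n\sigma_w^2 e^{-2\mu}(1+\delta)$ gives
\[
E_\mathbb{L}\!\left[P_e(\Lambda,\mathbf{h})\mathbf{1}_{\mathcal{T}_{\varepsilon,n}}\right] \leq P\!\left(\|\tilde{\mathbf{z}}\|>r \,\big|\, \mathcal{T}_{\varepsilon,n}\right) + \beta\cdot\vol(\mathcal{B}_r);
\]
the first term vanishes by Gaussian concentration of the rebalanced noise, and the second as $n\to\infty$ whenever $\beta^{-1/n}/\sigma_w^2 > \pi e\cdot e^{-2\mu}$, i.e.\ precisely above the stipulated VNR threshold. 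A standard sub-ensemble extraction then yields a deterministic sequence of fading-good lattices in $\mathbb{L}_{K_n,n,k_n,p_n,\beta_n}$.

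The main technical obstacle is controlling $\rho_n$ as $n\to\infty$: one needs $\rho_n=o(\sqrt{n})$ for the Gaussian tail at radius $r^2\approx n\sigma_w^2 e^{-2\mu}$ to decay and for the rebalanced per-coordinate variances to remain uniformly close to $\sigma_w^2 e^{-2\mu}$. This constrains the family of number fields $K_n/\mathbb{Q}(i)$; cyclotomic extensions, whose regulators grow only polynomially in $n$, are natural candidates. This unit-quantization step is precisely where the algebraic structure is indispensable and is the likely source of the improvement in convergence rate over the unstructured Minkowski--Hlawka argument used in \cite{DBLP:journals/corr/Vituri13}.
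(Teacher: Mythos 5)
Your high-level ingredients (typicality of $\frac1n\sum\log|h_i|^2$, unit-group quantization, ensemble invariance from Lemma~\ref{lem:equivalentLattices}, Minkowski--Hlawka plus an ambiguity decoder) are the same as in the paper, but the way you deploy them has a genuine gap. By zero-forcing and then insisting that the \emph{rebalanced per-coordinate} variances $\sigma_w^2/|h_i\sigma_i(u)|^2$ stay uniformly close to $\sigma_w^2e^{-2\mu}$, you make the whole argument hinge on the covering radius $\rho_n$ of the log-unit lattice as $n\to\infty$ --- and this is exactly the point you leave open. Your stated condition $\rho_n=o(\sqrt n)$ is not sufficient even for your own decoder: the quantization error is only controlled in Euclidean norm, so a single coordinate can carry a deviation of order $\rho_n$, inflating that coordinate's noise variance by a factor $e^{\rho_n}$; for the noise to land in a ball of squared radius $n\sigma_w^2e^{-2\mu}(1+\delta)$ one needs at least $\rho_n\lesssim\log n$, and ``uniformly close variances'' would need $\rho_n\to 0$, which is impossible since the unit rank grows with $n$. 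Moreover, polynomial growth of the \emph{regulator} (a volume) of cyclotomic fields does not by itself bound the covering radius of the log-unit lattice, so the number-theoretic input you would need is neither quoted nor proved.

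The paper's proof avoids this issue entirely, and that is the substantive difference. There the ambiguity-decoding region stays in the channel-output domain: $\mathcal{S}$ is a ball of radius $(\Delta/e^{\mu-\delta})\sqrt{n(\sigma_w^2+\varepsilon)}$ scaling with the \emph{realized} geometric mean $\Delta=|h_1\cdots h_n|^{1/n}$, so $P(\mathbf{z}\notin\mathcal{S})\le P(\mathbf{z}\notin\mathcal{B}_{\sqrt{n(\sigma_w^2+\varepsilon)}})+P(\Delta<e^{\mu-\delta})$, the second term vanishing by ergodicity. The fading then enters only through $N_{\mathbf{z}-\mathcal{S}}(\HH\Lambda)$ with the decomposition $\HH=\Delta\tilde{\EE}_{\tilde\HH}\UU_{\tilde\HH}$: unit invariance removes $\UU_{\tilde\HH}$, and since $|\det\tilde{\EE}_{\tilde\HH}|=1$ the Minkowski--Hlawka limit gives exactly $e^{-n(\mu-\delta)}\beta\,\vol\,\mathcal{B}_{\sqrt{n(\sigma_w^2+\varepsilon)}}$, independent of how ill-conditioned $\tilde{\EE}_{\tilde\HH}$ is. The covering radius appears only in verifying that the averaged function $g$ has bounded support for each fixed $n$ (so that Minkowski--Hlawka applies as $p\to\infty$); it never enters the final estimate, so no uniform-in-$n$ control of unit lattices is required. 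If you want to salvage your route, you would have to either prove an $O(\log n)$ covering-radius bound for an explicit tower of fields, or --- as the paper does --- move the realized determinant into the decoding radius so that only $|\det\tilde{\EE}_{\tilde\HH}|=1$ and per-$n$ compactness are needed.
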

\begin{proof} See Appendix \ref{app:C}.
\end{proof}
Under the additional assumption that the random process converges exponentially to its mean, we obtain the following direct corollary
\begin{cor}If for any sufficiently small $\varepsilon > 0$
	\begin{equation}
	\lim_{n\to\infty} -\frac{1}{n}\log P\left(\left| \frac{1}{n} \sum_{i=1}^n \log |h_i| - \mu \right| > \varepsilon \right) = A > 0,
	\end{equation}
	then the probability of error $P_e(\Lambda)$ in Theorem \ref{thm:fading-good} decays exponentially to zero.
\end{cor}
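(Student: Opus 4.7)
The plan is to refine the proof of Theorem \ref{thm:fading-good} by carefully tracking the exponential decay rates of its two constituent error terms under the stronger large-deviations hypothesis. Let $S_n = \frac{1}{n}\sum_{i=1}^n \log|h_i|$ and fix a small $\varepsilon > 0$. I would first decompose the error probability by conditioning on the typicality of the fading realization:
\begin{equation*}
P_e(\Lambda) \leq P(|S_n - \mu| > \varepsilon) + \mathbb{E}\bigl[P_e(\Lambda\mid \mathbf{h})\,\mathbf{1}_{\{|S_n-\mu|\leq\varepsilon\}}\bigr].
\end{equation*}
The first term is precisely the quantity appearing in the hypothesis, so it is bounded by $e^{-n(A-o(1))}$ and is therefore exponentially small.

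For the second term, on the typical event $|S_n-\mu|\leq\varepsilon$ one has $\prod_{i=1}^n|h_i|^2 \geq e^{2n(\mu-\varepsilon)}$. Consequently, the effective VNR of the channel-scaled lattice $\mathbf{H}\Lambda$ satisfies $\gamma_{\mathbf{H}\Lambda}(\sigma_w) \geq \gamma_\Lambda(\sigma_w)\,e^{2(\mu-\varepsilon)}$. Since by hypothesis $\gamma_\Lambda(\sigma_w) > e^{-2\mu}\pi e$ with a fixed margin, I would pick $\varepsilon$ small enough so that this effective VNR exceeds the Poltyrev threshold $\pi e$ by a constant gap $\eta > 0$ independent of $n$. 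The Minkowski-Hlawka/Loeliger argument used in the proof of Theorem \ref{thm:fading-good}, applied to the Construction-A ensemble \eqref{eq:genFading}, then yields a strictly positive error exponent on the typical set: averaged over the ensemble, the conditional error probability is bounded by $e^{-nE(\eta)}$ for some $E(\eta) > 0$, because the Poltyrev exponent is positive whenever one operates strictly above threshold.

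Combining the two contributions,
\begin{equation*}
P_e(\Lambda) \leq e^{-n(A-o(1))} + e^{-nE(\eta)},
\end{equation*}
which decays exponentially with rate at least $\min(A, E(\eta))$. The main subtlety, and the only delicate point in the plan, is to choose $\varepsilon$ small enough that the large-deviations hypothesis holds \emph{and} that $\eta$ remains a positive, $n$-independent margin above the Poltyrev threshold; since the hypothesis is available for every sufficiently small $\varepsilon$ and the margin depends continuously on $\varepsilon$, both conditions can be satisfied simultaneously. Apart from this calibration of $\varepsilon$, the proof is a quantitative re-reading of the proof of Theorem \ref{thm:fading-good}, replacing each ``vanishes with $n$'' by ``vanishes exponentially in $n$'' wherever the underlying bound actually supports it.
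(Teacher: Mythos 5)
Your proposal is correct and follows essentially the same route as the paper, which treats the statement as a direct corollary of the proof of Theorem \ref{thm:fading-good} in Appendix \ref{app:C}: there the only term that is not already exponentially small above threshold is the probability that the empirical mean of $\log|h_i|$ deviates from $\mu$ (the event $\Delta < e^{\mu-\delta}$), and your decomposition isolates exactly that term and bounds it by the large-deviations hypothesis. Your calibration of $\varepsilon$ against the fixed VNR margin matches the role of $\delta$ in the paper's choice of the ball radius, so nothing essential is missing.
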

This is the case, for instance, of non-degenerate iid (following from the Cramer-Chernoff bound). For more general processes satisfying this hypothesis see e.g. \cite{MixingProcesses}. An explicit calculation for the Rayleigh fading process can be found in \cite[Eq. (6)]{VehkalahtiLuzzi2015}.

We close this section with a remark on the role of the group of units in the proof of Theorem \ref{thm:fading-good}. The function $g(x)$ (Eq. \eqref{eq:gx}) used in our version of the Minkowski-Hlawka theorem is bounded by definition. Intuitively, the group of units protects the channel from deep fadings. For general lattices not constructed for number fields this need not be true. A way to circumvent this problem \cite{DBLP:journals/corr/Vituri13} is to assume regularity conditions on the fading process which essentially guarantees a sufficient fast decay of the probability of deep fading. However, apart from questions of generality, this assumption degrades the probability of error to $O(1/n^2)$.

\subsection{Power-Constrained Model}
\label{sec:powerConstrained}
Similarly to \ref{sec:finalScheme}, we shape the constellation using the discrete Gaussian distribution. Given a coding lattice $\Lambda$, in the receiver side, given $\HH$, the estimate $\hat{\xx}$ that maximizes the a-posteriori probability is (cf. Section \eqref{sec:finalScheme}):
\begin{equation}
\hat{\xx} = \arg\min_{x\in\Lambda} \left\| \mathbf{F} \yy - \mathbf{R} \xx\right\|,
\label{eq:MAP}
\end{equation}
where $\mathbf{R}$ and $\mathbf{F}$ are diagonal with
\begin{equation*}
|{R}_{ii}|^2 = \rho h_i^2+1 \mbox{ and } {F}_{ii} = \frac{\rho h_i}{\sqrt{\rho |h_i|^2+1}} \mbox{,  } \rho \triangleq \frac{\sigma_s^2}{\sigma^2}.
\end{equation*}
In other words, MAP decoding is equivalent to lattice decoding with a scaling coefficient in each dimension. Consider the channel equation after scaling the received vector by $\mathbf{F}$:
\begin{equation}
\bar{\yy} = \mathbf{F} \yy  = \mathbf{R}\xx + \zz',
\end{equation}
where $\zz' = (\mathbf{F}\HH-\mathbf{R})\xx + \mathbf{F} \zz$ is the equivalent noise. The probability of error of lattice decoding for $\bar{\yy}$ is 
\begin{equation} P_e(\Lambda) = E_{\HH}\left[P(\zz' \notin \mathcal{V}_{\mathbf{R}\Lambda})\right].
\end{equation}
Since MAP decoding performs at least as well as the decoder in the proof of Theorem \ref{thm:fading-good}, the probability $P_e(\Lambda) \to 0$ if 
\begin{equation*}\gamma_{\Lambda}(\sigma^2) > e^{-E_h[\log(|r|)]} \pi e,
\end{equation*}
where $r = \rho h^2+1$. Moreover, if $\Lambda_n$ is a sequence of lattices with vanishing flatness factor (the existence of such a sequence can be guaranteed by the Minkowski-Hlawka theorem, as in \cite[Appendix III]{LLBS_12}) then the average power of the constellation $P \to \sigma_s^2$ and any rate
\begin{equation}
\begin{split}
R &=  \log(2\pi e \sigma_s^2) - \frac{2}{n} \log(V(\Lambda)) -\varepsilon \\&
= E_h[\log|r|] - \varepsilon = C-\varepsilon
\end{split}
\end{equation}
is achievable.

\section{Conclusion and Discussion}
In this paper we have presented algebraic lattice codes that achieve the compound capacity of the MIMO channel, and, in particular of the block-fading channel (i.e., when the channel realization $\HH$ is a diagonal matrix). This shows that lattices constructed from algebraic number theory can achieve not only limiting performance metrics, such as the DMT, but also the capacity of compound channels. Moreover, we have shown that algebraic lattices allow for a natural sub-optimal decoupled decoder, that handles the channel realization $\HH$ in a pre-process phase. The gap to capacity is characterized by the covering properties of unit-lattices of Number Fields (or, in the broader scope MIMO channel, Division Algebras). Finding algebraic structures whose unit-lattices are good coverings provides a design criterion for choosing the best lattice codes in this context.

The results in this paper are of an information-theoretic nature, and follow from generalizations of random arguments, such as the Minkowski-Hlawka theorem of the Geometry of Numbers. Practical multi-level schemes are the next natural steps for our constructions, and are part of ongoing work. Furthermore, the compound channel is a natural model to secrecy, since it is natural to suppose that no (or very few) previous knowledge of an eavesdropper channel can be available to a transmitter. A generalization of our methods to this model is currently under investigation.
\section{Acknowledgments}
The authors would like to thank Laura Luzzi, Roope Vehkalahti and Ling Liu for useful discussions and comments on previous versions of the manuscript. The first author acknowledges Sueli Costa for hosting him at the University of Campinas, where part of this work was developed.

\appendices
\section{Theorem \ref{thm:infiniteLimit}}
\label{app:proofs}
We show how to obtain bound \eqref{eq:boundCompact} in the proof of Theorem \ref{thm:infiniteLimit}. This is a special case of  \cite[Lem. 7]{RootVarayia1968}, included here for the sake of completeness. We first analyze the case $T = 1$. Let $\Omega_{\mathbf{\HH}}= \mathbf{E}_{\tilde{\HH}}^{\dagger}\mathbf{E}_{\tilde{H}}/\sigma_w^2$. The pdf of $\mathbf{E}_{\tilde{\HH}}^{-1}\tilde{\mathbf{W}}$ is
\begin{equation*}
f_{\mathbf{H}}(\tilde{\zz})= \frac{1}{\sqrt{\pi^n D}} e^{-\tilde{\zz}^{\dagger} \Omega_{\HH} \tilde{\zz}}.
\end{equation*}
Consider now a second matrix satisfying $\left\| \mathbf{E}_{\tilde{H}_0}- \mathbf{E}_{\tilde{H}}\right\| \leq \eta,$ which has corresponding pdf $f_{\mathbf{H}_0}(\tilde{z})$. The ratio between the two pdfs is
\begin{equation*}
\frac{f_{\mathbf{H}}(\tilde{\zz})}{f_{\mathbf{H}_0}(\tilde{\zz})}= e^{-\tilde{\zz}^{\dagger} \left(\Omega_{\HH}-\Omega_{\HH_0} \right)\tilde{\zz}}.
\end{equation*}
Now suppose that $\phi(\mathbf{z}) \in \mathcal{B}_{\sqrt{n(\sigma_w^2+\varepsilon)}}$.
\begin{equation*}
\begin{split} | \tilde{\mathbf{z}}^{\dagger} \left(\Omega_{\HH}-\Omega_{\HH_0} \right)\tilde{\mathbf{z}}| &\leq \left\|\tilde{\mathbf{z}}\right\|^{2} \left\|\Omega_{\HH}-\Omega_{\HH_0} \right\|_2^2 \leq {n(\sigma_w^2+\varepsilon)} \left\|\Omega_{\HH}-\Omega_{\HH_0} \right\|^2 \\ &= {n\left(1+\frac{\varepsilon}{\sigma_w^2}\right)} \left\|\mathbf{E}_{\tilde{\HH}_0}^{\dagger} \mathbf{E}_{\tilde{H}_0}- \mathbf{E}_{\tilde{H}}^{\dagger}  \mathbf{E}_{\tilde{\HH}}\right\| \\
&\leq {n\left(1+\frac{\varepsilon}{\sigma_w^2}\right)} \left(\left\|\mathbf{E}_{\tilde{\HH}_0}^{\dagger}\right\| \left\|\mathbf{E}_{\tilde{\HH}_0}- \mathbf{E}_{\tilde{\HH}}\right\|+\left\|\mathbf{E}_{\tilde{\HH}}^{}\right\| \left\|\mathbf{E}_{\tilde{\HH}_0}^{\dagger}- \mathbf{E}_{\tilde{\HH}}^{\dagger}\right\|\right)
\\
&\leq {n\left(1+\frac{\varepsilon}{\sigma_w^2}\right)} \left(2\alpha \eta \right).
\end{split}
\end{equation*}
Here $\left\|  \cdot \right\|_2$ denotes the operator norm of a matrix, which is upper bounded by the Frobenius norm. It follows that the ratio between the pdfs is bounded by

\begin{equation*}\frac{f_{\mathbf{H}}(\tilde{z})}{f_{\mathbf{H}_0}(\tilde{z})} \leq e^{\eta \alpha n\left(1+\frac{\varepsilon}{\sigma_w^2}\right)}.
\end{equation*}
For the case, $T>1$, replace the error matrices by the tensor product $\mathcal{E}_{\tilde{\HH}} = I_T \otimes \mathbf{E}_{\tilde{\HH}}$ and notice that 
$$\left\| \mathcal{E}_{\tilde{\HH}}^{\dagger}\mathcal{E}_{\tilde{\HH}} - \mathcal{E}_{\tilde{\HH}}^{\dagger}\mathcal{E}_{\tilde{\HH}} \right\|_2 = \left\|\mathbf{E}_{\tilde{\HH}_0}^{\dagger} \mathbf{E}_{\tilde{\HH}_0}- \mathbf{E}_{\tilde{\HH}}^{\dagger}  \mathbf{E}_{\tilde{\HH}}\right\|_2.$$

\section{Two Versions of The Minkoswki-Hlawka Theorem}
\subsection{Number Fields}
\label{app:numberFields}

In this appendix we prove the average behavior \eqref{eq:ensembleAverage}. For simplicity, we consider the case when the prime $p$ splits (i.e., $l = 1$, and the codes considered are from $\mathbb{F}_p$). We follow the proof of \cite{Loeliger}, recently adapted to Construction A over quadratic number fields in \cite{Adaptative}.

First, we show that the construction is well-defined for an infinite quantity of primes $p$. The following result is a consequence of from Chebotarev's Density Theorem. Let $\mathcal{P} \subset \mathbb{N}$ be the set of primes and define the Dirichlet density of a set $M \subset P$ (when the limit exists) as 
\begin{equation} \delta(M) = \lim_{s\to1^+}  \frac{\sum_{p\in M}{1/p^s}}{ \sum_{p\in P}{1/p^s}}.
\end{equation}
It follows that if the density is non-zero, than $M$ must be infinite.

\begin{thm}[\cite{Algebraic1},{Cor. 13.6 p. 547}]Let $K/\mathbb{Q}$ be an extension of degree $m$ and let $M$ be the set of primes which completely split in $K$. Then $\delta(M)=1/\left[ \overline{K} : \mathbb{Q} \right]$, where $\overline{K}$ is the Galois closure of $K$. In particular $M$ is infinite.
\end{thm}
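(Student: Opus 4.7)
The plan is to take Chebotarev's density theorem (CDT) as a black box and reduce the statement to the Galois case. Concretely, CDT applied to the Galois extension $\overline{K}/\mathbb{Q}$ says that for any conjugacy class $\mathcal{C}$ of $G=\mathrm{Gal}(\overline{K}/\mathbb{Q})$, the Dirichlet density of primes $p$ (unramified in $\overline{K}$) whose Frobenius class $\mathrm{Frob}_p$ equals $\mathcal{C}$ is $|\mathcal{C}|/|G|$. I would apply this to $\mathcal{C}=\{1\}$, which gives density $1/[\overline{K}:\mathbb{Q}]$, and then transfer the result back from $\overline{K}$ to $K$.

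The reduction I would carry out is the classical equivalence: a prime $p$ unramified in $\overline{K}$ splits completely in $K$ if and only if it splits completely in $\overline{K}$. One direction is immediate, since complete splitting is inherited by subfields. For the converse, I would use the local criterion $p$ splits completely in $K$ $\iff$ $K\otimes_{\mathbb{Q}}\mathbb{Q}_p \cong \mathbb{Q}_p^{m}$, which means every $\mathbb{Q}$-embedding $K\hookrightarrow \overline{\mathbb{Q}_p}$ lands inside $\mathbb{Q}_p$. Because $\overline{K}$ is the compositum of the images of such embeddings, one obtains $\overline{K}\otimes_{\mathbb{Q}}\mathbb{Q}_p \cong \mathbb{Q}_p^{[\overline{K}:\mathbb{Q}]}$, i.e.\ complete splitting in $\overline{K}$. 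Equivalently, in group-theoretic terms, if $H=\mathrm{Gal}(\overline{K}/K)$ and $D$ is a decomposition group above $p$, then $p$ splits completely in $K$ iff $D\subseteq gHg^{-1}$ for all $g\in G$; since $\bigcap_g gHg^{-1}=\{1\}$ (the kernel of the $G$-action on $G/H$, which is trivial because $\overline{K}$ is the Galois closure), this forces $D=\{1\}$, i.e.\ complete splitting in $\overline{K}$.

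Combining these two steps, $M$ agrees, up to the finitely many primes ramified in $\overline{K}$, with the set of primes whose Frobenius in $G$ is trivial. Finitely many primes contribute $0$ to any Dirichlet density, so CDT yields $\delta(M)=1/|G|=1/[\overline{K}:\mathbb{Q}]$. Since this density is strictly positive, $M$ cannot be finite, giving the ``in particular'' claim.

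The genuinely hard content is of course CDT itself, but that is being quoted from the literature; within the scope of this corollary the only non-trivial input is the local/global identification of complete splitting in $K$ with complete splitting in $\overline{K}$, and once that is in hand the density computation is immediate.
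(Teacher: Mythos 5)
Your argument is correct, but there is nothing in the paper to compare it against: the paper does not prove this statement at all, it simply quotes it (as a corollary of Chebotarev's density theorem from the cited reference) and uses it only to conclude that infinitely many primes split completely in $K$, so that the $\mathcal{O}_K$-Construction A is available for arbitrarily large $p$. Your write-up supplies the standard derivation and gets the two essential points right: (i) applying Chebotarev to the Galois closure $\overline{K}/\mathbb{Q}$ with the trivial conjugacy class, and (ii) the reduction identifying, away from the finitely many ramified primes, complete splitting in $K$ with complete splitting in $\overline{K}$. Your group-theoretic version of (ii) is the cleanest: $p$ splits completely in $K$ iff the decomposition group $D$ lies in every conjugate $gHg^{-1}$ of $H=\mathrm{Gal}(\overline{K}/K)$, and $\bigcap_{g\in G} gHg^{-1}$ is trivial precisely because $\overline{K}$ is the Galois closure (the core of $H$ is the largest normal subgroup contained in $H$, and its fixed field is the Galois closure), forcing $D=\{1\}$. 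Together with the observation that finitely many ramified primes do not affect Dirichlet density, this gives $\delta(M)=1/[\overline{K}:\mathbb{Q}]>0$, hence $M$ infinite, exactly as claimed; so your proposal is a complete and correct proof of the quoted result.
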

For the case when $K/\mathbb{Q}$ is Galois, the density of primes which split is $\delta(M) = 1/n$. In particular, the theorem above implies that, for any given $K$, there exists an infinite set of primes for which the $\mathcal{O}_K$ Construction A is possible from codes over $\mathbb{F}_p$.

Let $f: \mathbb{R}^{2nT} \to \mathbb{R}$ be a Riemman integrable function with bounded support. Let $\Lambda = (1/\alpha)\Lambda_K(\mathcal{C})$ be a set in ensemble \eqref{eq:ensemble}, associated to the reduction $(\phi \circ \pi)$, as in Section \ref{sec:consA}. We first prove:

\begin{lemma} With $f, \phi, \pi$ and $\sigma$ as above, 
\begin{equation}\lim_{p \to \infty} \sum_{ {\mathbf{v}=\sigma(\xx): \above 0pt (\phi\circ\pi)(\xx)=\mathbf{0}}} f(\psi(\alpha^{-1} \mathbf{ v })) = 0.
\end{equation}
\label{lem:fixedLemma}
\end{lemma}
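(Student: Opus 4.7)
The plan is to identify the sum as supported on the scaled sublattice $\alpha^{-1}\sigma(\mathfrak{p}^T)$ and then show that, for $p$ large, its shortest nonzero vector escapes the (compact) support of $f$, so the sum is eventually identically zero. I read the sum, as is customary in this style of Minkowski--Hlawka argument, as implicitly excluding the trivial term $\mathbf{x}=\mathbf{0}$.

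First I would observe that the kernel of $\phi\circ\pi:\mathcal{O}_K\to\mathbb{F}_{p^l}$ is exactly the prime $\mathfrak{p}$, so the condition $(\phi\circ\pi)(\mathbf{x})=\mathbf{0}$ for $\mathbf{x}\in\mathcal{O}_K^T$ is equivalent to $\mathbf{x}\in\mathfrak{p}^T$. The key algebraic input is then a norm/AM--GM lower bound on $\|\sigma(x)\|$ for any nonzero $x\in\mathfrak{p}$: since $(x)\subset\mathfrak{p}$ we have $p^l=N(\mathfrak{p})\mid |N_{K/\mathbb{Q}}(x)|$, and $K$ being totally complex of degree $2m$ gives $|N_{K/\mathbb{Q}}(x)|=\prod_{i=1}^m|\sigma_i(x)|^2$. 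By AM--GM,
\begin{equation*}
\|\sigma(x)\|^2=\sum_{i=1}^m|\sigma_i(x)|^2\;\geq\;m\,|N_{K/\mathbb{Q}}(x)|^{1/m}\;\geq\;m\,p^{l/m}.
\end{equation*}
Any $\mathbf{x}\in\mathfrak{p}^T\setminus\{\mathbf{0}\}$ has some nonzero coordinate in $\mathfrak{p}$, so $\|\sigma(\mathbf{x})\|^2\geq m\,p^{l/m}$ as well.

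Inserting the normalization $\alpha^{2mT}=\lambda^{-1}2^{-mT}p^{l(T-k)}\Delta_K^{T/2}$ gives
\begin{equation*}
\|\alpha^{-1}\sigma(\mathbf{x})\|^2\;\geq\;\alpha^{-2}\,m\,p^{l/m}\;=\;c_{K,T,k,\lambda}\,p^{lk/(mT)},
\end{equation*}
where the constant $c_{K,T,k,\lambda}>0$ does not depend on $p$. For any non-degenerate code dimension $k\geq 1$ this bound diverges as $p\to\infty$. Choosing $R$ with $\mathrm{supp}(f)\subset\mathcal{B}_R$ and then $p$ large enough that $c_{K,T,k,\lambda}\,p^{lk/(mT)}>R^2$ forces every nonzero element of $\alpha^{-1}\sigma(\mathfrak{p}^T)$ outside $\mathcal{B}_R$; hence every term of the sum is $f$ evaluated outside its support, and the sum is identically zero for all sufficiently large $p$, which is stronger than the claimed vanishing of the limit.

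The main obstacle is the algebraic step that converts ``$x\in\mathfrak{p}$'' into the Euclidean length lower bound via the ideal norm $N(\mathfrak{p})$ and AM--GM; once this is in hand, the $\alpha$-scaling computation and the compact-support property of $f$ make the conclusion routine. A minor subtlety concerns the residue degree $l$: the case of a prime that splits completely ($l=1$) is the cleanest, but the bound $|N_{K/\mathbb{Q}}(x)|\geq p^l$ is valid for any $\mathfrak{p}$ above $p$, so inert or partially split primes require no additional argument.
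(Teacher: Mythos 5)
Your proof is correct and follows essentially the same route as the paper's: identify the kernel condition with membership in $\mathfrak{p}^T$, use divisibility of the norm by $N(\mathfrak{p})$ together with AM--GM to lower-bound $\|\alpha^{-1}\sigma(\mathbf{x})\|$ by a quantity growing like a positive power of $p$, and conclude from the bounded support of $f$. Your version is in fact slightly more careful than the paper's (explicit exponent $p^{lk/(mT)}$, handling of zero coordinates and of general residue degree $l$), but the argument is the same.
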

\begin{proof}
If $(\phi\circ\pi)(\mathbf{x})=\mathbf{0}$ then $\mathbf{x} \in \mathfrak{p}^T$, and for each component $x_j$, $N_{K/\mathbb{Q}}(x_j) = \Pi_{i=1}^n \sigma_i(x_j) \overline{\sigma}_i(x_j) \in p \mathbb{Z}$, which implies
$$\left\| \alpha^{-1} \mathbf{v} \right\|^2 = \alpha^{-2} \sum_{i=1}^n \sum_{j=1}^T |\sigma_i(x_j)|^2 \geq T \alpha^{-2} p^{1/n}.$$
Since $\alpha^{-2} p^{1/n} \to \infty$ as $p \to \infty$, and from the fact that $f$ has bounded support, $f(\alpha^{-1}\mathbf{v}) = 0$ for $p$ sufficiently large.
\end{proof}

Let $\mathcal{C}(T,k)$ be the set of all $(T,k)$ codes in $\mathbb{F}_{p}^T$. From Loeliger's averaging lemma \cite{Loeliger}, for a function $g:\mathbb{F}_{p}^T\to \mathbb{R}$:

\begin{equation}
\frac{1}{| \mathcal{C}(T,k) |} \sum_{\mathcal{C} \in \mathcal{C}(T,k)} \sum_{c \in \mathcal{C}\backslash \left\{ \mathbf{0} \right\}} g(c)= \frac{p^{k}-1}{p^{T}-1} \sum_{v \in \mathbb{F}_{p}^{T}\backslash \left\{ \mathbf{0} \right\}}g(v).
\end{equation}

\begin{thm}[Minkowski-Hlawka for the Generalized Construction A]
\begin{equation}
\begin{split}
\lim_{p\to\infty} \frac{1}{| \mathcal{C}(T,k) |} \sum_{\mathcal{C} \in \mathcal{C}(T,k)}\sum_{v \in \alpha^{-1} \Lambda_K(\mathcal{C}) \backslash\left\{\mathbf{0}\right\}} f(\psi(\mathbf{v})) = {\lambda^{-1}}\int_{\mathbb{R}^{2nT}} f(\mathbf{v}) d\mathbf{v}.
\end{split}
\label{eq:MHForGener}
\end{equation}
\end{thm}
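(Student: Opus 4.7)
The plan is to decompose the inner sum over $\alpha^{-1}\Lambda_K(\mathcal{C})\setminus\{\mathbf{0}\}$ according to the reduction of each lattice point under $\phi\circ\pi$, apply Loeliger's averaging identity to the ``non-zero reduction'' part, and then recognize a Riemann sum over a lattice whose mesh tends to zero as $p\to\infty$.

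\textbf{Step 1 (Decomposition).} For any code $\mathcal{C}\in\mathcal{C}(T,k)$ we have the coset partition
\[
\alpha^{-1}\Lambda_K(\mathcal{C})\;=\;\bigsqcup_{c\in\mathcal{C}}\alpha^{-1}\Lambda_K(c),\qquad \Lambda_K(c):=\sigma\bigl(\pi^{-1}\phi^{-1}(c)\bigr),
\]
where $\Lambda_K(\mathbf{0})=\sigma(\mathfrak{p}^{T})$. Isolating the zero codeword,
\[
\sum_{v\in\alpha^{-1}\Lambda_K(\mathcal{C})\setminus\{\mathbf{0}\}} f(\psi(v))\;=\;\underbrace{\sum_{v\in\alpha^{-1}\sigma(\mathfrak{p}^{T})\setminus\{\mathbf{0}\}} f(\psi(v))}_{A_p}\;+\;\sum_{c\in\mathcal{C}\setminus\{\mathbf{0}\}}\sum_{v\in\alpha^{-1}\Lambda_K(c)} f(\psi(v)).
\]
Since $A_p$ does not depend on $\mathcal{C}$, it survives averaging unchanged; by Lemma~\ref{lem:fixedLemma}, $A_p\to 0$ as $p\to\infty$.

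\textbf{Step 2 (Loeliger averaging).} Set $g(c):=\sum_{v\in\alpha^{-1}\Lambda_K(c)} f(\psi(v))$, a function on $\mathbb{F}_p^T$. The averaging identity recalled just before the theorem yields
\[
\frac{1}{|\mathcal{C}(T,k)|}\sum_{\mathcal{C}}\sum_{c\in\mathcal{C}\setminus\{\mathbf{0}\}} g(c)\;=\;\frac{p^{k}-1}{p^{T}-1}\sum_{c\in\mathbb{F}_p^{T}\setminus\{\mathbf{0}\}} g(c)\;=\;\frac{p^{k}-1}{p^{T}-1}\!\!\sum_{v\in\alpha^{-1}\sigma(\mathcal{O}_K^{T})\setminus\alpha^{-1}\sigma(\mathfrak{p}^{T})} f(\psi(v)),
\]
because the outer sum reassembles the full set of pre-images of non-zero reductions, i.e.\ all points of $\alpha^{-1}\sigma(\mathcal{O}_K^{T})$ outside the zero-reduction sublattice.

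\textbf{Step 3 (Riemann sum limit).} The set $\alpha^{-1}\sigma(\mathcal{O}_K^{T})$, identified via $\psi$ with a full-rank real lattice in $\mathbb{R}^{2nT}$, has volume $\alpha^{-2nT}(\sqrt{\Delta_K}/2^{n})^{T}$. Because $\alpha^{2nT}=\lambda^{-1}2^{-nT}p^{T-k}(\sqrt{\Delta_K})^{T}$ with $T>k$, the mesh $\alpha^{-1}$ tends to $0$ as $p\to\infty$, so for the Riemann-integrable function $f$ with bounded support
\[
\alpha^{-2nT}(\sqrt{\Delta_K}/2^{n})^{T}\!\!\sum_{v\in\alpha^{-1}\sigma(\mathcal{O}_K^{T})} f(\psi(v))\;\longrightarrow\;\int_{\mathbb{R}^{2nT}} f(\mathbf{v})\,d\mathbf{v}.
\]
The subtracted sum over $\alpha^{-1}\sigma(\mathfrak{p}^{T})$ again vanishes by Lemma~\ref{lem:fixedLemma}, and the prefactor simplifies as
\[
\frac{p^{k}-1}{p^{T}-1}\cdot\frac{\alpha^{2nT}2^{nT}}{(\sqrt{\Delta_K})^{T}}\;=\;\frac{p^{k}-1}{p^{T}-1}\cdot\lambda^{-1}p^{T-k}\;\longrightarrow\;\lambda^{-1}.
\]
Combining Steps 1--3 gives exactly \eqref{eq:MHForGener}.

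\textbf{Main obstacle.} The delicate point is Step 3: one must justify that the counting measure on $\alpha^{-1}\sigma(\mathcal{O}_K^{T})$, normalized by the fundamental volume, converges weakly enough on Riemann-integrable $f$ of bounded support to yield the stated limit. This is where the hypotheses on $f$ (and the splitting hypothesis on $p$, which keeps the lattice $\sigma(\mathcal{O}_K)$ of a fixed shape while only rescaling it via $\alpha$) are essential. Apart from this, the bookkeeping of the normalization constant $\alpha$ and the careful isolation of the $c=\mathbf{0}$ coset (handled by Lemma~\ref{lem:fixedLemma}) are routine.
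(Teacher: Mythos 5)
Your proposal is correct and takes essentially the same approach as the paper: split off the zero-reduction coset (which vanishes by the bounded-support lemma), apply Loeliger's averaging identity to the nonzero codewords, and let the rescaled sum over $\alpha^{-1}\sigma(\mathcal{O}_K^T)$ converge to the integral. The only difference is cosmetic: you write out explicitly the Riemann-sum normalization and the prefactor computation $\frac{p^k-1}{p^T-1}\,\lambda^{-1}p^{T-k}\to\lambda^{-1}$, which the paper delegates to a citation of Loeliger's Theorem 2.
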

\begin{proof}
To simplify the notation, let $| \mathcal{C}(T,k) |=M$, and $\mathcal{C} \backslash\left\{\mathbf{0}\right\} = \mathcal{C}^*$.

\begin{equation}
\begin{split}
&\frac{1}{M}\sum_{\mathcal{C} \in \mathcal{C}(T,k)}\sum_{v \in \Lambda_K(\mathcal{C}) \backslash\left\{\mathbf{0}\right\}} f(\alpha^{-1} v) = \frac{1}{M}\sum_{\mathcal{C} \in \mathcal{C}(T,k)}\sum_{{v=\sigma(x): \above 0pt (\phi\circ\pi)(x)=\mathbf{0}}} f(\alpha^{-1} v) + \frac{1}{M}\sum_{\mathcal{C} \in \mathcal{C}(T,k)}\sum_{{v=\sigma(x): \above 0pt (\phi\circ\pi)(x) \in \mathcal{C}^*}}f(\alpha^{-1} v)\\ & \stackrel{(a)}{=}\frac{1}{M}\sum_{\mathcal{C} \in \mathcal{C}(T,k)}\sum_{{v=\sigma(x): \above 0pt (\phi\circ\pi)(x)=\mathbf{0}}} f(\alpha^{-1} v)  +  \frac{p^{k}-1}{p^{T}-1}\sum_{{v=\sigma(x): \above 0pt (\phi\circ\pi)(x) \in \mathbb{F}_{p^l}^T \backslash\left\{\mathbf{0}\right\}}} f(\alpha^{-1} v). \\
&=\frac{1}{M}\sum_{\mathcal{C} \in \mathcal{C}(T,k)}\sum_{{v=\sigma(x): \above 0pt (\phi\circ\pi)(x)=\mathbf{0}}} f(\alpha^{-1} v)  +  \frac{p^{k}-1}{p^{T}-1}\sum_{{v\in\sigma(\mathcal{O}_K)\backslash\left\{\mathbf{0}\right\}}} f(\alpha^{-1} v). 
\end{split}
\end{equation}
Equation (a) is due to the averaging lemma. In the last equation, as $p \to \infty$, the first sum vanishes due to Lemma \ref{lem:fixedLemma} while the second one tends to the integral in rhs of \eqref{eq:MHForGener} (see, e.g., \cite[Thm. 2]{Loeliger})
\end{proof}

From \cite[Thm. 4]{Loeliger}, and the remark that follows it, we conclude that there is exists a family of AWGN good lattices from the ensemble of $\mathcal{O}_K$ lattices, for any $K$. Scaling the lattices appropriately, we get Eq. \eqref{eq:ensembleAverage}, and evaluating the probability of error as in \cite{Loeliger} we get the expression in parenthesis in Eq. \eqref{eq:averageProb}.

\subsection{Division Algebras}
Consider a lattice $\Lambda_p(\mathcal{C}) = \beta^{-1}(\mathcal{C})$ and the normalization factor 
$$\alpha = (|C|^{-1} p^{t^2m} (2^{-m}\gamma^{m(m-1)/2} \sqrt{\Delta_K^t})^m)^{1/m}.$$
\label{subsec:DivisionAppendix} Suppose that $f:\mathbb{R}^{t^2n}\to \mathbb{R}$ is a function with bounded support (by abuse of notation $f(X)$, when applied to a matrix in $\mathbb{R}^{t^2 \times n}$, is regarded as $f$ of the vectorized version of $X$).

\begin{lemma} Let $\alpha > 0$:
\begin{equation}\lim_{\alpha p \to \infty} \sum_{ {\mathbf{a} \in \Lambda_p(\mathcal{C}) \nozero \above 0pt \beta(\mathbf{a}) = 0}} f(\alpha \psi(\mathbf{a})) = 0.
\end{equation}
\label{lem:fixedLemma}
\end{lemma}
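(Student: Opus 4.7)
The strategy follows the number-field analogue (the first occurrence of Lemma \ref{lem:fixedLemma}): unpack $\beta(\mathbf{a})=0$ into a containment in a suitable ideal, deduce from the arithmetic of $\mathfrak{p}$ a uniform lower bound on $\|\psi(\mathbf{a})\|$ growing with $p$, and invoke the bounded support of $f$ to conclude that the sum vanishes once $\alpha p$ is large enough.

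The first step unpacks $\beta(\mathbf{a})=0$: since $\beta$ is the componentwise reduction modulo $\mathfrak{p}$ applied to the matrix representation $\mathbf{X}_{\mathbf{a}}$, its vanishing forces every entry of each $\mathbf{X}_{a_j}$ into $\mathfrak{p}$. Inspecting the first column of $\mathbf{X}_{a_j}$, each coefficient $x_k^{(j)}$ in the expansion $a_j=\sum_k x_k^{(j)} e^k$ therefore belongs to $\mathfrak{p}$, i.e., $\mathbf{a}\in(\mathfrak{p}\Lambda)^T$.

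The second step is an arithmetic norm bound. Since the residue field $\mathcal{O}_K/\mathfrak{p}\cong\mathbb{F}_p$ has cardinality $p$, every nonzero $x\in\mathfrak{p}$ satisfies $|N_{K/\mathbb{Q}(i)}(x)|\ge\sqrt{p}$; applying AM--GM to the squared magnitudes of the $m$ relative Galois conjugates then gives
\begin{equation*}
\|\sigma(x)\|^2 \;=\; \sum_{i=0}^{m-1}|\beta^i(x)|^2 \;\ge\; m\left(\prod_{i=0}^{m-1}|\beta^i(x)|^2\right)^{1/m} \;\ge\; m\, p^{1/m}.
\end{equation*}
Because the entire Galois orbit of each $x_k^{(j)}$ appears (up to a factor $\gamma^\epsilon$ of fixed positive magnitude) among the entries of $\mathbf{X}_{a_j}$, this propagates to $\|\psi(\mathbf{a})\|^2 \ge c\, p^{1/m}$ for every nonzero $\mathbf{a}\in\ker\beta$, with $c>0$ depending only on $K$, $\gamma$, and $m$.

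The third step is the support argument: if $f$ is supported in a ball of radius $R$, then $f(\alpha\,\psi(\mathbf{a}))=0$ whenever $\alpha^2 c\, p^{1/m} > R^2$, and this inequality holds eventually in the stated regime, so every term of the (finite) sum vanishes. The main technical obstacle I anticipate is verifying that the normalization $\alpha$ defined just before the statement is indeed compatible with $\alpha^2 p^{1/m}\to\infty$ whenever $\alpha p\to\infty$; this is a direct computation entirely parallel to the bookkeeping in the number-field proof, and should present no essential difficulty.
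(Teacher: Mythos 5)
There is a genuine gap, and it sits exactly where you waved it off as ``bookkeeping.'' Your argument only extracts from $\beta(\mathbf{a})=0$ that each coefficient $x_k^{(j)}$ lies in $\mathfrak{p}$, which via $|N_{K/\mathbb{Q}(i)}(x)|\ge \sqrt{p}$ and AM--GM gives $\|\psi(\mathbf{a})\|^2 \ge c\,p^{1/m}$. That lower bound is too weak for the limit as stated: the lemma allows $\alpha$ to vary with $p$ subject only to $\alpha p \to \infty$, and for instance $\alpha = p^{-3/4}$ satisfies $\alpha p \to \infty$ while $\alpha^2 p^{1/m} \to 0$, so your support argument never activates and the sum need not vanish under your estimate. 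So the deferred verification is not a routine computation; with the exponent $p^{1/m}$ it is false in the stated regime.

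The paper's proof uses the kernel condition at full strength, which is what you are missing. Since the reduction is applied componentwise to the whole matrix $\mathbf{X}_{a_j}$, \emph{all} entries --- i.e.\ every Galois conjugate of every coefficient, up to the fixed factor $\gamma$ --- reduce to zero, not just the first column. Consequently every entry of $\mathbf{X}_{a_j}$ is divisible by the prime, so $\det \mathbf{X}_{a_j}$ (the reduced norm, a nonzero element because $\mathcal{A}$ is a division algebra and $a_j\neq 0$) is divisible by its $t$-th power, giving $|\det \mathbf{X}_{a_j}|\ge p^{t}$; the trace--determinant inequality (AM--GM on the singular values) then yields $\|\mathbf{X}_{a_j}\|_F^2 = \mathrm{tr}(\mathbf{X}_{a_j}^{\dagger}\mathbf{X}_{a_j}) \ge t\,|\det \mathbf{X}_{a_j}|^{2/t} \ge t\,p^{2}$. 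With this bound $\alpha\|\psi(\mathbf{a})\| \ge \sqrt{t}\,\alpha p \to \infty$, which is precisely calibrated to the regime $\alpha p\to\infty$, and bounded support of $f$ finishes the proof. Your observation that the Galois orbit of each coefficient appears among the matrix entries is the right ingredient, but you use it only to propagate a single-coefficient norm bound rather than to strengthen it (membership in all conjugate primes, equivalently divisibility of the determinant), and that is where the exponent $p^{1/m}$ instead of $p^{2}$ comes from. To repair your write-up, replace the single-coefficient norm estimate by the determinant argument (or at least intersect the conjugate ideals) before invoking the support of $f$.
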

\begin{proof}
%
Let $\mathbf{a}=(a_1,\ldots,a_n)$ with $a_i = x_{0i}+e x_1\ldots + e^{n-1} x_{n-1,i}$. If $\beta(\mathbf{a})=0$, then $x_{ij} \in p(\mathcal{O}_k)$, therefore $\det \psi(a_i) = p^{t} k$. Using a "trace-det" inequality, the norm of the vectorized vector $\mathbf{a}$ satisfies $\mbox{tr}(\psi(a_i)^t\psi(a_i)) \geq t (\det \psi(a_i)^2)^{1/t} \geq t p^2$. Hence, as $\alpha p$ grows, since $f$ is bounded, the limit follows.
\end{proof}

Let $\mathcal{C}_T$ be a balanced set of codes in $\mathbb{F}_p^{n\times n}$. We have the following

\begin{thm}[Minkowski-Hlawka for the Generalized Construction A]
\begin{equation}
\begin{split}
\frac{1}{| \mathcal{C}_T |} \sum_{\mathcal{C} \in \mathcal{C}_t}\sum_{v \in \alpha \Lambda \backslash\left\{\mathbf{0}\right\}} f(v)\lessapprox {\det({\alpha \Lambda})^{-\frac{1}{2}}} \int_{\mathbb{R}^{t^2n}} f(\mathbf{v}) d\mathbf{v}.
\end{split}
\end{equation}
\end{thm}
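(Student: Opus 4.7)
My plan is to imitate the proof of the number-field Minkowski--Hlawka statement in Appendix \ref{app:numberFields}, with the reduction map $\phi \circ \pi$ replaced by the algebra reduction $\beta: \Lambda \to \mathbb{F}_p^{m\times m}$ extended componentwise to $(\mathbb{F}_p^{m\times m})^T$. The first step is to split the inner sum according to whether the image of $v$ under $\beta$ lies in the zero coset or in a proper coset indexed by $\mathcal{C}\setminus\{\mathbf{0}\}$:
\begin{equation*}
\sum_{v \in \alpha\Lambda \setminus\{\mathbf{0}\}} f(v) = \sum_{\substack{v \in \alpha\Lambda\setminus\{\mathbf{0}\} \\ \beta(\alpha^{-1}v)=\mathbf{0}}} f(v) + \sum_{\substack{v \in \alpha\Lambda \\ \beta(\alpha^{-1}v) \in \mathcal{C}\setminus\{\mathbf{0}\}}} f(v).
\end{equation*}
The first sum does not depend on $\mathcal{C}$ at all, and Lemma \ref{lem:fixedLemma} shows it tends to zero as $\alpha p \to \infty$; hence averaging it over $\mathcal{C}_T$ is harmless in the limit.

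Next I would handle the second sum by applying the balanced-ensemble averaging identity used by Loeliger \cite{Loeliger}: for any function $g:(\mathbb{F}_p^{m\times m})^T \to \mathbb{R}$,
\begin{equation*}
\frac{1}{|\mathcal{C}_T|}\sum_{\mathcal{C}\in\mathcal{C}_T}\sum_{c \in \mathcal{C}\setminus\{\mathbf{0}\}} g(c) \;=\; \frac{|\mathcal{C}|-1}{p^{m^2 T}-1}\sum_{u \in (\mathbb{F}_p^{m\times m})^T\setminus\{\mathbf{0}\}} g(u).
\end{equation*}
Applied to $g(u)=\sum_{v \in \alpha\Lambda,\ \beta(\alpha^{-1}v)=u} f(v)$, this converts the code-average into a deterministic sum over \emph{all} nonzero cosets of the natural order modulo $\mathfrak{p}$, i.e.\ a sum over every nonzero point of the sublattice $\alpha\sigma(\mathcal{O}_K\oplus e\mathcal{O}_K\oplus\cdots\oplus e^{m-1}\mathcal{O}_K)^T$ whose kernel part has already been discarded.

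The final step is a Riemann-sum argument: using the volume formula $V(\Lambda_p(\mathcal{C})) = |\mathcal{C}|^{-1} p^{t^2 m}(2^{-m}\gamma^{m(m-1)/2}\sqrt{\Delta_K^t})^m$ for the matrix-form lattice, the prefactor $(|\mathcal{C}|-1)/(p^{m^2 T}-1)$ equals, up to lower order terms in $p$, the reciprocal covolume of $\alpha\Lambda$ after the normalization by $\alpha$. As $p\to\infty$ the natural-order sublattice becomes finer and finer in $\mathbb{R}^{t^2 n}$, so the averaged sum converges to $V(\alpha\Lambda)^{-1}\int f(v)\,dv$, which is precisely the stated right-hand side (with $\det(\alpha\Lambda)^{-1/2}$ being the complex-to-real covolume). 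The main obstacle is the bookkeeping in this last step: one must verify carefully that (i) the index computation relating $|\mathcal{C}|$, the discriminant of $K$, the non-norm element $\gamma$, and the degree $m$ matches the covolume so that the asymptotic factor is exactly $V(\alpha\Lambda)^{-1}$; and (ii) the boundedness and compact support of $f$ give genuine Riemann convergence even though the underlying sublattice is a non-trivial $\mathbb{Z}$-module inside the non-commutative division algebra. Once this accounting is done, the conclusion follows exactly as in the number-field case via \cite[Thm.~2]{Loeliger}.
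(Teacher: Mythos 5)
Your plan follows the paper's proof essentially step for step: the same split of the sum into the kernel part (killed by Lemma \ref{lem:fixedLemma} as $\alpha p \to \infty$) and the nonzero-coset part, the same balanced-ensemble averaging identity to replace the code average by a sum over all of $(\mathbb{F}_p^{m\times m})^T\setminus\{\mathbf{0}\}$ (hence over the scaled natural order), and the same Loeliger Riemann-sum limit matching the prefactor to the reciprocal covolume. The volume bookkeeping you flag is exactly what the paper also defers to the standard argument of \cite{Loeliger}, so the proposal is correct and not materially different from the paper's proof.
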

\begin{proof}
\begin{equation}
\begin{split}
\frac{1}{| \mathcal{C}_T |}\sum_{\mathcal{C} \in \mathcal{C}_t}&\sum_{v \in \Lambda \backslash\left\{\mathbf{0}\right\}} f(\alpha v) = \frac{1}{| \mathcal{C}_T |}\sum_{\mathcal{C} \in \mathcal{C}_t}\sum_{ {\mathbf{a} \neq 0 \above 0pt \beta(\mathbf{a}) =0}} f(\alpha \psi(\mathbf{a}))+\frac{1}{| \mathcal{C}_T |}\sum_{\mathcal{C} \in \mathcal{C}_t} \sum_{ {\mathbf{a} \neq 0 \above 0pt \beta(\mathbf{a}) \in \mathcal{C} \neq 0}} f(\alpha \psi(\mathbf{a}))\\
&\stackrel{(a)}{\approx} \frac{1}{| \mathcal{C}_T |}\sum_{\mathcal{C} \in \mathcal{C}_t}\sum_{ {\mathbf{a} \neq 0 \above 0pt \beta(\mathbf{a}) \in \mathcal{C} \cap (M_t(\mathbb{F}_p)^n)^*}} f(\alpha \psi(\mathbf{a})) \stackrel{(b)}{\approx} p^{1-t^2n} \sum_{{\mathbf{a} \neq 0 \above 0pt \beta(\mathbf{a}) \in (M_t(\mathbb{F}_p)^n)^*}} f(\alpha \psi(\mathbf{a})) \\
&\leq p^{1-t^2n} \sum_{{\mathbf{a} \neq 0 \above 0pt \mathbf{a} \in \Lambda}} f(\alpha \psi(\mathbf{a})) 
\end{split}
\end{equation}
Equations (a) is a consequence of Lemma \ref{lem:fixedLemma} and (b) follows from from the averaging lemma. From now, the same arguments as \cite{Loeliger} prove that the integral approaches the sum in the rhs of (b).
\end{proof}
\section{Proof of Theorem \ref{thm:fading-good}}
\label{app:C}
Consider the received vector $\mathbf{y} = \HH \mathbf{x} + \mathbf{z}.$ Let $\Delta = |h_1\ldots h_n|^{1/n}$ and let $\mathcal{S}$ be a ball of radius $(\Delta/e^{\mu-\delta}) \sqrt{n(\sigma_w^2+\varepsilon})$, for $\delta$ and $\varepsilon$ sufficiently small. Consider a decoder that assigns $\hat{\mathbf{x}} = \tilde{\mathbf{x}}$ if $\mathbf{y}$ can be written in a unique way as $\mathbf{y} =\HH\tilde{\mathbf{x}} + \mathbf{z}$, with $\tilde{\mathbf{x}} \in {\Lambda}, \mathbf{z} \in \mathcal{S}$, and ``error'' otherwise (in Loeliger's terminology \cite{Loeliger} an ambiguity decoder). For a set $\mathcal{M}\subset \mathbb{R}^n$, let 
\begin{equation*} N_{\mathcal{M}}(\Lambda) \triangleq \left| (\Lambda\nozero) \cap \mathcal{M}\right|.
\end{equation*}We upper bound the probability of error as
\begin{equation*}P_e(\Lambda | \HH) \leq P(\mathbf{z} \notin \mathcal{S} | \HH) + P( N_{\mathbf{z}-\mathbf{S}} (\mathbf{H} \Lambda) \geq 1| \mathbf{z} \in \mathcal{S}, \HH),
\end{equation*}
and therefore 
\begin{equation}P_e(\Lambda) \leq E_{\HH}\left[P(\mathbf{z} \notin \mathcal{S})\right]+ E_{\HH}\left[P( N_{\mathbf{z}-\mathbf{S}} (\mathbf{H} \Lambda) \geq 1| \mathbf{z} \in \mathcal{S})\right].
\label{eq:Bound1}
\end{equation}
The first term does not depend on the chosen lattice and vanishes as $n \to \infty$. It can be bounded as
\begin{equation}\begin{split} P(\mathbf{z} \notin \mathcal{S}) &\leq P(\mathbf{z} \notin \mathcal{S} | \Delta > e^{\mu - \delta}) + P(\Delta < e^{\mu -\delta}) \\  &\leq P(\mathbf{z} \notin \mathcal{B}_{\sqrt{n(\sigma^2+\varepsilon)}}) + P(\Delta < e^{\mu -\delta}).
\end{split}
\end{equation}
The second term in the right-hand side of \eqref{eq:Bound1} can be upper bounded by
\begin{equation}
\begin{split}
& \int_{\mathbb{R}^n} \int_{\mathbb{R}^n} N_{\mathbf{z}-\mathbf{S}}(\mathbf{H} {\Lambda}) f_{\mathbf{z}�| \mathcal{S}}(z)f_{h}(\mathbf{h}) d\mathbf{h} d\mathbf{z}  .\\
\end{split}
\end{equation}
Let $\mathbb{L} = \mathbb{L}_{K,n,k,p}$ be the ensemble of Generalized Construction A lattices and consider the decomposition $\HH = \Delta \tilde{\EE}_{\tilde{\HH}} \mathbf{U}_{\tilde{\HH}}$, as in Definition \ref{def:compacify}. Let $f_{h}(\mathbf{h})$ be the pdf of the joint distribution of $(h_1,\ldots,h_n)$.
Taking the average over the ensemble (notice that at this point, for finite $p$, the ensemble is finite and we can commute integrals and sums):

\begin{equation*}
\begin{split}
& E_{\mathbb{L}}\left[\int_{\mathbb{R}^n} \int_{\mathbb{R}^n} N_{\mathbf{z}-\mathbf{S}}(\HH {\Lambda}) f_{\mathbf{z}�| \mathcal{S}}(z)f_{h}(\mathbf{h}) d\mathbf{h} d\mathbf{z}\right] =  \\
&\int_{\mathbb{R}^n} \int_{\mathbb{R}^n}  E_{\mathbb{L}}\left[N_{\mathbf{z}-\mathbf{S}}(\HH {\Lambda})\right] f_{\mathbf{z}�| \mathcal{S}}(z)f_{h}(\mathbf{h}) d\mathbf{h} d\mathbf{z} \stackrel{(a)}{=}\\
&\int_{\mathbb{R}^n} \int_{\mathbb{R}^n}  E_{\mathbb{L}}\left[N_{\mathbf{z}-\mathbf{S}}(\Delta \tilde{\EE}_{\tilde{\HH}} {\Lambda})\right] f_{\mathbf{z}�| \mathcal{S}}(z)f_{h}(\mathbf{h}) d\mathbf{h} d\mathbf{z} = \\
&E_{\mathbb{L}}\left[\sum_{\mathbf{x} \in \Lambda\nozero} \int_{\mathbb{R}^n} \int_{\mathbb{R}^n} \mathbbm{1}_{\Delta^{-1}\tilde{\EE}_{\tilde{\HH}}^{-1}(\mathbf{z}-\mathbf{S})}(\mathbf{x})f_{\mathbf{z}�| \mathcal{S}}(z)f_{h}(\mathbf{h}) d\mathbf{h} d\mathbf{z}\right] ,
\end{split}
\end{equation*}
where (a) is due to the fact that multiplication by unit is a bijection of the ensemble. Now take 
\begin{equation}g(\mathbf{x}) =\int_{\mathbb{R}^n} \int_{\mathbb{R}^n} \mathbbm{1}_{\Delta^{-1} \tilde{\EE}_{\tilde{\HH}}^{-1}(\mathbf{z}-\mathbf{S})}
(\mathbf{x})f_{\mathbf{z}�| \mathcal{S}}(z)f_{h}(\mathbf{h}) d\mathbf{h} d\mathbf{z}.
\label{eq:gx}
\end{equation}
We argue that $g(\mathbf{x})$ has bounded support. In effect, if $\mathbf{x}$ is such that $\left\| \mathbf{x} \right\| > 2 (C_n/e^{\mu-\delta})\sqrt{n(\sigma_w^2+\varepsilon)}$, then $\left\| \Delta \tilde{\EE}_{\tilde{\HH}} \mathbf{x} \right\| > 2 (\Delta/e^{\mu-\delta}) \sqrt{n(\sigma_w^2+\epsilon)}$, which implies that $\mathbbm{1}_{\Delta^{-1}\tilde{\EE}_{\tilde{\HH}}^{-1}(z-S)}(\mathbf{x})=0$ and $g(\mathbf{x}) = 0$, therefore

\begin{equation}
\lim_{p\to \infty} E_{\mathbb{L}}\left[\sum_{\mathbf{x} \in \Lambda \nozero} g(\mathbf{x}) \right] =  {{ e^{-n(\mu - \delta)}}\beta \,\, \vol\,{\mathcal{B}_{\sqrt{n(\sigma_w^2+\varepsilon)}}}}.
\end{equation}
Therefore, there exists a sequence of lattices in the random ensemble such that $P_e(\Lambda)$ decays to zero, as long as the VNR is greater $\gamma^*$ (Eq. \eqref{eq:optimalVNR}).

\bibliographystyle{unsrt}
\bibliography{block_fading}

\end{document}